\newtheorem{theorem}{Theorem}[section]
\newtheorem{proposition}[theorem]{Proposition}
\newtheorem{corollary}[theorem]{Corollary}
\newtheorem{exercise}{Exercise}
\theoremstyle{definition}
\newtheorem{definition}[theorem]{Definition}
\theoremstyle{remark}
\newtheorem{remark}[theorem]{Remark}
\newtheorem{note}[theorem]{Note}
\newtheorem{example}[theorem]{Example}
\newtheorem{rhp}{RH Problem}
\newcommand{\rhref}[1]{RH Problem~\ref{#1}}
\DeclareMathOperator{\res}{Res}
\DeclareMathOperator{\rank}{rank}
\DeclareMathOperator{\End}{End}
\newcommand{\D}{\ensuremath{\,\mathrm{d}}}			
\newcommand{\defeq}{\vcentcolon=}
\newcommand{\eqdef}{=\vcentcolon}
\newcommand{\tr}{\mbox{tr}}
\newcommand{\diag}{\mbox{diag}}
\newcommand{\E}{\mathrm{e}}
\renewcommand*\env@matrix[1][\arraystretch]{%
  \edef\arraystretch{#1}%
  \hskip -\arraycolsep
  \let\@ifnextchar\new@ifnextchar
  \array{*\c@MaxMatrixCols c}}
\let\originalleft\left
\let\originalright\right
\renewcommand{\left}{\mathopen{}\mathclose\bgroup\originalleft}
\renewcommand{\right}{\aftergroup\egroup\originalright}
\title{Discrete integrable systems, Darboux transformations, and Yang-Baxter maps}
\author{Deniz Bilman}
\author{Sotiris Konstantinou-Rizos}
\address{
Department of Mathematics, University of Michigan, Ann Arbor, MI 48109, USA.
}
\email{bilman@umich.edu}
\address{
Institute of Mathematical Physics \& Seismodynamics, Kievskaya 33, Grozny 364068, Russia.
}
\email{skonstantin@chesu.ru, skonstantin84@gmail.com}
\date{\today}
\thanks{
\noindent
2010 Mathematics Subject Classification: 37K10, 37K15, 37K35, 37K60, 35Q51\\
{\it Keywords}: Integrable systems, Toda lattice, inverse scattering transform, Yang-Baxter maps, Darboux transformation, B\"acklund transformation}
\begin{document}

\begin{abstract}
These lecture notes are devoted to the integrability of discrete systems and their relation to the theory of Yang-Baxter (YB) maps. Lax pairs play a significant role in the integrability of discrete systems. We introduce the notion of Lax pair by considering the well-celebrated doubly-infinite Toda lattice. In particular, we present solution of the Cauchy initial value problem via the method of the inverse scattering transform, provide a review of scattering theory of Jacobi matrices, and give the Riemann-Hilbert formulation of the inverse scattering transform. On the other hand, the Lax-Darboux scheme constitutes an important tool in the theory of integrable systems, as it relates several concepts of integrability. We explain the role of Darboux and B\"acklund trasformations in the theory of integrable systems, and we show how they can be used to construct discrete integrable systems via the Lax-Darboux scheme. Moreover, we give an introduction to the theory of Yang-Baxter maps and we show its relation to discrete integrable systems. Finally, we demonstrate the construction of Yang-Baxter maps via Darboux transformations, using the nonlinear Schr\"odinger equation as illustrative example.
\end{abstract}

\maketitle
\section{Introduction}
\label{sec:1}
Discrete systems, namely systems with their independent variables taking discrete values, are of particular interest and have many applications in several sciences as physics, biology, financial mathematics, as well as several other branches of mathematics, since they are essential in numerical analysis. Initially, they were appearing as discretizations of continuous equations, but now discrete integrable systems, and in particular those defined on a two-dimensional lattice, are appreciated in their own right from a theoretical perspective.

As in the continuous case, the definition of integrability for discrete systems is itself highly nontrivial; there are several opinions on what ``integrable'' should mean, which makes the definition of integrability elusive, rather than tangible. In fact, a comprehensive definition of integrability is not yet available. As working definitions we often use the existence of a Lax pair, the solvability of the system by the inverse scattering transform method, the existence of infinitely many symmetries or conservation laws, or the existence of a sufficient number of first integrals which are in involution (Liouville integrability). For infinite dimensional systems, the existence of a Lax pair provides a change of variables (through ``scattering data'' associated to the Lax operator) which linearizes the flow. Thus, existence of a Lax pair can be taken as a practical definition for integrability for systems with infinite dimensional phase space, e.g.\ PDEs. For more information on \textit{what is integrability} one can consult \cite{Zakharov1991} and the references therein.

Historically, the study of discrete systems and their integrability earned its interest in late seventies; Hirota studied particular discrete systems in 1977, in a series of papers \cite{HirotaKdV, HirotaToda, HirotaSG, HirotaTBT} where he derived discrete analogues of many already famous PDEs. In the early eighties, semi-discrete and discrete systems started appearing in field-theoretical models in the work of Jimbo and Miwa; they also provided a method of generating discrete soliton equations \cite{Jimbo-Miwa,Jimbo-Miwa-II,Jimbo-Miwa-III,Jimbo-Miwa-IV,Jimbo-Miwa-V}. Shortly after, Ablowitz and Taha in a series of papers \cite{Ablowitz-I, Ablowitz-II, Ablowitz-III} are using numerical methods in order to find solutions for known integrable PDEs, using as basis of their method some partial difference equations, which are integrable in their own right. Moreover, Capel, Nijhoff, Quispel and collaborators provided some of the first systematic tools for studying discrete integrable systems and, in particular, for the direct construction of integrable lattice equations (we indicatively refer to \cite{Capel-Nijhoff,Capel-Nijhoff-II}); that was a starting point for new systems of discrete equations to appear in the literature.

In 1991 Grammaticos, Papageorgiou and Ramani proposed the first discrete integrability test, known as \textit{singularity confinement}\index{singularity confinement} \cite{GRP}, which is similar to that of the Painlev\'e property for continuous integrability. However, as mentioned in \cite{Gramm-Schw-Tam}, it is not sufficient criterion for predicting integrability, as it does not furnish any information about the rate of growth of the solutions of the discrete integrable system.

As in the continuous case, the usual integrability criterion being used for discrete systems is the existence of a Lax pair. The existence of such pair is the key point to the integrability of a nonlinear system under the inverse scattering transform. On the other hand, Darboux transformations (DTs) associated to Lax operators constitute very important tools in the theory of integrable systems, since they link continuous integrable systems to discrete integrable ones. Moreover, the study of Darboux transformations gives rise to several other notions of integrability, such as B\"acklund transformations (BTs), conservation laws, symmetries etc. Additionally, the associated Darboux matrices can be used to construct Yang-Baxter maps \cite{Sokor-Sasha, SMJP}. Yet another significant integrability criterion for difference equations and systems of difference equations is the so-called 3D-\textit{consistency}\index{3D-consistency} and, by extension, the \textit{multidimensional consistency}\index{multidimensional consistency}, which were proposed independently by Nijhoff in 2001 \cite{Frank4} and Bobenko and Suris in 2002 \cite{Bobenko-Suris}. As we shall see later on, there is a strict relation between the 3D-consistency and the Yang-Baxter equation.

This chapter splits into two logical parts: In the first part, namely Section \ref{sec:2}, we explain the basic steps of the inverse scattering transform method for solution of the Cauchy initial value problem for an integrable equation, using as illustrative example the famous doubly-infinite Toda lattice: a discrete space - continuous time system. More specifically, we begin with the classical problem of a doubly-infinite one-dimensional chain of interacting particles, which becomes completely integrable (the Toda lattice) if the interaction potential is the Toda potential. As is well-known, the Toda lattice equations can be recast as an isospectral deformation on Jacobi matrices and this gives rise to the existence of a Lax pair. Thus, we move on to cover scattering theory for doubly-infinite Jacobi matrices, introduce the direct scattering transform and scattering data associated with a Jacobi matrix. Then, we cover the time evolution of the scattering data under the dynamics induced by the Toda lattice equations, and discuss the Riemann-Hilbert formulation of the inverse scattering transform. We finally give a brief description and literature survey of analogous techniques applied to the finite Toda lattice and periodic Toda lattice systems.

In the second part, namely Section \ref{sec:3} and Section \ref{sec:4}, we introduce the Darboux-Lax (or Lax-Darboux) scheme \cite{Berkeley, SPS, Sasha, Sasha-miky-JP} and we show the relation between Darboux transformations and discrete integrable systems. Moreover, we show the relation between the 3D-consistency and the Yang-Baxter equation, and we show how, using Darboux transformations, we can construct Yang-Baxter maps which can be restricted to completely integrable ones on invariant leaves. More specifically, in Section \ref{sec:3} we give a brief introduction to Darboux and B\"acklund transformations and their role in the theory of integrable systems. Then, we explain the basic points of the Darboux-Lax scheme, and we demonstrate them using the nonlinear Schr\"odinger (NLS) equation as an illustrative example. In particular, studying the Darboux transformations associated to the NLS equation we first derive a discrete integrable system, for which we present the initial value problem on the staircase. Then, using certain first integrals we reduce this discrete system to an Adler-Yamilov type of system \cite{Adler-Yamilov}. Moreover, in the same way we construct the discrete Toda equation \cite{Suris}. In Section \ref{sec:4} we give an introduction to equations on quad graphs \& the 3D-consistency criterion, and we present some recent classification results. Then, we give and introduction to the theory of Yang-Baxter maps, namely set-theoretical solutions of the Yang-Baxter equation, and we explain their relation with 3D-consistent equations. We focus on those Yang-Baxter maps which possess Lax-representation and we show how one can construct them using Darboux transformations. As an illustrative example we use the Darboux transformation of the NLS equation, which was presented in Section \ref{sec:3}, and we use it to construct a six-dimensional Yang-Baxter map \cite{Sokor-Sasha}. The former can be restricted to the completely integrable Adler-Yamilov map on symplectic leaves \cite{Sokor-Sasha}.

\section{The Toda Lattice}
\label{sec:2}

\subsection{One-dimensional chain of particles}
\label{sec:1dchain}
\begin{figure}[htp!]
	\centering
\begin{tikzpicture}[scale=1]
\filldraw[black] (-4,0) circle (2.2pt);
\filldraw[black] (-3,0) circle (2.2pt);
\filldraw[black] (-2,0) circle (2.2pt);
\filldraw[black] (-1,0) circle (2.2pt);
\filldraw[black] (0,0) circle (2.2pt);
\filldraw[black] (1,0) circle (2.2pt);
\filldraw[black] (2,0) circle (2.2pt);
\filldraw[black] (3,0) circle (2.2pt);
\filldraw[black] (4,0) circle (2.2pt);
\draw[thick,black,decoration={aspect=0.4, segment length=1mm, amplitude=1mm,coil},decorate,opacity=1,path fading=west] (-4.8,0)--(-4.2,0);
\draw[thick,black,opacity=1](-4.2,0)--(-4.08,0);

\draw[thick,black](-3.92,0)--(-3.8,0);
\draw[thick,black,decoration={aspect=0.4, segment length=1mm, amplitude=1mm,coil},decorate] (-3.8,0)--(-3.2,0);
\draw[thick,black](-3.2,0)--(-3.08,0);

\draw[thick,black](-2.92,0)--(-2.8,0);
\draw[thick,black,decoration={aspect=0.4, segment length=1mm, amplitude=1mm,coil},decorate] (-2.8,0)--(-2.2,0);
\draw[thick,black](-2.2,0)--(-2.08,0);

\draw[thick,black](-1.92,0)--(-1.8,0);
\draw[thick,black,decoration={aspect=0.4, segment length=1mm, amplitude=1mm,coil},decorate] (-1.8,0)--(-1.2,0);
\draw[thick,black](-1.2,0)--(-1.08,0);

\draw[thick,black](-0.92,0)--(-0.8,0);
\draw[thick,black,decoration={aspect=0.4, segment length=1mm, amplitude=1mm,coil},decorate] (-0.8,0)--(-0.2,0);
\draw[thick,black](-0.2,0)--(-0.08,0);

\draw[thick,black](0.08,0)--(0.2,0);
\draw[thick,black,decoration={aspect=0.4, segment length=1mm, amplitude=1mm,coil},decorate] (0.2,0)--(0.8,0);
\draw[thick,black](0.8,0)--(0.92,0);

\draw[thick,black](1.08,0)--(1.2,0);
\draw[thick,black,decoration={aspect=0.4, segment length=1mm, amplitude=1mm,coil},decorate] (1.2,0)--(1.8,0);
\draw[thick,black](1.8,0)--(1.92,0);

\draw[thick,black](2.08,0)--(2.2,0);
\draw[thick,black,decoration={aspect=0.4, segment length=1mm, amplitude=1mm,coil},decorate] (2.2,0)--(2.8,0);
\draw[thick,black](2.8,0)--(2.92,0);

\draw[thick,black](3.08,0)--(3.2,0);
\draw[thick,black,decoration={aspect=0.4, segment length=1mm, amplitude=1mm,coil},decorate,] (3.2,0)--(3.8,0);
\draw[thick,black](3.8,0)--(3.92,0);

\draw[thick,black,opacity=1](4.08,0)--(4.2,0);
\draw[thick,black,decoration={aspect=0.4, segment length=1mm, amplitude=1mm,coil},decorate,opacity=1,path fading=east] (4.2,0)--(4.8,0);

\node (x) at (0,-0.8){\small \hspace{2cm}$q_{n}(t)$: displacement of the $n^{\text{th}}$ particle};
\node (x2) at (0, -1.2){\small \hspace{2.6cm}from its equilibrium position};
\node (n) at (0,0.48){\small$n$};
\node (n1) at (1,0.5){\small $n+1$};
\node (n2) at (2,0.5){$\cdots$};
\node (n3) at (-1,0.5){\small$n-1$};
\node (n4) at (-2, 0.5){$\cdots$};
\draw (0,-0.2)--(0,-0.6);
\draw[->](0.1,-0.6)--(0.5,-0.6);
\draw[->](-0.1,-0.6)--(-0.5,-0.6);
\draw[black,thin,<->] plot[smooth] coordinates {
    (-1.6,-0.2)
     (-2,-0.45)
     (-2.5,-0.7)
     (-3,-0.95)
     (-3.5,-1.1)
     (-4,-0.6)
   };
\node[black] (p) at (-4,-0.5){\small nonlinear springs};
\end{tikzpicture}
\caption{One-dimensional chain of particles with nearest neighbor interactions.}
\label{F:lattice}
\end{figure}
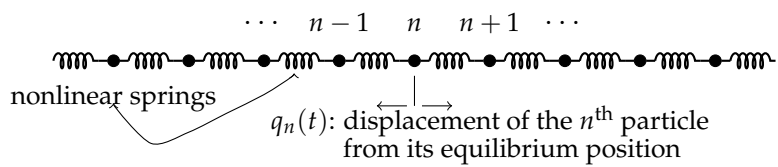
Consider the classical problem of one-dimensional chain of particles on a line with nearest neighbor interactions as depicted in Figure~\ref{F:lattice}. Assume that each particle has unit mass, and that there are no impurities, i.e.\ the potential energies of the springs between the particles are identical. In this treatment, our focus is going to be the doubly-infinite lattices. Therefore, unless otherwise noted, we assume that there are infinitely many particles on a line. We let $V\colon \mathbb{R}\to \mathbb{R}$ denote the uniform interaction potential between the neighboring particles. With the aforementioned assumptions, the equations of motion that govern this system are given by
\begin{equation}
\frac{\D^2 q_n}{\D t^2}=V'(q_{n+1}-q_n) - V'(q_{n}-q_{n-1}),\quad n\in\mathbb{Z},
\label{eq:eom}
\end{equation}
where $q_n$ stands for the displacement of the $n^\mathrm{th}$ particle from its equilibrium position. If $V'(r)=\frac{\D V(r)}{\D r}$, that is if $V'(r)$ is proportional to $r$, that is $V$ is a harmonic potential, then the force which governs the interaction between the particles is linear. In this case, the solutions are given by superpositions of normal modes and there is no transfer of energy between these modes.

The general belief in the early 1950s was that if a nonlinearity is introduced in the interaction of these particles, then energy would flow between the different modes, eventually leading to a stable state of statistical equilibrium, i.e.\ thermalization. In the summer of 1953 at Los Alamos National Laboratory, E.~Fermi, J.~Pasta, and S.~Ulam, together with M.~Tsingou, set out to numerically study the thermalization process in solids by conducting one of the first-ever numerical experiments using the first electronic computer \textbf{MANIAC} -- \textbf{M}athematical \textbf{A}nalyzer, \textbf{N}umerical \textbf{I}ntegrator and \textbf{C}omputer. To model solids, they used the aforementioned one-dimensional chain of particles with 32 particles and 64 particles, equipped interaction potentials which had weak nonlinear terms. More explicitly, they considered the potentials
\begin{equation}
\begin{aligned}
V_{\alpha}(r) &= \frac{1}{2}r^2 + \alpha r^3,\\
V_{\beta}(r)&= \frac{1}{2}r^2 + \beta r^4,
\end{aligned}
\end{equation}
with $\alpha$ and $\beta$ being small and positive constants. The expectation was to observe equipartition of energy as time elapses due to the presence of nonlinear contributions in the potentials. One day, they forgot to terminate the experiment and it went on over the weekend. To their surprise, they found that the system exhibited quasiperiodic behavior. The energy of the system revisited the initially excited modes and the initial state was almost exactly recovered \cite{FPU}. This phenomenon, known as the FPU recurrence, has been studied from various perspectives, including ergodicity, Poincar\'e recurrence theory, and KAM theory (see \cite{FPUsurvey} for a survey article on the so-called FPU Experiment, \cite{Ford1} for an earlier article and the references therein.) 10 years after the experiment, M.~Kurskal and N.~Zabusky made the gound-breaking discovery of the ``soliton'' solution of the Korteweg-de Vries (KdV) equation in their pioneering work \cite{ZK}. N.~Zabusky and M.~Kruskal coined the name `soliton' to these traveling ``solitary wave'' solutions of the KdV equation because they retained their speed and shape upon interacting with such other waves -- they interacted as particles. The observation that the lattices used in the FPU Experiment approximated KdV equation~\cite{ZK}, which exhibited solitonic behavior, in an appropriate continuum limit provided an explanation for the FPU recurrence. This work led to a big growth in research on nonlinear waves, particularly on solitons. It also triggered a search for an interaction potential for which the resulting lattice system possesses traveling solitary wave solutions.
\begin{figure}[htp]
\centering
\begin{tikzpicture}
\draw[help lines,<->] (-2,0) -- (2,0);
\draw[help lines,<->] (0,-1) -- (0,2);
\draw[black, line width=1.2, domain=-1.4:2] plot (\x, {exp(-\x)+\x-1});
\node[right] at (2,0) {$r$};
\node[right] at (0,2) {$V$};
\end{tikzpicture}
\caption{The graph $V=V_{\mathrm{Toda}}(r)$.}
\label{f:Toda}
\end{figure}
In 1972, while working on elliptic functions, M. Toda considered the exponential interaction potential (see Figure~\ref{f:Toda})
\begin{equation}
V_{\mathrm{Toda}}(r) = \E^{-r} + r - 1,
\label{eq:TodaPot}
\end{equation}
and found out that the resulting lattice, now known as the Toda lattice, has soliton solutions~\cite{Toda67, TodaBook}.

The Toda potential \eqref{eq:TodaPot} leads to an explicit form of the evolution equations \eqref{eq:eom}, namely:
\begin{equation}
\frac{\D^2 q_n}{\D t^2} = \E^{q_n - q_{n+1}}-\E^{q_{n-1} - q_{n}},\quad n\in\mathbb{Z},
\label{eq:eom-q}
\end{equation}
where we suppressed the time dependence of $q$\footnote{We employ two notational conventions in this section. We use bold capital letters to denote a matrix, say $\mathbf{A}$, and use the regular capital type of the same letter to denote its entries: $A_{ij}$. Whenever it is clear from the context, we use $x$ to denote a sequence $\lbrace{x_n}\rbrace_{n\in\mathbb{Z}}$.}. If we denote the momentum of the $n^{\mathrm{th}}$ particle at a time $t$ by $p_n(t)$, then we can rewrite \eqref{eq:eom-q} as the first order system:
\begin{equation}
\begin{aligned}
\frac{\D p_n}{\D t} &= \E^{-\left(q_n - q_{n-1}\right)}-\E^{-\left(q_{n+1} - q_{n}\right)},\\
\frac{\D q_n}{\D t} &= p_n,
\end{aligned}
\label{eq:eom-pq}
\end{equation}
for each $n\in\mathbb{Z}$. If we assume that $q_{n+1}-q_n \to 0$ and $p_n \to 0$ sufficiently fast as $|n|\to \infty$, we can recast \eqref{eq:eom-pq} as a Hamiltonian system of equations
\begin{equation}
\label{eq:Hamiltonian}
\begin{aligned}
\frac{\D p_n}{\D t} &= - \frac{\partial \mathcal{H}(p,q)}{\partial q_n},\\
\frac{\D q_n}{\D t} &= \frac{\partial \mathcal{H}(p,q)}{\partial p_n},
\end{aligned}
\end{equation}
with the Hamiltonian $\mathcal{H}(p,q)$:
\begin{equation}
\mathcal{H}(p,q)=\mathcal{H}_{\mathrm{Toda}}(p,q)\defeq\sum_{n\in\mathbb{Z}}\frac{1}{2}p_n^2 + V_{\mathrm{Toda}}\left(q_{n+1}-q_n\right).
\label{eq:Toda-ham}
\end{equation}

\subsection{Solitons}\label{sec:solitons}
As mentioned earlier, the doubly-infinite Toda lattice has soliton solutions. Solitons are localized traveling waves (solitary waves) which interact like particles: they preserve their shapes and speeds after interacting with another wave. If two 1-solitons interact, no dispersive wave is generated after the interaction. A 1-soliton solution travels with a constant speed and constant amplitude, which are proportional. 1-soliton solutions for the displacements in the Toda lattice are given explicitly by the following 2-parameter family:
\begin{equation}
q^{[1]}_n(t) = q_{\text{R}} + \log \left(\frac{1+ \frac{\gamma}{1-\E^{-2\kappa}} \E^{-2\kappa(n-1) - 2k\sigma c t}  }{1+ \frac{\gamma}{1-\E^{-2\kappa}} \E^{-2\kappa n - 2\kappa\sigma c t}} \right),
\label{eq:1sol}
\end{equation}
where $\gamma >0$, $\kappa>0$ is the wave number, $c=\tfrac{\sinh \kappa}{\kappa}>1$ is the speed of propagation, and $\sigma = \pm 1$ is the constant determining direction of propagation. Here $q_{\text{R}}=\lim_{n\to+\infty} q_n(t)$ for all finite $t$. See Figure~\ref{f:toda1sol} for a plot of such a solution.
\begin{figure}
\centering
\includegraphics[scale=0.5]{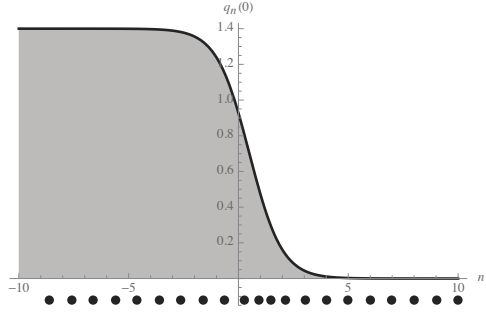}
\caption{1-soliton solution of the Toda lattice at time $t=0$.}
\label{f:toda1sol}
\end{figure}

In fact there are $N$-soliton solutions, for $N\in\mathbb{N}$, whose formulae are of the form
\begin{equation}\label{eq:Nsol}
q^{[N]}_n(t)=q_{\text{R}} + \log\left( \frac{\det(\mathbb{I}_{N} + \mathbf{C}^{[N]}(n,t))}{\det(\mathbb{I} + \mathbf{C}^{[N]}(n+1,t))} \right),
\end{equation}
where $\mathbb{I}_{N}$ is the $N\times N$ identity matrix. The entries of the $N\times N$ matrices $\mathbf{C}^{[N]}(n,t)$ are given by
\begin{equation*}
C^{[N]}_{ij}(n,t) = \left( \frac{\sqrt{\hat{\gamma}_{j}(n,t) \hat{\gamma}_{k}(n,t) } }{1- \E^{-(\kappa_j + \kappa_k)}} \right),\quad \hat{\gamma}_j(n,t) = \gamma_j \E^{-2\kappa_j n - 2 \sigma_j \sinh(\kappa_j)t},
\end{equation*}
for $\gamma_j>0$, $\kappa_j>0$, and $\sigma_j \in \{-1,1\}$.  The case $N = 1$ coincides with the 1-soliton solution \eqref{eq:1sol}. Asymptotically, as $t\to\infty$, the $N$-soliton solution can be written as a sum of 1-soliton solutions as was proved in \cite{KruTes}.

We shall see below that these solutions are \emph{reflectionless} solutions since the reflection coefficient in the associated scattering data is identically zero.

\subsection{Complete Integrability and Scattering Theory}\label{sec:integrability}
Complete integrability of the Toda lattice equations \eqref{eq:Toda-ham} can be established by considering an eigenvalue problem for a second order linear difference operator.
\subsubsection{Existence of a Lax pair}
In 1974, H.~Flaschka~\cite{Flaschka,Flaschka2} and S.~Manakov~\cite{Manakov} independently and simultaneously introduced the following variables to obtain a first-order system of differential equations that is equivalent to the Toda lattice:
\begin{equation}
a_n \defeq\frac{1}{2}\E^{-(q_{n+1} - q_{n})/2},\quad b_n \defeq -\frac{1}{2}p_n.
\label{eq:ab}
\end{equation}
If $q_n \to q_{\text{R},\text{L}}\in\mathbb{R}$ sufficiently fast as $n\to\pm \infty$, the inverse map is given by
\begin{equation*}
q_n = q_{\text{R}} + 2 \log \left(\prod_{k=n}^\infty 2 a_n \right),\quad p_n = - 2 b_n,
\end{equation*}
and \eqref{eq:ab} is a bijection. Note that $q_n \to q_{\text{R},\text{L}}$ and $p_n\to 0$ as $n\to\pm\infty$ corresponds to $a_n \to \tfrac{1}{2}$ and $b_n \to 0$ as $|n|\to \infty$. It is immediately verified that the pair of sequences $(p, q)$ satisfies the equations \eqref{eq:eom-pq} if and only if the pair $(a,b)$ defined via \eqref{eq:ab} satisfies the equations
\begin{equation}
\begin{aligned}
\frac{\D a_n}{\D t} &= a_n \left(b_{n+1} - b_n \right),\\
\frac{\D b_n}{\D t} &= 2 \left(a_n^2 - a_{n-1}^2 \right).
\end{aligned}
\label{eq:eom-ab}
\end{equation}
Using the pair of sequences introduced in $\eqref{eq:ab}$, define the operators $\mathbf{L}$ and $\mathbf{P}$ on the Hilbert space $\ell^{2}(\mathbb{Z})$ of square-summable sequences:
\begin{equation}
\begin{aligned}
(\mathbf{L}\phi)_n &\defeq a_{n-1}\phi_{n-1} + b_n \phi_n + a_n \phi_{n+1},\\
(\mathbf{P}\phi)_n &\defeq -a_{n-1}\phi_{n-1} + a_n \phi_{n+1}.
\end{aligned}
\label{eq:PL-operator}
\end{equation}
In the standard basis, $\mathbf{L}$ and $\mathbf{P}$ have doubly-infinite matrix representations.They are given explicitly by:
\begin{equation}
\mathbf{L}=
\begin{pmatrix}
  \ddots    &\ddots & 	\ddots	 & 		   \\
   \ddots & b_{n-1} & a_{n-1} & 0           \\
 \ddots          & a_{n-1} & b_n    & a_n     & \ddots        	 \\
	   & 0         & a_n    & b_{n+1} & \ddots       	 \\
		   &		  & \ddots      &\ddots & \ddots         \\
\end{pmatrix}\,,\quad
\mathbf{P}=
\begin{pmatrix}
  \ddots    &\ddots & 	\ddots	 & 		   \\
   \ddots & 0 & a_{n-1} & 0           \\
 \ddots          & -a_{n-1} & 0    & a_n     & \ddots        	 \\
	   & 0         & -a_n    &0 & \ddots       	 \\
		   &		  & \ddots      &\ddots & \ddots         \\
\end{pmatrix}.
\label{eq:PL}
\end{equation}
Note that $\mathbf{L}$ is a doubly-infinite Jacobi matrix: symmetric, tridiagonal with positive off-diagonal entries. The following proposition gives the existence of a Lax pair for the Toda lattice.

\begin{proposition}[H.~Flaschka~\cite{Flaschka,Flaschka2}, S.V.~Manakov~\cite{Manakov}]\label{pr:Lax}
The Toda lattice equations \eqref{eq:eom-ab} are equivalent to the matrix equation
\begin{equation}
\frac{\D\mathbf{L}}{\D t} = [\mathbf{P}, \mathbf{L}],
\label{eq:Lax}
\end{equation}
where $[\mathbf{P}, \mathbf{L}]$ denotes the matrix commutator, $[\mathbf{P}, \mathbf{L}]\defeq \mathbf{P}\mathbf{L}-\mathbf{L}\mathbf{P}$.
\end{proposition}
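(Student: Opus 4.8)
The plan is to establish the equivalence by equating the two sides of \eqref{eq:Lax} entry by entry. Since $\mathbf{L}$ is tridiagonal, its time derivative $\frac{\D\mathbf{L}}{\D t}$ is again tridiagonal, with diagonal entries $\dot{b}_n$ and off-diagonal entries $\dot{a}_n$ at position $(n,n+1)$ and $\dot{a}_{n-1}$ at position $(n,n-1)$. Thus the whole statement reduces to computing the entries of the commutator $[\mathbf{P},\mathbf{L}]=\mathbf{P}\mathbf{L}-\mathbf{L}\mathbf{P}$ from the band data $L_{n,n}=b_n$, $L_{n,n-1}=a_{n-1}$, $L_{n,n+1}=a_n$ and $P_{n,n}=0$, $P_{n,n-1}=-a_{n-1}$, $P_{n,n+1}=a_n$, and matching them against these derivatives.

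First I would record a structural simplification. Because $\mathbf{L}^{\mathsf{T}}=\mathbf{L}$ and $\mathbf{P}^{\mathsf{T}}=-\mathbf{P}$, a one-line computation gives $[\mathbf{P},\mathbf{L}]^{\mathsf{T}}=[\mathbf{P},\mathbf{L}]$, so the commutator is symmetric. Hence it suffices to compute its entries on and above the main diagonal; the subdiagonal entries then follow by symmetry, and since $\frac{\D\mathbf{L}}{\D t}$ is symmetric as well, no independent conditions are lost in doing so.

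The main point --- and essentially the only place where anything has to happen --- is to verify that $[\mathbf{P},\mathbf{L}]$ is itself tridiagonal. A priori the product of two tridiagonal matrices is pentadiagonal, so the commutator could carry nonzero entries two steps off the diagonal, which would be incompatible with the tridiagonal left-hand side. I would compute the $(n,n+2)$ entry directly: both $\mathbf{P}\mathbf{L}$ and $\mathbf{L}\mathbf{P}$ contribute the single term $a_n a_{n+1}$ there, so they cancel and $[\mathbf{P},\mathbf{L}]_{n,n+2}=0$; by symmetry the $(n+2,n)$ entry vanishes too. This cancellation is exactly what the antisymmetric choice of $\mathbf{P}$ is engineered to produce, and it is the crux of the proof.

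It then remains to read off the two surviving bands. Collecting the nonzero terms on the diagonal gives $[\mathbf{P},\mathbf{L}]_{n,n}=2\left(a_n^2-a_{n-1}^2\right)$, which, matched against $\dot{b}_n$, is precisely the second equation of \eqref{eq:eom-ab}; on the superdiagonal one finds $[\mathbf{P},\mathbf{L}]_{n,n+1}=a_n\left(b_{n+1}-b_n\right)$, which, matched against $\dot{a}_n$, is the first equation of \eqref{eq:eom-ab}. Because each of these is an identity between matrix entries, reading the equalities in both directions yields the claimed equivalence. Everything after the off-band cancellation is elementary bookkeeping; that cancellation is the only genuine subtlety.
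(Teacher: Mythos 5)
Your proposal is correct and is exactly the ``direct calculation'' that the paper's proof invokes and leaves as an exercise: the entries of $[\mathbf{P},\mathbf{L}]$ check out (diagonal $2(a_n^2-a_{n-1}^2)$, superdiagonal $a_n(b_{n+1}-b_n)$, vanishing $(n,n\pm 2)$ entries), and the symmetry/antisymmetry observation correctly reduces the bookkeeping. Nothing is missing; you have simply written out in full what the paper asserts without detail.
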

\begin{proof}
By direct calculation. Left as an exercise.
\end{proof}
Equation \eqref{eq:Lax} is called the Lax equation after P.~D.~Lax \cite{Lax}; and the pair $(\mathbf{L}, \mathbf{P})$ is called a Lax pair. We will call $\mathbf{L}$ the Lax operator for the Toda lattice. Note that the matrix $\mathbf{P}$ defined in \eqref{eq:PL} depends on $\mathbf{L}$: $\mathbf{P}= \mathbf{L}_{+} - \mathbf{L_{-}}$, where $\mathbf{L}_{\pm}$ stands for the upper ($+$) and the lower ($-$) diagonal parts of $\mathbf{L}$.

At this point, it must be stressed out, that equation \eqref{eq:Lax}, namely the \textit{Lax formulation}, on one hand constitutes a basis for the inverse scattering method, but it also possesses a deeper property, i.e. the so-called \textit{Hamiltonian formulation}. An equation (or system of equations) is said to have a Hamiltonian formulation if they can be written as a (perhaps infinite-dimensional) classical Hamiltonian system. In general, Hamiltonian formulation is a characteristic of all equations (or systems of equations) which are solvable by the inverse scattering transform. Moreover, in our case, the inverse scattering transform can be understood as a canonical transformation of the associated Hamiltonian structure. The first example demonstrating the relation between the Lax and the Hamiltonian formulation was the famous KdV equation -- to which we shall come back in the next subsection -- in 1971 \cite{FaddeevZakharov}. Just as the KdV equation, the Toda lattice also possesses infinitely many conserved quantities (see Exercise \eqref{ex:conserved}). For further study on Hamiltonian structure of Lax equations for integrable systems, the reader may consult \cite{AbloClar, AbloSegur},  or \cite{FaddeevTakhtajan} (and the references therein) which is a self-contained textbook.

\begin{remark}
Proposition~\ref{pr:Lax} holds for the finite Toda lattice with the boundary condition $a_{-1}=a_{N-1}=0$ for some integer $N>0$. In fact, complete integrability was first proved (\cite{Flaschka,Flaschka2} and \cite{Manakov}) for this finite version of the Toda lattice, and was later generalized to the infinite lattice. The equations of motion in the $(a,b)$ variables for the finite the Toda lattice are given by:
\begin{equation}
\begin{aligned}
\frac{\D b_0}{\D t} &= 2 a_0^2,\\
\frac{\D b_n}{\D t} &= 2\left(a_n^2 - a_{n-1}^2 \right),\quad n=1,2,\dots,N-2,\\
\frac{\D a_n}{\D t} &= a_n \left(b_{n+1} - b_n \right),\quad n=0,1,2,\dots, N-2,\\
\frac{\D b_{N-1}}{\D t} &= - 2 a_{N-2}^2\,.
\end{aligned}
\label{eq:finiteToda}
\end{equation}
Note that the condition $a_{-1}= a_{N-1} = 0$ corresponds to setting the relative displacements $q_N - q_{N-1}$ and $q_0 - q_{-1}$ to be infinite. This is sometimes called a \emph{reservoir} condition in the literature.
\label{r:finiteToda}
\end{remark}
\begin{exercise}
Show that if the Lax equation \eqref{eq:Lax} holds for $\mathbf{L}(t)$, then it also holds for the matrix power $\mathbf{L}(t)^m$, for any $m\in\mathbb{N}$. Using this, find an infinite sequence of conserved quantities for the Toda lattice equations: $\mathrm{trace}\,\left(\mathbf{L}(t)^m - \mathbf{L_\infty}^{m}\right)$, where $\mathbf{L_\infty}$ is the doubly infinite Jacobi matrix with $\lim_{|n|\to\infty}b_n = 0$ on its diagonal and $\lim_{|n|\to\infty}a_n=1/2$ on its off-diagonal.
\label{ex:conserved}
\end{exercise}

\begin{exercise}[G.~Teschl~\cite{TeschlBook}]\label{ex:Teschl}For two sequences $\psi,\phi\in\ell(\mathbb{Z})$ define
    \begin{equation*}
        \mathcal{G}(\psi,\phi)(n)= \psi_n (\mathbf{L}\phi)_n- \phi_n (\mathbf{L}\psi)_n.
    \end{equation*}
Prove Green's formula:
\begin{equation}
\sum\limits_{j=m}^{n}\mathcal{G}(\phi,\psi)(j) =\mathcal{W}_{n}(\psi,\phi)-\mathcal{W}_{m-1}(\psi,\phi),
\label{ex:GreensId}
\end{equation}
where $\mathcal{W}_n(\phi,\psi)$ stands for the Wronskian which is defined by
\begin{equation}
\mathcal{W}_n(\psi,\phi) = a_n\left(\psi_n \phi_{n+1} - \psi_{n+1}\phi_n \right).
\label{eq:wronski}
\end{equation}
\end{exercise}

Before we cover scattering data associated with $\mathbf{L}$ we have the following theorem that summarizes its basic properties.
\begin{theorem}[from Theorem 1.5~\cite{TeschlBook}]
Assume that $a,b \in \ell^{\infty}(\mathbb{Z})$, with $a_n >0$ for all $n\in \mathbb{Z}$. Then $\mathbf{L}$ defined in \eqref{eq:PL-operator} is a bounded self-adjoint operator on $\ell^2(\mathbb{Z})$. Moreover, $a,b \in \ell^{\infty}(\mathbb{Z})$ if and only if $\mathbf{L}$ is bounded on $\ell^2(\mathbb{Z})$.
\label{T:propL}
\end{theorem}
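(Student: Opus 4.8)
The plan is to treat the ``if'' direction (boundedness and self-adjointness given $a,b\in\ell^{\infty}(\mathbb{Z})$) by decomposing $\mathbf{L}$ into elementary operators whose norms are transparent, and to treat the ``only if'' direction by testing $\mathbf{L}$ against the standard orthonormal basis.

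For the first direction, I would introduce the shift operator $S$ on $\ell^2(\mathbb{Z})$, $(S\phi)_n = \phi_{n+1}$, whose adjoint is the backward shift $(S^*\phi)_n = \phi_{n-1}$; both are unitary, hence $\|S\|=\|S^*\|=1$. Writing $a$ and $b$ also for the multiplication operators $(a\phi)_n = a_n\phi_n$ and $(b\phi)_n = b_n\phi_n$, a direct comparison with \eqref{eq:PL-operator} gives the factorization
\begin{equation*}
\mathbf{L} = S^* a + b + a S.
\end{equation*}
Multiplication by a bounded real sequence is a bounded self-adjoint operator whose norm equals the supremum norm of the sequence, so $\|a\| = \sup_n a_n$ and $\|b\| = \sup_n |b_n|$ are finite by hypothesis. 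The triangle inequality then yields $\|\mathbf{L}\| \le 2\|a\|_{\ell^{\infty}} + \|b\|_{\ell^{\infty}} < \infty$, which in particular shows that $\mathbf{L}$ maps $\ell^2(\mathbb{Z})$ into itself. For self-adjointness I would use that $a,b$ are real, so $a^* = a$ and $b^* = b$, together with $(aS)^* = S^* a$; taking adjoints term by term in the factorization gives $\mathbf{L}^* = aS + b + S^* a = \mathbf{L}$. Being a bounded symmetric operator on a Hilbert space, $\mathbf{L}$ is then self-adjoint.

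For the converse, suppose $\mathbf{L}$ is bounded and let $\{e_n\}_{n\in\mathbb{Z}}$ denote the standard basis, $(e_n)_m = \delta_{nm}$. Reading off \eqref{eq:PL-operator} gives $\mathbf{L}e_n = a_{n-1}e_{n-1} + b_n e_n + a_n e_{n+1}$, so the matrix entries are recovered as inner products: $b_n = \langle \mathbf{L}e_n, e_n\rangle$ and $a_n = \langle \mathbf{L}e_n, e_{n+1}\rangle$. Since $\|e_n\| = 1$, the Cauchy--Schwarz inequality gives $|b_n| \le \|\mathbf{L}\|$ and $|a_n| \le \|\mathbf{L}\|$ for every $n$, whence $a,b\in\ell^{\infty}(\mathbb{Z})$ with $\|a\|_{\ell^{\infty}}, \|b\|_{\ell^{\infty}} \le \|\mathbf{L}\|$. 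This establishes the stated equivalence.

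There is no serious obstacle here; the argument is a clean functional-analytic computation. The only point requiring a little care is the self-adjointness step, where a naive index-shifting argument straight from \eqref{eq:PL-operator} would force one to justify interchanging the order of a doubly-infinite summation (legitimate by absolute convergence via Cauchy--Schwarz, but tedious). I would sidestep this entirely by working with the factorization $\mathbf{L} = S^* a + b + aS$, for which boundedness, the identification of the norm, and self-adjointness all follow at once from the elementary facts that the shifts are unitary and that multiplication by a bounded real sequence is self-adjoint.
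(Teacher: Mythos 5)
Your proof is correct, but it takes a different route from the paper's for the forward direction. The paper establishes symmetry first via Green's formula \eqref{ex:GreensId} (the Wronskian summation-by-parts identity from Exercise~\ref{ex:Teschl}, noting $\mathcal{W}_n(\psi,\phi)\to 0$ for $\ell^2$ sequences), and then gets boundedness from the quadratic-form estimate $|\langle\phi,\mathbf{L}\phi\rangle_{\ell^2}|\le\left(2\|a\|_\infty+\|b\|_\infty\right)\|\phi\|_2^2$ — a step that silently relies on the fact that for a \emph{symmetric} operator the norm is the supremum of the quadratic form, which is why the order of the two steps matters there. Your factorization $\mathbf{L}=S^*a+b+aS$ collapses both steps into one: boundedness is the triangle inequality applied to unitaries and multiplication operators, and self-adjointness is a term-by-term adjoint computation, with no need for Green's formula or the quadratic-form characterization of the norm. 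What you lose is the connection to the Wronskian machinery, which the paper is deliberately introducing here because it reappears throughout the scattering theory (constancy of Wronskians of Jost solutions, the formula for $\Delta(z)$, etc.); what you gain is a shorter, more self-contained argument yielding the same bound $\|\mathbf{L}\|\le 2\|a\|_\infty+\|b\|_\infty$. Your converse is essentially identical to the paper's — you extract $a_n$, $b_n$ as matrix entries $\langle\mathbf{L}e_n,e_{n+1}\rangle$, $\langle\mathbf{L}e_n,e_n\rangle$ and apply Cauchy--Schwarz, while the paper computes $\|\mathbf{L}\delta^{[k]}\|_2^2=a_{k-1}^2+b_k^2+a_k^2\le\|\mathbf{L}\|^2$; both give $\|a\|_\infty,\|b\|_\infty\le\|\mathbf{L}\|$.
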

\begin{proof}
For $\psi,\phi\in\ell^2(\mathbb{Z})$ we have $\lim_{n\to\pm\infty} \mathcal{W}_n(\psi,\phi) = 0$, where $\mathcal{W}_n(\cdot, \cdot)$ is the Wronskian defined in \eqref{eq:wronski}. Using this together with Green's formula \eqref{ex:GreensId} from Exercise~\ref{ex:Teschl} implies that
\begin{equation*}
\langle \phi,\mathbf{L}\psi \rangle_{\ell^2} =\langle \mathbf{L}\phi,\psi \rangle_{\ell^2},
\end{equation*}
for all $\phi,\psi\in\ell^2(\mathbb{Z})$, proving that $\mathbf{L}$ is self-adjoint. Now, if $a,b\in\ell^{\infty}(\mathbb{Z})$, then for any $\phi\in\ell^2(\mathbb{Z})$
\begin{equation*}
|\langle \phi, \mathbf{L}\phi\rangle_{\ell^2}| \leq \left(2 \|a\|_{\infty} + \|b\|_{\infty}\right) \|\phi \|_2^2,
\end{equation*}
which implies that $\|\mathbf{L} \| \leq 2 \|a\|_{\infty} + \|b\|_{\infty}$, where $\| \cdot \|$ denotes the operator norm. On the other hand, assume that $\mathbf{L}$ is bounded. Let $\big\{ \delta^{[k]}_n \big\}_{n\in\mathbb{Z}}$ denote the sequence defined by $\delta^{[k]}_n = 0$ if $n\neq k$ and $\delta^{[k]}_k = 1$. Then for any $k\in\mathbb{Z}$
\begin{equation*}
a_k^2 + a_{k-1}^2 + b_k^2 = \|\mathbf{L}\delta^{[k]} \|_2^2 \leq \| \mathbf{L}\|^2,
\end{equation*}
which implies that $a$ and $b$ belong to $\ell^\infty(\mathbb{Z})$. This completes the proof.
\end{proof}
For a detailed treatment on Jacobi matrices and the associated difference operators we refer the reader to~\cite{TeschlBook}.

As is well-known, the integrability of the Toda lattice can be exploited via the bijective correspondence between the Lax operator, which in this case is the Jacobi matrix $\mathbf{L}$, and its scattering data. This correspondence goes under the name of direct and inverse scattering theory, and has already been studied in detail. While we do not attempt to give a comprehensive survey of the relevant references, the interested reader may enter the subject, for example, through \cite{TeschlBook} or \cite{thesis}, and the references therein. We now proceed with the spectral properties of the Lax operator $\mathbf{L}$ and definition of the scattering data.

\subsubsection{Spectral properties of the Lax operator $\mathbf{L}$}
For the moment, we forget about the time dependence and begin with a brief study of the spectrum associated with the doubly-infinite Jacobi matrix $\mathbf{L}$ given in \eqref{eq:PL}. First, recall from Theorem~\ref{T:propL} that $\mathbf{L}$ is a bounded self-adjoint operator on $\ell^2(\mathbb{Z})$. Thus it bears no residual spectrum and its spectrum $\sigma(\mathbf{L})$ is a subset of $\mathbb{R}$. Let $\mathcal{M}$ denote the \emph{Marchenko} class of Jacobi matrices whose coefficients $(a,b)\in\ell^{\infty}(\mathbb{Z})\oplus \ell^{\infty}(\mathbb{Z})$ satisfy
\begin{equation}
\sum\limits_{n\in\mathbb{Z}}(1+|n|)\left( \left|a_n - \tfrac{1}{2}\right| + \left| b_n\right|\right)<\infty\,.
\label{eq:Marchenko}
\end{equation}
Throughout this section, we assume that $a_n>0$ for all $n\in\mathbb{Z}$ and that \eqref{eq:Marchenko} holds. These two conditions are preserved by the Toda lattice equations \eqref{eq:eom-ab} (see Theorem~2.5 in \cite{Teschl10}).

The following theorem from \cite{TeschlBook} locates the essential spectrum of $\mathbf{L}$ under an assumption that is weaker than \eqref{eq:Marchenko}. We present it in a version that is simplified for our setting and purposes.
\begin{theorem}[from Theorem 3.19, \cite{TeschlBook}]
\label{th:ac-spec}
Suppose that the sequences $(a,b)\in\ell^{\infty}(\mathbb{Z})\oplus\ell^{\infty}(\mathbb{Z})$, with $a_n >0$ for all $n\in\mathbb{Z}$, satisfy
\begin{equation}
\sum\limits_{n\in\mathbb{Z}}\left|a_{n+1}-a_n \right| + |b_{n+1}-b_n |<\infty.
\end{equation}
Suppose further that $\lim_{|n|\to\infty} a_n = \tfrac{1}{2}$ and $\lim_{|n|\to\infty} b_n =0$. Then the following are true for the essential spectrum $\sigma_{\mathrm{ess}}({\mathbf{L}})$ and the pure point spectrum $\sigma_{\mathrm{pp}}({\mathbf{L}})$ of the associated Jacobi matrix $\mathbf{L}$:
\begin{equation}
\begin{aligned}
\sigma_{\mathrm{ess}}(\mathbf{L}) &= \sigma_{\mathrm{ac}}(\mathbf{L}) = [-1,1],\\
\sigma_{\mathrm{pp}}(\mathbf{L}) &\subset \overline{\mathbb{R}\setminus\sigma_{\mathrm{ess}}(\mathbf{L})}.
\end{aligned}
\label{eq:ac-spec}
\end{equation}
\end{theorem}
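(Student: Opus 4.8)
\emph{Proof proposal.} The plan is to handle the two equalities in \eqref{eq:ac-spec} by separate mechanisms: the \emph{location} of the essential spectrum by a compact-perturbation argument, and its \emph{absolutely continuous} nature by subordinacy theory driven by the bounded-variation hypothesis.

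First I would introduce the constant-coefficient (free) Jacobi operator $\mathbf{L}_0$ obtained from \eqref{eq:PL-operator} by setting $a_n \equiv \tfrac{1}{2}$ and $b_n \equiv 0$. Diagonalizing $\mathbf{L}_0$ by Fourier series (the generalized eigenfunctions $\phi_n = \E^{\I n\theta}$ give the value $\cos\theta$) shows that $\sigma(\mathbf{L}_0) = \sigma_{\mathrm{ac}}(\mathbf{L}_0) = [-1,1]$, purely absolutely continuous. Because $\lim_{|n|\to\infty} a_n = \tfrac{1}{2}$ and $\lim_{|n|\to\infty} b_n = 0$, the difference $\mathbf{L} - \mathbf{L}_0$ is tridiagonal with entries tending to $0$; truncating to finitely many sites yields finite-rank operators converging to it in operator norm, so $\mathbf{L} - \mathbf{L}_0$ is compact. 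Weyl's theorem on the invariance of the essential spectrum under compact perturbations then gives $\sigma_{\mathrm{ess}}(\mathbf{L}) = \sigma_{\mathrm{ess}}(\mathbf{L}_0) = [-1,1]$, which is the first claim.

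To promote this to $\sigma_{\mathrm{ac}}(\mathbf{L}) = [-1,1]$, I would fix $\lambda = \cos\theta \in (-1,1)$ and rewrite the formal eigenvalue equation $(\mathbf{L}\phi)_n = \lambda\phi_n$ as a first-order recursion governed by $2\times 2$ transfer matrices $T_n(\lambda)$, which converge as $|n|\to\infty$ to the elliptic rotation with eigenvalues $\E^{\pm\I\theta}$ on the unit circle. The summability $\sum_n\bigl(|a_{n+1}-a_n|+|b_{n+1}-b_n|\bigr)<\infty$ permits a discrete Levinson-type (Harris-Lutz) transformation that asymptotically diagonalizes the products $T_n\cdots T_1$, furnishing two linearly independent solutions whose moduli remain bounded above and away from zero at each end. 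Thus for every $\lambda\in(-1,1)$ no solution is subordinate at $+\infty$ or at $-\infty$, and the Gilbert-Pearson subordinacy criterion forces the spectral measure to be purely absolutely continuous throughout $(-1,1)$. Hence $[-1,1]\subseteq\sigma_{\mathrm{ac}}(\mathbf{L})$, and together with the general inclusion $\sigma_{\mathrm{ac}}\subseteq\sigma_{\mathrm{ess}}$ and the previous paragraph this yields $\sigma_{\mathrm{ac}}(\mathbf{L})=[-1,1]$.

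For the last containment I would argue by contradiction: an eigenvalue $\lambda\in\sigma_{\mathrm{pp}}(\mathbf{L})\cap(-1,1)$ would possess an $\ell^2(\mathbb{Z})$ eigenfunction, in particular a decaying---hence subordinate---solution, contradicting the non-subordinacy just established. Therefore $\sigma_{\mathrm{pp}}(\mathbf{L})$ meets $\mathbb{R}$ only in $(-\infty,-1]\cup[1,\infty)=\overline{\mathbb{R}\setminus\sigma_{\mathrm{ess}}(\mathbf{L})}$. The step I expect to be the main obstacle is the asymptotic diagonalization of the transfer matrices in this elliptic regime: one must control the remainder in the Harris-Lutz transformation so that summability of the variation of $(a,b)$ produces genuinely bounded, non-subordinate solutions, with estimates uniform for $\lambda$ in compact subsets of $(-1,1)$. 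Once this discrete Levinson estimate is in place, the free-operator computation, Weyl's theorem, and the subordinacy bookkeeping are all routine.
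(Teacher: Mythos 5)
You should know at the outset that the paper contains no proof of this statement: Theorem~\ref{th:ac-spec} is imported, in a simplified form, from Theorem~3.19 of \cite{TeschlBook}, and the text moves directly on to remarks and an example. So there is no internal argument to compare yours against; the relevant benchmark is the cited source, and your outline is essentially the standard (and correct) proof used there. Your division of labor is the right one: locating $\sigma_{\mathrm{ess}}(\mathbf{L})=[-1,1]$ needs only the limits $a_n\to\tfrac{1}{2}$, $b_n\to 0$ (decay of the tridiagonal entries gives compactness of $\mathbf{L}-\mathbf{L}_0$, then Weyl's theorem), while the bounded-variation hypothesis is exactly what powers the transfer-matrix analysis on $(-1,1)$. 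The step you single out as the main obstacle --- the discrete Levinson/Harris--Lutz diagonalization in the elliptic regime --- is indeed the crux, but it is a known result for coefficients of bounded variation (Benzaid--Lutz type discrete Levinson theorems; it is also the mechanism behind the cited Theorem 3.19), so your argument is complete modulo citing rather than reproving it. A detail worth making explicit there: for large $|n|$ the two eigenvalues of the transfer matrix form a conjugate pair of common modulus $\sqrt{a_{n-1}/a_n}$, whose partial products telescope to $\sqrt{a_0/a_n}$; this telescoping, together with $\inf_n a_n>0$, is what actually yields solutions bounded above and away from zero, hence non-subordinacy, and then the Gilbert--Pearson/Khan--Pearson machinery and your eigenvalue argument go through. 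Finally, note that your analysis correctly degenerates at the band edges $\lambda=\pm 1$, where $\E^{\pm\I\theta}$ coalesce and no non-subordinacy conclusion is available; this is consistent with the closure in the statement and with the example given immediately after the theorem in the paper, where an eigenvalue sits at $\lambda=1$ precisely because the stronger first-moment condition \eqref{eq:Marchenko} fails, and it gives the containment in the form $\sigma_{\mathrm{pp}}(\mathbf{L})\subset\mathbb{R}\setminus(-1,1)=\overline{\mathbb{R}\setminus\sigma_{\mathrm{ess}}(\mathbf{L})}$, as you say.
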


Note that due to the closure present in \eqref{eq:ac-spec}, Theorem~\ref{th:ac-spec} does not exclude eigenvalues at the boundary of the essential spectrum, but it does prevent eigenvalues embedded in the interior of the essential spectrum. Here is an example illustrating presence of eigenvalues at the edge of the essential spectrum.
\begin{example}[G.~Teschl, \cite{TeschlBook}]
Consider the sequences $(a,b)$ with
\begin{equation}
a_n = \frac{1}{2},\quad b_n=\frac{2-3n^2}{4+n^4}.
\end{equation}
Then the associated Jacobi matrix has indeed an eigenvalue at $\lambda=1$ with the corresponding eigenfunction $\phi \in \ell^2(\mathbb{Z})$ given by
\begin{equation}
\phi_{n} = \frac{1}{1+n^2}.
\end{equation}
The reason for this is that \eqref{eq:Marchenko} is violated: the first moment $\sum\limits_{n\in\mathbb{Z}}{|n b_n|}$ is divergent.
\end{example}

Under the assumption~\eqref{eq:Marchenko}, $\mathbf{L}$ has finitely many simple eigenvalues with associated eigenvectors in $\ell^{2}(\mathbb{Z})$. $\sigma(\mathbf{L})$ consists of an absolutely continuous part $\sigma_{\text{ac}}(\mathbf{L}) = [-1,1]$ and a finite simple pure point part $\sigma_{\text{pp}}(\mathbf{L})$:
\begin{equation*}
\sigma_{\text{pp}}(\mathbf{L})=\lbrace \lambda_j \colon j=1,2,\dots,N \rbrace \subset (-\infty, -1) \cup (1,+\infty),
\end{equation*}
for some $N\in\mathbb{N}$. The spectrum of $\mathbf{L}$ typically is of the form depicted in Figure~\ref{F:spec_L}.

\begin{figure}[htp]
\centering
\begin{tikzpicture}[scale=1]
\coordinate (z1) at (-3.5,0);
\coordinate (z2) at (-1.8, 0);
\coordinate (z3) at (2.4, 0);
\coordinate (dots) at (3.4,0);
\coordinate (z4) at (4.4, 0);

\draw[black] (-5,0) -- (5,0);
\draw[line width=3,black] (-1,0) -- (1,0);
\foreach \Point in {(z1), (z2), (z3), (z4)}{
    \node at \Point {\color{black}\large\textbullet};
}
\node[ right] at (5,0) {\small$\mathbb{R}$};
\node at (-1, 0 ) {$[$};
\node at (1, 0 ) {$]$};
\node[below,yshift=-0.4em] at (-1, 0 ) {$-1$};
\node[below,yshift=-0.4em] at (1, 0 ) {$1$};
\node[above] at (z1) {$\lambda_1$};
\node[above] at (-2.65, 0) {$\cdots$};
\node[above] at (z2) {$\lambda_j$};
\node[above] at (z3) {$\lambda_{j+1}$};
\node[above] at (dots) {$\cdots$};
\node[above] at (z4) {$\lambda_N$};
\node[below] at (0,0) {$\color{black}\sigma_{\mathrm{ac}}\left( \mathbf{L}\right)$};
\node[anchor=north west] at (-5,0) {\color{black}$\sigma\left(\mathbf{L}\right)$};
\end{tikzpicture}
\caption{Spectrum $\sigma(\mathbf{L})$ of $\mathbf{L}$ on the $\lambda$-plane, consisting of finitely many real simple eigenvalues $\{\lambda_j\}_{j=1}^{N}$ and the absolutely continuous part $\sigma_{\text{ac}}(\mathbf{L})=[-1,1]$.}
\label{F:spec_L}
\end{figure}
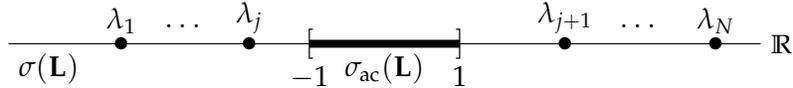
We close this section with a series of remarks.
\begin{remark}
    A Jacobi matrix $\mathbf{J}$ with bounded sequences $b$ on the diagonal and $a$, with $a_n \neq 0$ for all $n\in\mathbb{Z}$, on the off-diagonal, is unitarily equivalent to the Jacobi matrix $\tilde{\mathbf{J}}$ with $b$ on the diagonal and $\{|a_n|\}_{n\in\mathbb{Z}}$ on the off-diagonal. Therefore, the assumption that $a_n>0$ made in this section makes no difference from a spectral theory point of view. If $a_n=0$ for some $n$ however, $\mathbf{J}$ is decomposed into a direct sum of two half-line (infinite) Jacobi matrices. The assumption that $a_n \neq 0$ for all $n$ guarantees that the equation
    \begin{equation*}
        \mathbf{J}\psi = \lambda \psi
    \end{equation*}
    has exactly two linearly independent solutions in $\ell(\mathbb{Z})$ for any $\lambda\in\mathbb{C}$, and that the spectrum of $\mathbf{J}$ has multiplicity at most two \cite{TeschlBook}. Thus, we need the assumption $a_n \neq 0$ to have a well defined, bijective direct and inverse scattering theory to solve the Cauchy initial value problem. Noting that the dynamics given in \eqref{eq:eom-ab} preserve the signs of $a_n$, we can make the assumption $a_n>0$.
\end{remark}
\begin{remark}
The spectral problem
\begin{equation*}
\mathbf{L}\psi = \lambda \psi
\end{equation*}
is a discrete analogue of the Sturm-Liouville eigenvalue problem. For a discrete version of Sturm oscillation theory, see \cite{TeschlOsc} or \cite{SimonOsc} and the references therein.
\end{remark}
\begin{remark}
The spectral properties of the Lax operator $\mathbf{L}$ are similar to the Schr\"odinger operator
\begin{equation}
H\defeq -\frac{\D}{\D x} + u(x)
\end{equation}
on the line, which arises as the Lax operator for the KdV equation
\begin{equation*}
u_t + uu_x + u_{xxx} = 0,
\end{equation*}
as was discovered by C.~S.~Gardner, J.~M. Greene, M.~D.~Kruskal, and R.~M.~Miura in~\cite{GGKM} (see also the seminal work of P.~D.~Lax \cite{Lax}).
Just as in the Toda case, if $u$ in the Marchenko class
\begin{equation}
\int\limits_{\mathbb{R}} (1 + |x|)|u(x)|\,\D x,
\end{equation}
then $H$ has finitely many real simple $L^2$-eigenvalues $-E_j^2$ on $(-\infty,0)$ and an absolutely continuous spectrum $[0,+\infty)$. A difference is that the Jacobi matrix $\mathbf{L}$ has two sides to its continuous spectrum, where the Schr\"odinger operator has only one side. This manifests itself in the fact that Toda solitons can propagate in two directions whereas KdV solitons propagate in only one direction on the line.
\end{remark}

\subsubsection{Scattering data}
Our aim in this subsection is to define ``spectral data'' from the spectral problem
\begin{equation}
    \mathbf{L}\psi = \lambda\psi,
\end{equation}
that is sufficient to reconstruct $\mathbf{L}$ from. First, it is convenient to map the spectrum of $\mathbf{L}$ via the Joukowski transformation:
\begin{equation*}
\lambda = \tfrac{1}{2}\left(z + z^{-1}\right), \phantom{x} z = \lambda - \sqrt{ \lambda^2 - 1 }, \phantom{x} \lambda \in \mathbb{C}, \phantom{x} |z| \leq 1\,.
\end{equation*}
Here the square root $\sqrt{\lambda^{2}-1}$ is defined to be positive for $\lambda>1$ with $\sigma_{\text{ac}}(\mathbf{L})=[-1,1]$ being its only branch cut. This is, of course, a 1-to-2 map. Under this transformation, the absolutely continuous spectrum is mapped to the unit circle, denoted by $\mathbb{T}$, and the eigenvalues $\lambda_j$ are mapped to $\zeta^{\pm 1}_j$, with $\zeta_j \in (-1,0) \cup (0,1)$ via
\begin{equation}\label{E:zeta}
\lambda_j = \tfrac{1}{2}\left(\zeta_j + \zeta_j^{-1} \right),
\end{equation}
for $j=1,2,\dots, N$. In these new coordinates the spectrum of $\mathbf{L}$, which is depicted in Figure~\ref{F:spec_L}, takes the form of the set of points illustrated in Figure~\ref{F:spec_L_z}.
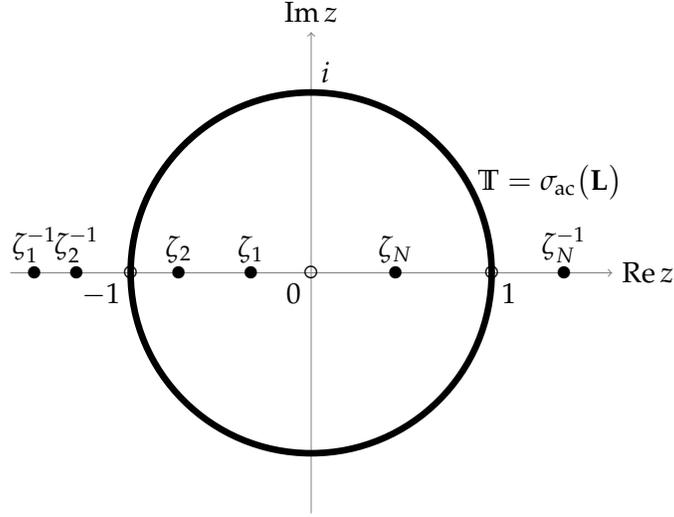
\begin{figure}[h!]
\centering
\begin{tikzpicture}[scale=0.8]
\coordinate (z2) at (-2.2,0);
\coordinate (z1) at (-1, 0);
\coordinate (z1i) at (-4.6, 0);
\coordinate (z2i) at (-3.9, 0);
\coordinate (z3) at (1.4, 0);
\coordinate (z3i) at (4.2, 0);
\draw[help lines,->] (-5,0) -- (5,0) coordinate (xaxis);
\draw[help lines,->] (0,-4) -- (0,4) coordinate (yaxis);

\path[draw,line width=2.5pt,black]
(3,0) arc(0:360:3);

\foreach \Point in {(z1), (z2), (z3), (z1i), (z2i), (z3i)}{
    \node at \Point {\color{black}\large\textbullet};
}
\foreach \Point in {(-3,0), (3,0), (0,0)}{
    \node at \Point {\large$\circ$};
}

\node[right] at (xaxis) {$\text{Re}\, z$};
\node[above] at (yaxis) {$\text{Im}\, z$};
\node[above] at (z1) {$\zeta_1$};
\node[above] at (z2) {$\zeta_2$};
\node[above] at (z3) {$\zeta_N$};
\node[above] at (z1i) {$\zeta_1^{-1}$};
\node[above] at (z2i) {$\zeta_2^{-1}$};
\node[above] at (z3i) {$\zeta_N^{-1}$};
\node[above right] at (0,3) {$i$};
\node[below left] {$0$};
\node[right] at (30:3) {\color{black}$\mathbb{T} = \sigma_{\text{ac}}\big(\mathbf{L}\big)$};
\node[below left] at (-3, 0 ) {$-1$};
\node[below right] at (3, 0 ) {$1$};
\end{tikzpicture}
\caption{$\sigma(\mathbf{L})$ in the $z$-plane, $z = \lambda - \sqrt{\lambda^{2} -1}$.}
\label{F:spec_L_z}
\end{figure}

It is a standard result (see, for example, Theorem 10.2 in \cite{TeschlBook}) that for any $z\in\mathbb{C}$ with $0<|z|\leq 1$, the linear problem (the 3-term recurrence relation)
\begin{equation*}
\mathbf{L}\psi = \frac{z+z^{-1}}{2}\psi
\label{eq:specprob}
\end{equation*}
has two unique solutions, $\varphi_+(z,\cdot)$ and $\varphi_-(z,\cdot)$, normalized such that
\begin{equation*}
\lim_{n\to\pm \infty} z^{\mp n}\varphi^{\pm}(z;n)=1.
\end{equation*}
These are called \emph{Jost} solutions, named after Swiss theoretical physicist R.~Jost. Moreover, the functions $z\mapsto \varphi_{\pm}(z;n)$ are analytic for $0<|z|<1$ (for each value of the parameter $n$) with continuous boundary values for $z=1$. The Jost solutions have the following asymptotic expansions near $z=0$:
\begin{equation}
\varphi_{\pm}(z;n) = \frac{z^{\pm n}}{A_{\pm}(n)}\left(1 + 2 B_{\pm}(n) z  + O\big(z^2\big)\right),~\text{as}~z\to 0,
\label{eq:Jost-asy}
\end{equation}
where
\begin{equation}
\begin{aligned}
A_+(n,t) &= \prod\limits_{j=n}^{\infty} 2a_j(t)\quad\text{and}\quad B_{+}(n,t)= -\sum_{j=n+1}^{\infty}b_j(t),\\
A_-(n,t) &= \prod\limits_{j=-\infty}^{n-1} 2a_j(t)\quad\text{and}\quad B_{-}(n,t)= -\sum_{-\infty}^{n-1}b_j(t).
\end{aligned}
\label{eq:AB}
\end{equation}

Note that the functions $z\mapsto A_{\pm}(n) z^{\mp n} \varphi_{\pm}(z;n)$ are analytic for $|z|<1$ and they extend continuously to the boundary $|z|=1$. Moreover, $\mu_{\pm}(z;n)\defeq A_{\pm}(n) z^{\mp n} \varphi_{\pm}(z;n)$ satisfies $\mu_{\pm}(0;n)=1$.

Before we proceed with the definition of scattering data, we have an exercise.
\begin{exercise}
Suppose that $\phi(z)$ and $\psi(z)$ are two different solutions of \eqref{eq:specprob} for the same value of $z+z^{-1}$. Show that their Wronskian $\mathcal{W}_n(\phi,\psi)$ is independent of $n$. \emph{Hint: Use Green's identity.}
\end{exercise}
From here on, we drop the subscript $n$ in the Wronskian whenever it is independent of $n$. In addition to the result of the above exercise, $\mathcal{W}(\phi,\psi)=0$ holds if and only if $\phi = c\psi$ for some constant $c\in\mathbb{C}$. Now, $\varphi_{+}(z;\cdot)$ and $\varphi_{+}(z^{-1};\cdot)$ solve \eqref{eq:specprob} for the same value of $z+z^{-1}$. Evaluating their Wronskian as $n\to +\infty$ gives
\begin{equation}
\mathcal{W}\left(\varphi_+(z;\cdot), \varphi_+\big(z^{-1};\cdot\big)\right) =\frac{1}{2} \left(z^n z^{-(n+1)} - z^{n+1}z^{-n}\right)=\frac{z^{-1}-z}{2}.
\label{eq:Wminus}
\end{equation}
It similarly follows that
\begin{equation}
\mathcal{W}\left(\varphi_-(z;\cdot), \varphi_-\big(z^{-1};\cdot\big)\right) = \frac{z-z^{-1}}{2}.
\label{eq:Wplus}
\end{equation}
\eqref{eq:Wminus} and \eqref{eq:Wplus} show that $\left\lbrace \varphi_{-}(z;\cdot), \varphi_{-}\big(z^{-1};\cdot\big)\right\rbrace$ and $\left\lbrace \varphi_{+}(z;\cdot), \varphi_{+}\big(z^{-1};\cdot\big)\right\rbrace$ both form a set of linearly independent solutions of \eqref{eq:specprob} for $|z|=1$ with $z^2 \neq 1$. Therefore, we can write
\begin{equation}
\varphi_+(z;n) = \beta_{-}(z) \varphi_{-}(z;n) + \alpha_+(z)\varphi_{-}(z^{-1};n),\label{eq:4solns1}
\end{equation}
\begin{equation}
\varphi_-(z;n) = \beta_{+}(z) \varphi_{+}(z;n) + \alpha_-(z)\varphi_{+}(z^{-1};n).\label{eq:4solns2}
\end{equation}
Note that $\alpha_{\pm}(z)$ and $\beta_{\pm}(z)$ are independent of $n$. Now, using~\eqref{eq:4solns1} and the asymptotics for the Jost solutions as $n\to-\infty$ we calculate
\begin{equation}
\mathcal{W}\left( \varphi_+ (z;\cdot), \varphi_-(z;\cdot)\right) = \alpha_+(z)\frac{z^{-1} - z}{2} \neq 0
\end{equation}
for $|z|=1$ and $z^2\neq 1$. On the other hand, using~\eqref{eq:4solns2} and the asymptotics as $n\to +\infty$ we obtain
\begin{equation}
\mathcal{W}\left( \varphi_+ (z;\cdot), \varphi_-(z;\cdot)\right) = \alpha_-(z)\frac{z^{-1} - z}{2} \neq 0.
\end{equation}
These imply that
\begin{equation*}
\alpha_{+}(z)\equiv \alpha_{-}(z),
\end{equation*}
hence we rename these quantities as $\Delta(z)\defeq \alpha_{+}(z) = \alpha_{-}(z) $. Then
\begin{equation}
\Delta(z) = \frac{2z}{1-z^2}\mathcal{W}\left( \varphi_+ (z;\cdot), \varphi_-(z;\cdot)\right).
\label{eq:Delta}
\end{equation}
The above formula tells us that $\Delta(z)$ has no zeros on the unit circle, for $z^2\neq 1$. We will come back to zeros of $\Delta(z)$ later. Now we move on to obtain formulae for $\beta_{\pm}$(z). Using \eqref{eq:4solns2} and the asymptotics for the Jost solutions as $n\to +\infty$ yields
\begin{equation}
\mathcal{W}\left( \varphi_- (z;\cdot), \varphi_+\big(z^{-1};\cdot\big)\right) = \beta_{+}(z)\frac{z^{-1} - z}{2} \neq 0,
\label{eq:betaminus}
\end{equation}
for $|z|=1$, $z^2 \neq 1$. Similarly, using \eqref{eq:4solns1} and the asymptotics for the Jost solutions as $n\to -\infty$ gives us
\begin{equation}
\mathcal{W}\left( \varphi_- \big(z^{-1};\cdot\big), \varphi_+(z;\cdot)\right) = \beta_{-}(z)\frac{z^{-1} - z}{2}.
\label{eq:betaplus}
\end{equation}
Hence
\begin{equation}
\begin{aligned}
\beta_{+}(z)&=\frac{2z}{1-z^2}\mathcal{W}\left( \varphi_- (z;\cdot), \varphi_+\big(z^{-1};\cdot\big)\right),\\
\beta_{-}(z)&=\frac{2z}{1-z^2}\mathcal{W}\left( \varphi_- (z^{-1};\cdot), \varphi_+\big(z;\cdot\big)\right).
\end{aligned}
\end{equation}
Moreover, it follows that
\begin{equation*}
\begin{aligned}
\varphi_-(z;\cdot)=& \left(1-\beta_{+}(z)\beta_{-}(z) -\Delta(z)\Delta\big(z^{-1}\big)\right)\varphi_-(z;\cdot)+\\
&+\left( \beta_{+}(z) \Delta(z) + \beta_{+}(z) \beta^+\big(z^{-1}\big)\right)\varphi_-\big(z^{-1};\cdot\big),
\end{aligned}
\end{equation*}
and by linear independence we obtain the following relations:
\begin{equation}
\begin{aligned}
\beta_{-}(z)\beta_{+}(z) + \Delta(z)\Delta\big(z^{-1}\big) &= 1,\\
-\beta_{+}(z) &= \beta_{-}\big(z^{-1}\big).
\end{aligned}
\label{eq:scat1}
\end{equation}
A straightforward calculation shows that $\varphi_+(z;\cdot)$ and its Schwarz reflection $\varphi_+(z^*;\cdot)^*$, where $z^*$ denotes the complex conjugate of $z$, solve \eqref{eq:specprob} for the same value of $z$, and both of these solutions have the same asymptotic behavior:
\begin{equation}
\begin{aligned}
\lim_{n\to+\infty}\varphi_+(z^*;n)^* z^{-n} &= 1,\\
\lim_{n\to+\infty}\varphi_+(z;n) z^{-n} &= 1.
\end{aligned}
\end{equation}
Therefore, by uniqueness, $\varphi_+(z^*;\cdot)^* = \varphi_+(z;\cdot)$ for $0<|z|\leq 1$. It follows by the same argument that $\varphi_-(z^*;\cdot)^* = \varphi_-(z;\cdot)$ for $0<|z|\leq 1$. We can use these two facts to deduce the symmetries of $\Delta(z)$ and $\beta_{\pm}(z)$. Taking complex conjugates of both sides in \eqref{eq:4solns1} for $|z|=1$ gives
\begin{equation*}
\varphi_+(z^*;\cdot)^* = \beta_{-}(z^*)^*\varphi_-(z;\cdot) + \Delta(z^*)^*\varphi_-\big(z^{-1};\cdot\big).
\end{equation*}
Then, by independence, we have
\begin{equation}
\begin{aligned}
\beta_{-}(z^*)^* &= \beta_{-}(z),\\
\Delta(z^*)^* &= \Delta(z).
\end{aligned}
\label{eq:scat2}
\end{equation}
By an analogous argument using \eqref{eq:4solns2} we deduce that
\begin{equation}
\beta_{+}(z^*)^* = \beta_{+}(z).
\label{eq:scat3}
\end{equation}
Then using \eqref{eq:scat1}, \eqref{eq:scat2}, and \eqref{eq:scat3} yields
\begin{equation}
\begin{aligned}
|\Delta(z)|^2 &= 1 + |\beta_{-}(z)|^2,\\
|\Delta(z)|^2 &= 1 + |\beta_{+}(z)|^2.
\end{aligned}
\label{eq:scat4}
\end{equation}
Now we define a new quantity $T(z)$ by
\begin{equation*}
T(z) = \frac{1}{\Delta(z)},
\end{equation*}
and rewrite \eqref{eq:4solns1} and \eqref{eq:4solns2} as
\begin{equation}
T(z) \varphi_+(z;\cdot) = \frac{\beta_{-}(z)}{\Delta(z)} \varphi_-(z;\cdot) +\varphi_-\big(z^{-1};\cdot \big),
\label{eq:tranref1}
\end{equation}
and
\begin{equation}
T(z) \varphi_-(z;\cdot) = \frac{\beta_{+}(z)}{\Delta(z)} \varphi_+(z;\cdot) +\varphi_+\big(z^{-1};\cdot \big)
\label{eq:tranref2}
\end{equation}
respectively. In addition to these, \eqref{eq:scat4} takes the form:
\begin{equation}
\begin{aligned}
1&= |T(z)|^2+ \frac{|\beta_{-}(z)|^2}{|\Delta(z)|^2 },\\
1&= |T(z)|^2+ \frac{|\beta_{+}(z)|^2}{|\Delta(z)|^2 }.
\end{aligned}
\label{eq:scat5}
\end{equation}
For $|z|=1$, \eqref{eq:tranref1} has the following wave reflection interpretation. Since $$\lim_{n\to-\infty}\varphi_-\big(z^{-1};n\big)z^n = 1,$$ we imagine $\varphi_-\big(z^{-1};n \big)$ to be a left incident wave with unit amplitude placed at the left end ($n\to -\infty$) of the lattice. Then we picture that the wave $\varphi_-\big(z^{-1};n \big)$ is incident from the left end of the lattice, gets
reflected by the lattice potential and $\frac{\beta_{-}(z)}{\Delta(z)} \varphi_-(z;n)$ is the reflected wave
moving left. Seen from the right end of the lattice $(n\to+\infty)$, we have the transmitted wave $T(z) \varphi_+(z;n)$. In light of this interpretation, we define
\begin{equation}
R_-(z)\defeq \frac{\beta_{-}(z)}{\Delta(z)}
\label{eq:refco2}
\end{equation}
to be the \emph{left reflection coefficient} and $T(z)$ is called the \emph{transmission coefficient}. The analogous interpretation for \eqref{eq:tranref2} has $\frac{\beta_{+}(z)}{\Delta(z)} \varphi_+(z;n)$ as the reflected wave moving right, and we define
we define
\begin{equation}
R_+(z)\defeq  \frac{\beta_{+}(z)}{\Delta(z)}
\label{eq:refco1}
\end{equation}
to be the \emph{right reflection coefficient}. With these definitions, we have the following scattering relations:
\begin{equation*}
\begin{aligned}
|T(z)|^2 + |R_+(z)|^2 &= 1,\\
|T(z)|^2 + |R_-(z)|^2 &= 1,\\
T(z)R_+(z^*) + T(z^*)R_-(z)&=0,
\end{aligned}
\label{eq:scatrel1}
\end{equation*}
and
\begin{equation}
T(z)^*=T(z^*),\quad R_{\pm}(z)^* = R_{\pm}(z^*).
\end{equation}
\begin{exercise}
Using the Laurent expansion of $\Delta(z)$ at $z=0$, show that
\begin{equation}
T(z)= A\left(1 + z \sum_{n\in\mathbb{Z}}2b_n + O\big(z^2\big) \right),
\end{equation}
where $A= \prod_{n\in\mathbb{Z}}2a_n$.
\label{ex:trans}
\end{exercise}
\begin{proposition}
T(z) has a meromorphic extension inside the unit disk. The only poles of $T(z)$ inside the unit circle are at $z=\zeta_j$, $j=1,2,\dots,N$, for which $\lambda_j\defeq (\zeta_j + \zeta_j^{-1})/2$ is an eigenvalue of $\mathbf{L}$.
\label{pr:trans}
\end{proposition}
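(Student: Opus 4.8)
The plan is to show that $\Delta$ extends analytically to the open unit disk, so that $T=1/\Delta$ extends meromorphically there with poles exactly at the zeros of $\Delta$, and then to match those zeros to the eigenvalues of $\mathbf{L}$ through the Joukowski map.

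First I would pass from the Jost solutions to the normalized functions $\mu_\pm(z;n)=A_\pm(n)z^{\mp n}\varphi_\pm(z;n)$, which are analytic on $|z|<1$ with $\mu_\pm(0;n)=1$. Substituting $\varphi_\pm(z;n)=z^{\pm n}\mu_\pm(z;n)/A_\pm(n)$ into the definition \eqref{eq:wronski} of the Wronskian and then into \eqref{eq:Delta}, the explicit powers $z^{\pm n}$ in $\mathcal{W}_n(\varphi_+,\varphi_-)$ collapse to single factors $z^{-1}$ and $z$, which after multiplication by $2z/(1-z^2)$ yield, for any fixed $n$,
\begin{equation*}
\Delta(z)=\frac{2a_n}{1-z^2}\left(\frac{\mu_+(z;n)\,\mu_-(z;n+1)}{A_+(n)A_-(n+1)}-z^2\,\frac{\mu_+(z;n+1)\,\mu_-(z;n)}{A_+(n+1)A_-(n)}\right).
\end{equation*}
The right-hand side is analytic on $|z|<1$, since $1-z^2$ vanishes only on the boundary at $z=\pm1$; hence $\Delta$ continues analytically to the disk and $T$ continues meromorphically with poles precisely at the zeros of $\Delta$. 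Evaluating at $z=0$ gives $\Delta(0)=1/A$ with $A=\prod_{n}2a_n\neq0$, in agreement with $T(0)=A$ from Exercise~\ref{ex:trans}, so $z=0$ is not a pole.

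Next I would identify the zeros of $\Delta$ in $0<|z|<1$. There the prefactor $2z/(1-z^2)$ is nonzero, so $\Delta(z_0)=0$ exactly when $\mathcal{W}(\varphi_+(z_0;\cdot),\varphi_-(z_0;\cdot))=0$, i.e.\ when $\varphi_+(z_0;\cdot)=c\,\varphi_-(z_0;\cdot)$ for some constant $c$. By the asymptotics \eqref{eq:Jost-asy}, $\varphi_+(z_0;n)=O(|z_0|^{n})$ as $n\to+\infty$ and $\varphi_-(z_0;n)=O(|z_0|^{-n})$ as $n\to-\infty$, so a common multiple decays geometrically at both ends and lies in $\ell^2(\mathbb{Z})$; hence $\lambda_0=(z_0+z_0^{-1})/2$ is an eigenvalue of $\mathbf{L}$. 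Conversely, any $\ell^2$-eigenfunction at $\lambda_0$ must be proportional to the recessive solution $\varphi_+(z_0;\cdot)$ near $+\infty$ (the independent solution grows like $z_0^{-n}$) and to $\varphi_-(z_0;\cdot)$ near $-\infty$, forcing $\varphi_+\propto\varphi_-$ and $\Delta(z_0)=0$.

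Finally I would invoke the spectral picture of $\mathbf{L}$. Self-adjointness (Theorem~\ref{T:propL}) makes every such $\lambda_0$ real, and writing $z_0=re^{\I\theta}$ one has $\Im\lambda_0=\tfrac12(r-r^{-1})\sin\theta$, which vanishes for $0<r<1$ only when $z_0$ is real; together with the absence of eigenvalues in $(-1,1)$ this forces $z_0\in(-1,0)\cup(0,1)$ with $|\lambda_0|>1$. Under \eqref{eq:Marchenko} there are precisely $N$ eigenvalues $\lambda_j$, the images of $\zeta_j$ under \eqref{E:zeta}, so the zeros of $\Delta$ in the disk are exactly $\{\zeta_j\}_{j=1}^N$, and these are the only poles of $T$ inside the unit circle. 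The step needing the most care is the analytic-continuation computation above: one must verify that the $z^{-1}$ singularity produced by the Jost asymptotics cancels against the prefactor $2z/(1-z^2)$ so that no spurious pole at the origin survives, which is exactly what the Laurent expansion of Exercise~\ref{ex:trans} records.
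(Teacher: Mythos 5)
Your proof is correct and follows essentially the same route as the paper: extend $\Delta(z)$ analytically into the unit disk, identify the poles of $T(z)$ with the zeros of $\Delta(z)$, and characterize those zeros through the vanishing Wronskian, the resulting proportionality $\varphi_+ = c\,\varphi_-$, and the exponential decay that places this common solution in $\ell^2(\mathbb{Z})$. Your write-up is somewhat more complete than the paper's — the explicit computation with $\mu_\pm$ confirming analyticity at $z=0$, the converse direction (eigenvalue $\Rightarrow$ zero of $\Delta$), and the reality argument locating the zeros on $(-1,0)\cup(0,1)$ are spelled out rather than asserted — but the underlying argument is the same.
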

\begin{proof}
Since both $\varphi_{\pm}(z;\cdot)$ are both analytic for $0<|z|<1$, $\Delta(z)$ has an analytic extension inside the punctured unit disk, given by the formula \eqref{eq:Delta}, and we define $\Delta(0)=1/A$ using the series expansion in Exercise~\ref{ex:trans}. The poles of $T(z)$ are precisely the zeros of $\Delta(z)$. Suppose that $\Delta(\xi) = 0$ for some fixed $\xi\in\mathbb{C}$ with $0<|\xi|<1$. Then
\begin{equation*}
\mathcal{W}(\varphi_+(\xi;\cdot), \varphi_-(\xi;\cdot))=0,
\end{equation*}
implying that $\varphi_+(\xi;\cdot) = c(\xi) \varphi_-(\xi;\cdot)$ for some constant $c(\xi)\in\mathbb{C}$. Then we have
\begin{equation}
\lim_{n\to +\infty}\varphi_+(\xi;n)\xi^{-n} = 1,
\label{eq:jostdecay1}
\end{equation}
and
\begin{equation}
\lim_{n\to -\infty}\varphi_+(\xi;n)\xi^{n} =\lim_{n\to -\infty}c(\xi)\varphi_-(\xi;n)\xi^{n}= c(\xi).
\label{eq:jostdecay2}
\end{equation}
The equations \eqref{eq:jostdecay1} and \eqref{eq:jostdecay2} imply that $\varphi_{+}(\xi;n) \to 0$ exponentially fast as $|n|\to +\infty$, hence $\varphi_{\pm}(\xi;\cdot)\in\ell^2(\mathbb{Z})$. This means that $T(z)$ has a pole at $z=\xi$ inside the unit disk if and only if $\mathbf{L}$ has an eigenvalue at
\begin{equation*}
\lambda = \frac{1}{2}\left(\xi + \xi^{-1}\right),
\end{equation*}
with the associated eigenfunction $\varphi(\xi,\cdot)\defeq\varphi_+(\xi,\cdot)=c(\xi) \varphi_-(\xi;\cdot)$ in $\ell^2(\mathbb{Z})$.
\end{proof}
Note that positive $\zeta_j$'s correspond to the eigenvalues above $1$, while negative $\zeta_j$'s correspond to the eigenvalues below $-1$. By analyticity, the zeros of $\Delta(z)$ inside the unit disk are isolated and they cannot have an accumulation point inside the unit disk. Note that if the zeros of $\Delta(z)$ do not have an accumulation point on the boundary $|z|=1$, then the zero set of $\Delta(z)$ is finite. The following exercise addresses this.
\begin{exercise}
Prove that $\Delta(z)$ has finitely many isolated simple zeros inside the unit disk by showing that the zero set does not have an accumulation point on the boundary $|z|=1$. \emph{Hint: It might be useful to treat the cases $\mathcal{W}(\varphi_+(z;\cdot), \varphi_- (z;\cdot))=0$ and $\mathcal{W}(\varphi_+(z;\cdot), \varphi_- (z;\cdot))\neq0$ separately.}
\end{exercise}

Now let $c_j\in\mathbb{C}$, for $j=1,2,\dots,N$, be the proportionality constants given by $\varphi_+(\zeta_j;\cdot)= c_j \varphi_-(\zeta_j;\cdot)$. The residues of $T(z)$ at $z=\zeta_j$, $j=1,2,\dots,N$, are given by
\begin{equation}
\underset{\zeta_j}{\res}\,T(z) = -c_j \zeta_j \gamma_{j,+} = - \frac{\zeta_j \gamma_{j,-}}{c_j},
\end{equation}
where
\begin{equation}
\gamma_{j,\pm} \defeq \| \varphi_{\pm}(\zeta_j;\cdot)\|_2^{-2}.
\end{equation}
The constants $\gamma_{j, -}$ are called the \emph{left norming constants} and the constants $\gamma_{j, +}$ are called the \emph{right norming constants}. The sets
\begin{equation*}
\mathcal{S}_{\pm} (\mathbf{L})\defeq \left\lbrace R_{\pm}(z)~\text{for}~|z|=1, \lbrace{\zeta_{j}\rbrace}_{j=1}^{N},\lbrace{\gamma_{j,\pm}\rbrace}_{j=1}^{N}  \right\rbrace
\end{equation*}
are called the \emph{right and the left scattering data} for $\mathbf{L}$, respectively. In fact, a symmetry calculation shows that either of these sets determine the other and the transmission coefficient $T(z)$ via an application of the Poisson-Jensen formula \cite[Lemma 10.7]{TeschlBook}. Therefore it is enough to work with only one of these sets. We set
\begin{equation*}
R(z)\defeq  R_{+}(z),\quad \gamma_j \defeq \gamma_{j,+}
\end{equation*}
and call
\begin{equation*}
\mathcal{S}(\mathbf{L}) \defeq \left\lbrace R(z)~\text{for}~|z|=1, \lbrace{\zeta_{j}\rbrace}_{j=1}^{N},\lbrace{\gamma_j\rbrace}_{j=1}^{N}  \right\rbrace
\end{equation*}
the \emph{scattering data} for $\mathbf{L}$. The mapping $\mathbf{L}\mapsto \mathcal{S}(\mathbf{L})$ is called the \emph{direct scattering transform}.

It is a fundamental fact of scattering theory that Jacobi matrices $\mathbf{L}$ whose coefficients satisfy \eqref{eq:Marchenko} are in bijective correspondence with their scattering data $\mathcal{S}(\mathbf{L})$ (see, for example, \cite[Chapter 11]{TeschlBook}). Thus, there exists an inverse mapping $\mathcal{S}(\mathbf{L})\mapsto \mathbf{L}$, which is called the \emph{inverse scattering transform}. Moreover, as we shall see in the next section, time evolution of the scattering data for $\mathbf{L}(t)$ is governed by simple linear ordinary differential equations when $\mathbf{L}(t)$ evolves according to the Toda lattice equations~\eqref{eq:Lax}. This fact equips us with a method to solve the Cauchy initial value problem for the Toda lattice as depicted in Figure~\ref{F:ist}:
\begin{enumerate}
\item Given initial data $(a^0,b^0)$ satisfying \eqref{eq:Marchenko}, compute the scattering data $\mathcal{S}(\mathbf{L}(0))$ for the Jacobi matrix $\mathbf{L}(0)$ with coefficients $(a^0,b^0)$.
\item Compute the time evolution of the scattering data $\mathcal{S}(\mathbf{L}(t))$ at a desired time $t\in\mathbb{R}$.
\item Compute the inverse scattering transform to reconstruct $\mathbf{L}(t)$ from $\mathcal{S}(\mathbf{L}(t))$, hence obtaining $(a(t), b(t))$.
\end{enumerate}
Following this procedure transfers the inherent difficulty of dealing with a nonlinear equation into the study of the scattering and inverse scattering transform for an operator. The latter can be approached with the specific and powerful methods, mainly with the method of nonlinear steepest descent introduced by P.~Deift and X.~Zhou \cite{DZ}, developed for Riemann-Hilbert problems.
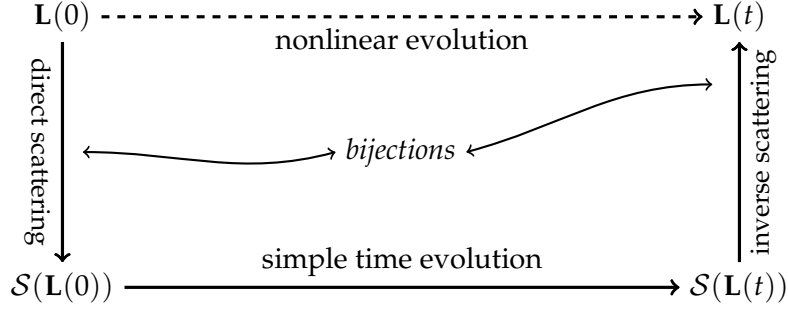
\begin{figure}
\begin{center}
\begin{tikzpicture}[scale=0.9]
\draw
(0,0) node (id) {$\mathbf{L}(0)$}
(10,0) node (soln) {$\mathbf{L}(t)$}
(0,-4) node (scat) {$\mathcal{S}(\mathbf{L}(0))$}
(10,-4) node (scatsoln) {$\mathcal{S}(\mathbf{L}(t))$}
(5,-2) node (bij) {\color{black}\emph{bijections}};
\draw[->,very thick,dashed,decoration={markings,mark= at position 1 with{\arrow[scale=1]{>}}}, postaction={decorate}] (id) -- (soln) node[midway,sloped,below] {nonlinear evolution};

\draw[->,very thick,decoration={markings,mark=at position 1 with {\arrow[scale=1]{>}}}, postaction={decorate}] (scat) -- (scatsoln) node[midway,sloped,above] {\color{black}simple time evolution};

\draw[->, very thick,decoration={markings,mark=at position 1 with {\arrow[scale=1]{>}}}, postaction={decorate}] (id) -- (scat) node[midway,sloped,below] (dst) {\small \color{black}direct scattering};

\draw[->, very thick,decoration={markings,mark=at position 1 with {\arrow[scale=1]{>}}}, postaction={decorate}] (scatsoln) -- (soln) node[midway,sloped,below] (ist) {\small\color{black} inverse scattering};

\path[<->,black,thick] (0.3,-2) edge [out= 0, in= 195] (bij.west);
\path[<->,black,thick] (9.6,-1) edge [out= 180, in= 15] (bij.east);
\end{tikzpicture}
\end{center}
\caption{Inverse scattering transform method for solving the Cauchy initial value problem for the Toda lattice.}
\label{F:ist}
\end{figure}

We end the section with a theorem which establishes that one generically has $R(\pm 1)=-1$ for Jacobi matrices in the Marchenko class $\mathcal{M}$ (see \eqref{eq:Marchenko}).
\begin{theorem}[Theorem 1, \cite{BT}]\label{T:genericity}
The set of doubly infinite Jacobi matrices $\mathbf{J}(a, b)$ (with the sequence $a$ in the off-diagonal entries and the sequence $b$ in the diagonal entries) in $\mathcal{M}$ with the associated reflection coefficient satisfying $R(\pm 1) = -1$ forms an open and dense subset of $\mathcal{M}$ in the topology induced by the norm
\begin{equation*}
\| \mathbf{J}(a,b) \|_{\mathcal{M}} = \sum_{n\in\mathbb{Z}}{(1+|n|)\left(\left|a_n - \tfrac{1}{2}\right | + |b_n| \right)}.
\end{equation*}
\end{theorem}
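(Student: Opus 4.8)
The plan is to reduce the whole statement to the (non)vanishing of a single Wronskian at each band edge. First I would record the representation of the right reflection coefficient as a ratio of Wronskians, obtained directly from the formulas for $\beta_+(z)$ and $\Delta(z)$ established above:
\[
R(z) = \frac{\beta_+(z)}{\Delta(z)} = \frac{\mathcal{W}\big(\varphi_-(z;\cdot),\varphi_+(z^{-1};\cdot)\big)}{\mathcal{W}\big(\varphi_+(z;\cdot),\varphi_-(z;\cdot)\big)}.
\]
Here the common prefactor $\tfrac{2z}{1-z^2}$ cancels, so the right-hand side is regular at the band edges $z=\pm1$ precisely when its denominator does not vanish. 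Under the first-moment Marchenko condition \eqref{eq:Marchenko} the Jost solutions $\varphi_\pm(z;\cdot)$, and hence these two Wronskians, extend continuously up to $z=\pm1$, so $R$ may be evaluated there.

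The decisive step is purely algebraic. At $z=\pm1$ one has $z^{-1}=z$, so the numerator becomes $\mathcal{W}\big(\varphi_-(\pm1;\cdot),\varphi_+(\pm1;\cdot)\big)$, which by antisymmetry of the Wronskian is exactly the negative of the denominator; consequently
\[
R(\pm1)=-1\qquad\text{whenever}\qquad \mathcal{W}\big(\varphi_+(\pm1;\cdot),\varphi_-(\pm1;\cdot)\big)\neq0.
\]
This motivates introducing the exceptional set $E\defeq E_+\cup E_-$ with $E_\pm\defeq\{\mathbf{J}(a,b)\in\mathcal{M}:\mathcal{W}(\varphi_+(\pm1;\cdot),\varphi_-(\pm1;\cdot))=0\}$, i.e.\ the Jacobi matrices carrying a half-bound state (a bounded, proportional pair $\varphi_+(\pm1;\cdot)=c\,\varphi_-(\pm1;\cdot)$) at the corresponding edge. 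A short separate expansion of the ratio near $z=\pm1$ shows that in the resonant case one instead gets $R(\pm1)=+1$ (in particular $\neq-1$), so that $\{R(\pm1)=-1\}$ coincides exactly with $G\defeq\mathcal{M}\setminus E$. It therefore suffices to prove that $G$ is open and dense.

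Openness is the easier half. I would show that the two functionals $\mathbf{J}(a,b)\mapsto\mathcal{W}(\varphi_+(\pm1;\cdot),\varphi_-(\pm1;\cdot))$ are continuous — indeed real-analytic — on $\mathcal{M}$ with respect to $\|\cdot\|_{\mathcal{M}}$. This rests on the analytic dependence of the Jost solutions on the coefficients $(a,b)$, which follows from their summation-equation (discrete Volterra) representation together with the first-moment bound that controls convergence uniformly up to the band edge. Continuity of these functionals makes $\{\text{denominator}\neq0\}$ open, so $G$ is open and $E$ is closed.

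The main obstacle is density, equivalently that each $E_\pm$ has empty interior. Given an arbitrary $\mathbf{J}_0\in E_\pm$, let $\psi$ be the associated half-bound state and consider the one-parameter family obtained by perturbing a single diagonal entry, $b_m\mapsto b_m+s$. This keeps $a_n>0$, stays in $\mathcal{M}$, and changes the norm by $(1+|m|)|s|\to0$, so it is an admissible small deformation. Along this line the band-edge Wronskian $s\mapsto W(s)$ is real-analytic, and a first-order computation via Green's formula (Exercise~\ref{ex:Teschl}), applied to the perturbed and unperturbed solutions, yields
\[
W'(0)\ \propto\ \varphi_+(\pm1;m)\,\varphi_-(\pm1;m),
\]
which is nonzero for any site $m$ at which the (nontrivial) solution $\varphi_-(\pm1;\cdot)$ does not vanish. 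Choosing such an $m$ makes $W(s)$ nonconstant, hence with isolated zeros, so arbitrarily small $s\neq0$ land in $G$; thus $E_\pm$ has empty interior. As $E=E_+\cup E_-$ is a finite union of closed, nowhere-dense sets, $G$ is open and dense, which is the assertion. The points requiring the most care are the boundary regularity of the Jost solutions at $z=\pm1$ under \eqref{eq:Marchenko} and pinning down the exact constant in the variational identity for $W'(0)$.
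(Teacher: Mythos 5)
These lecture notes never actually prove Theorem~\ref{T:genericity}: it is imported verbatim as Theorem~1 of \cite{BT}, so your argument can only be measured against the standard proof of this dichotomy, and in outline yours is exactly that proof. The skeleton is sound: represent $R(z)=\mathcal{W}\big(\varphi_-(z;\cdot),\varphi_+(z^{-1};\cdot)\big)\big/\mathcal{W}\big(\varphi_+(z;\cdot),\varphi_-(z;\cdot)\big)$ so that the singular prefactor $\tfrac{2z}{1-z^2}$ cancels, identify $\{R(\pm1)=-1\}$ with the non-resonant set where the band-edge Wronskian is nonzero, get openness from continuity of the band-edge Wronskians in $\|\cdot\|_{\mathcal{M}}$, and get density from a one-site perturbation $b_m\mapsto b_m+s$. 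In fact your density step is even easier than you make it: since $\varphi_+(\pm1;m)$, $\varphi_+(\pm1;m+1)$ and $\varphi_-(\pm1;m)$ do not depend on $b_m$, while $\varphi_-(\pm1;m+1)$ shifts by $-s\,\varphi_-(\pm1;m)/a_m$, the Wronskian is \emph{exactly affine}, $W(s)=W(0)-s\,\varphi_+(\pm1;m)\,\varphi_-(\pm1;m)$; in the resonant case $\varphi_+(\pm1;\cdot)=c\,\varphi_-(\pm1;\cdot)$ the slope is $-c\,\varphi_-(\pm1;m)^2\neq0$ for any $m$ with $\varphi_-(\pm1;m)\neq0$, and such $m$ exists because a nontrivial solution of a three-term recurrence cannot vanish at two consecutive sites. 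So no analyticity-in-$s$ or isolated-zeros argument is needed, and your variational formula is correct up to sign.

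The one genuine error is your claim that in the resonant case ``$R(\pm1)=+1$.'' It is not. Writing $\varphi_+(\pm1;\cdot)=c\,\varphi_-(\pm1;\cdot)$ with $c\in\mathbb{R}\setminus\{0\}$ ($c$ is real because the coefficients and the band-edge asymptotics are real), the $0/0$ limit of the Wronskian ratio resolves — after differentiating in $z$ and using that $\frac{\D\lambda}{\D z}=\tfrac12\left(1-z^{-2}\right)$ vanishes at $z=\pm1$, so that $\partial_z\varphi_\pm(\pm1;\cdot)$ are again solutions and the relevant Wronskians are $n$-independent and computable from the asymptotics — to
\[
R(\pm1)=\frac{1-c^{2}}{1+c^{2}}\in(-1,1),
\]
for instance $R(\pm1)=0$ when $c=\pm1$. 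Fortunately your argument only needs $R(\pm1)\neq-1$ in the resonant case, which this formula does deliver (equality would force $c=\infty$), so the identification $\{R(\pm1)=-1\}=\mathcal{M}\setminus(E_+\cup E_-)$ and the concluding open-dense argument (a finite union of closed nowhere-dense sets is nowhere dense) stand. But as written, your justification of that identification is a false statement, and the correct expansion must be supplied. The remaining items you flag only in passing — continuity of $(a,b)\mapsto\mathcal{W}\big(\varphi_+(\pm1;\cdot),\varphi_-(\pm1;\cdot)\big)$ with respect to $\|\cdot\|_{\mathcal{M}}$ and the boundary regularity of the Jost solutions at $z=\pm1$ — are standard consequences of the Volterra summation representations under the first-moment condition \eqref{eq:Marchenko}, but they are precisely where the Marchenko class enters and should be written out in a complete proof.
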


\subsubsection{Time dependence of scattering data}
Existence of a Lax pair has several important consequences. We begin with the one that concerns the spectrum of the Lax operator.
\begin{proposition}
The Lax equation \eqref{eq:Lax} is an isospectral evolution on Jacobi matrices.
\label{pr:iso}
\end{proposition}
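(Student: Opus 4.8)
The plan is to realize the flow \eqref{eq:Lax} as a unitary conjugation of $\mathbf{L}(t)$ and then invoke the elementary fact that unitarily equivalent operators have the same spectrum. The first observation I would record is structural: from the explicit matrix representation \eqref{eq:PL} the operator $\mathbf{P}$ is skew-adjoint, $\mathbf{P}^{*}=-\mathbf{P}$, and it is bounded on $\ell^2(\mathbb{Z})$ with $\|\mathbf{P}\|\le 2\|a\|_{\infty}$ by an estimate analogous to the one used in the proof of Theorem~\ref{T:propL}. Moreover, as long as the coefficients $(a,b)$ remain in the Marchenko class \eqref{eq:Marchenko} --- a property preserved by the dynamics \eqref{eq:eom-ab} --- the map $t\mapsto\mathbf{P}(t)$ is continuous in operator norm.

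Next I would introduce the \emph{propagator} $\mathbf{U}(t)$ as the unique solution of the linear initial value problem $\frac{\D \mathbf{U}}{\D t}=\mathbf{P}(t)\,\mathbf{U}$, $\mathbf{U}(0)=\mathbb{I}$, posed in the Banach algebra of bounded operators on $\ell^2(\mathbb{Z})$. Existence and uniqueness follow from the standard theory of linear differential equations in Banach spaces, since $\mathbf{P}(t)$ is bounded and norm-continuous in $t$. The key point is that $\mathbf{U}(t)$ is \emph{unitary}: differentiating $\mathbf{U}^{*}\mathbf{U}$ and using $\mathbf{P}^{*}=-\mathbf{P}$ gives $\frac{\D}{\D t}(\mathbf{U}^{*}\mathbf{U})=\mathbf{U}^{*}(\mathbf{P}^{*}+\mathbf{P})\mathbf{U}=0$, so $\mathbf{U}^{*}\mathbf{U}=\mathbb{I}$ for all $t$, and a symmetric computation gives $\mathbf{U}\mathbf{U}^{*}=\mathbb{I}$.

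With the propagator in hand, the core computation is to show that $\mathbf{U}(t)^{*}\mathbf{L}(t)\mathbf{U}(t)$ is independent of $t$. Differentiating this product, substituting $\frac{\D \mathbf{L}}{\D t}=[\mathbf{P},\mathbf{L}]$ from \eqref{eq:Lax} together with $\frac{\D \mathbf{U}}{\D t}=\mathbf{P}\mathbf{U}$ and $\frac{\D \mathbf{U}^{*}}{\D t}=-\mathbf{U}^{*}\mathbf{P}$, the three resulting terms collapse to $\mathbf{U}^{*}\bigl(-\mathbf{P}\mathbf{L}+(\mathbf{P}\mathbf{L}-\mathbf{L}\mathbf{P})+\mathbf{L}\mathbf{P}\bigr)\mathbf{U}=0$. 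Hence $\mathbf{U}(t)^{*}\mathbf{L}(t)\mathbf{U}(t)=\mathbf{L}(0)$, that is, $\mathbf{L}(t)=\mathbf{U}(t)\,\mathbf{L}(0)\,\mathbf{U}(t)^{*}$. Since unitary conjugation preserves the spectrum, $\sigma(\mathbf{L}(t))=\sigma(\mathbf{L}(0))$ for every $t$, which is precisely the assertion that \eqref{eq:Lax} is an isospectral evolution.

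The only genuinely delicate step is the construction of $\mathbf{U}(t)$ in the infinite-dimensional setting: in finite dimensions one would simply write a (time-ordered) matrix exponential, whereas on $\ell^2(\mathbb{Z})$ one must check that the non-autonomous equation $\frac{\D \mathbf{U}}{\D t}=\mathbf{P}(t)\mathbf{U}$ is well posed. This is where I expect to spend the most care, verifying that $\mathbf{P}(t)$ stays uniformly bounded and norm-continuous on the time interval of interest so that the Picard iteration (equivalently, the time-ordered exponential) converges in operator norm. Once that is secured, the unitarity of $\mathbf{U}(t)$ and the conjugation identity are purely algebraic, and the isospectrality is immediate.
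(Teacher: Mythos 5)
Your proposal is correct and follows essentially the same route as the paper: construct the propagator solving $\frac{\D \mathbf{Q}}{\D t}=\mathbf{P}\mathbf{Q}$, $\mathbf{Q}(0)=\mathbb{I}$, use skew-adjointness of $\mathbf{P}$ to get unitarity, and differentiate $\mathbf{Q}^{*}\mathbf{L}\mathbf{Q}$ to conclude $\mathbf{L}(t)=\mathbf{Q}(t)\mathbf{L}(0)\mathbf{Q}(t)^{*}$. Your additional remarks (the bound $\|\mathbf{P}\|\le 2\|a\|_{\infty}$, norm-continuity of $t\mapsto\mathbf{P}(t)$, and checking $\mathbf{U}\mathbf{U}^{*}=\mathbb{I}$ as well as $\mathbf{U}^{*}\mathbf{U}=\mathbb{I}$) only make explicit details the paper leaves implicit.
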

\begin{proof}
Let $\mathbf{Q}(t)$ be the unique solution of the matrix Cauchy initial value problem
\begin{equation}
\frac{\D \mathbf{Q}}{\D t} = \mathbf{P} \mathbf{Q},\quad \mathbf{Q}(0)=\mathbb{I},
\end{equation}
where $\mathbb{I}$ is the identity. Solutions for the initial value problem exist locally and they are unique (see, for example, \cite[Theorem 4.1.5]{Marsden}). Since $\|\mathbf{P}(t)\|$ is uniformly bounded on any compact time interval, the solution is global. The skew-symmetry of $\mathbf{P}$ implies that $\frac{\D \mathbf{Q}^{*}}{\D t} = -\mathbf{Q}^* \mathbf{P}$. Therefore
\begin{equation}
\frac{\D}{\D t}\left( \mathbf{Q}^* \mathbf{Q}\right)=-\mathbf{Q}^* \mathbf{P} \mathbf{Q} + \mathbf{Q}^* \mathbf{P} \mathbf{Q}=0,
\end{equation}
implying that $\mathbf{Q}(t)$ is unitary for all $t\in\mathbb{R}$. Furthermore,
\begin{equation}
\frac{\D}{\D t}\left(\mathbf{Q}^* \mathbf{L} \mathbf{Q} \right)=-\mathbf{Q}^*\mathbf{P}\mathbf{L}\mathbf{Q}+\mathbf{Q}^*(\mathbf{P}\mathbf{L}-\mathbf{L}\mathbf{P})\mathbf{Q}+\mathbf{Q}^*\mathbf{L}\mathbf{P}\mathbf{Q}=0,
\end{equation}
which means $\left(\mathbf{Q}^*\mathbf{L}\mathbf{Q}\right)(t)=\mathbf{L}(0)$, hence
\begin{equation}
\mathbf{L}(t) = \mathbf{Q}(t)\mathbf{L}(0)\mathbf{Q}(t)^{*}.
\end{equation}
We showed that $\mathbf{L}(t)$ is unitarily equivalent to $\mathbf{L}(0)$ for all $t\in\mathbb{R}$. This completes the proof.
\end{proof}

By proving Proposition~\ref{pr:iso} we established the fact that $\sigma(\mathbf{L}(0)) = \sigma(\mathbf{L}(t))$ for all time $t\in\mathbb{R}$ if $\mathbf{L}(t)$ evolves according to \eqref{eq:Lax}.
We have the corollary:
\begin{corollary}
Each eigenvalue $\lambda_j$, $j=1,2,\cdots,N$, of $\mathbf{L}$ is a constant of motion of the Toda lattice equations \eqref{eq:eom-ab}.
\end{corollary}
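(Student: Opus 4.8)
The plan is to deduce this essentially for free from Proposition~\ref{pr:iso}, with the only real work being to upgrade invariance of the spectrum \emph{as a set} to invariance of each individual eigenvalue. First I would invoke Proposition~\ref{pr:iso}: since $\mathbf{L}(t)$ evolves by the Lax equation \eqref{eq:Lax}, it is unitarily equivalent to $\mathbf{L}(0)$ via the unitary $\mathbf{Q}(t)$, and unitary equivalence preserves the spectrum. Hence $\sigma(\mathbf{L}(t)) = \sigma(\mathbf{L}(0))$ as subsets of $\mathbb{R}$ for every $t\in\mathbb{R}$. In particular the essential spectrum stays at $[-1,1]$ and the pure point spectrum $\{\lambda_1,\dots,\lambda_N\}\subset(-\infty,-1)\cup(1,+\infty)$ is the same finite set for all time.

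The subtle point — and the one I would flag as the main obstacle — is that invariance of the \emph{set} $\{\lambda_1(t),\dots,\lambda_N(t)\}$ does not by itself forbid the individual eigenvalues from being rearranged as $t$ varies; one must rule out any motion of a given $\lambda_j$ within the fixed finite set. To handle this I would argue by continuity. Because $(a(t),b(t))$ solves the Toda lattice equations \eqref{eq:eom-ab}, it is a continuously differentiable function of $t$, so the bound $\|\mathbf{L}(t)-\mathbf{L}(s)\|\le 2\|a(t)-a(s)\|_\infty + \|b(t)-b(s)\|_\infty$ from Theorem~\ref{T:propL} shows that $t\mapsto\mathbf{L}(t)$ is continuous in operator norm. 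Each eigenvalue $\lambda_j$ is simple and separated from the essential spectrum $[-1,1]$ by a positive gap, so by standard analytic perturbation theory for isolated eigenvalues of self-adjoint operators, $t\mapsto\lambda_j(t)$ is a continuous function of $t$.

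Finally I would conclude: each $\lambda_j(t)$ is a continuous real-valued function of $t$ whose values all lie in the fixed finite set $\{\lambda_1(0),\dots,\lambda_N(0)\}$. A continuous map from the connected set $\mathbb{R}$ into a discrete (finite) set is constant, so $\lambda_j(t)\equiv\lambda_j(0)$ for each $j$. This is precisely the statement that every eigenvalue of $\mathbf{L}$ is a constant of motion of the Toda lattice equations \eqref{eq:eom-ab}, completing the proof. The entire argument is short; its only delicate ingredient is the continuity/perturbation step, which is what converts set-level isospectrality into pointwise conservation of the individual eigenvalues.
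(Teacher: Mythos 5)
Your proof is correct, and its core is the same as the paper's: the corollary is deduced from the isospectrality established in Proposition~\ref{pr:iso}. The paper in fact offers no further argument at all --- it treats the statement as immediate --- so the only real difference is the continuity/perturbation step you add to rule out a permutation of the eigenvalues within the fixed spectrum. That step is valid, but it is heavier machinery than the situation requires, because unitary equivalence gives you more than equality of the spectra as sets: if $\mathbf{L}(0)\psi_j = \lambda_j \psi_j$ with $\psi_j \in \ell^2(\mathbb{Z})$, then
\begin{equation*}
\mathbf{L}(t)\bigl(\mathbf{Q}(t)\psi_j\bigr) = \mathbf{Q}(t)\mathbf{L}(0)\psi_j = \lambda_j\, \mathbf{Q}(t)\psi_j,
\end{equation*}
so the specific number $\lambda_j$ remains an eigenvalue of $\mathbf{L}(t)$ for every $t$, with eigenvector $\mathbf{Q}(t)\psi_j$; no tracking of eigenvalue branches is needed. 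Equivalently, since the eigenvalues are real and simple, labeling them in increasing order makes constancy of each $\lambda_j$ an immediate consequence of constancy of the finite set. So the ``main obstacle'' you flag dissolves once one conjugates eigenvectors rather than appealing to perturbation theory; your detour through norm-continuity of $t\mapsto\mathbf{L}(t)$ and Kato-type continuity of isolated simple eigenvalues is a correct but unnecessary insurance policy. What it does buy you is robustness: the same argument would establish the conclusion in a setting where one only knows that $t\mapsto\mathbf{L}(t)$ is norm-continuous and that the spectrum is constant as a set, without having an explicit intertwining unitary in hand.
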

Another consequence of Proposition~\ref{pr:iso} is
\begin{equation}
\|{\mathbf{L}(t)}\| = \|\mathbf{L}(0)\|,\quad\text{for all} t\geq 0,
\end{equation}
which implies that the solution $(a,b)$ of the Cauchy initial value problem in $\ell^{\infty}(\mathbb{Z})\oplus\ell^{\infty}(\mathbb{Z})$ for the Toda lattice equations \eqref{eq:eom-ab} is global in time since
\begin{equation*}
\|a \|_{\infty} + \|b\|_{\infty} \leq 2\|\mathbf{L}(t) \|=2\|\mathbf{L}(0)\|.
\end{equation*}

We know present the time evolution of the scattering data.

\begin{proposition}[Theorem 13.4, \cite{TeschlBook}]
The time evolutions of the quantities in the scattering data are as follows:
\begin{equation*}
\begin{aligned}
\zeta_j(t) &= \zeta_j,\quad\text{for}~j=1,2,\cdots,N,\\
\gamma_j(t) &= \gamma_j \E^{(\zeta_j - \zeta_j^{-1})t},\quad\text{for}~j=1,2,\cdots,N,\\
R(z;t) &= R(z)\E^{(z-z^{-1})t},\quad\text{for}~|z|=1,
\end{aligned}
\end{equation*}
where $R(z)=R(z;0)$, $\zeta_j = \zeta_j(0)$, and $\gamma_j=\gamma_j(0)$.
\label{pr:scattime}
\end{proposition}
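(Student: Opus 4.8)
The plan is to reduce everything to the time evolution of the Jost solutions, which I expect to take the form $\partial_t \varphi_\pm = \left(\mathbf{P} \mp \tfrac{z-z^{-1}}{2}\right)\varphi_\pm$, and then to read off the evolution of $R$ and of the $\gamma_j$ by differentiating the scattering relations and the norming integrals. First, the statement $\zeta_j(t)=\zeta_j$ is essentially already proved: by Proposition~\ref{pr:iso} the flow is isospectral, so each eigenvalue $\lambda_j$ is a constant of motion; since $\zeta_j = \lambda_j - \sqrt{\lambda_j^2-1} \in (-1,0)\cup(0,1)$ is a fixed continuous branch of the Joukowski inverse, $\zeta_j$ cannot jump and is therefore constant in $t$.

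The heart of the argument is the evolution law for the Jost functions. I would start from the observation that, because $\mathbf{L}(t)\varphi = \lambda\varphi$ with $\lambda = \tfrac{z+z^{-1}}{2}$ and $\dot{\mathbf{L}} = [\mathbf{P},\mathbf{L}]$, differentiating in $t$ gives $\mathbf{L}\bigl(\partial_t\varphi - \mathbf{P}\varphi\bigr) = \lambda\bigl(\partial_t\varphi - \mathbf{P}\varphi\bigr)$; that is, $\partial_t\varphi_\pm - \mathbf{P}\varphi_\pm$ solves the same three-term recurrence $\mathbf{L}\psi = \tfrac{z+z^{-1}}{2}\psi$ and hence is a linear combination of $\varphi_\pm(z;\cdot)$ and $\varphi_\pm(z^{-1};\cdot)$. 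To pin the coefficients I would use the recurrence to rewrite $(\mathbf{P}\varphi)_n = -(\lambda - b_n)\varphi_n + 2a_n\varphi_{n+1}$ and let $n\to\pm\infty$ (where $a_n\to\tfrac12$ and $b_n\to 0$), obtaining $\mathbf{P}\varphi_\pm \sim \pm\tfrac{z-z^{-1}}{2}\varphi_\pm$ as $n\to\pm\infty$. Combining this with the fact that the normalizations in \eqref{eq:Jost-asy} are time independent, so that $z^{-n}\partial_t\varphi_+(z;n)\to 0$ as $n\to+\infty$, forces the coefficient of the growing solution $\varphi_+(z^{-1};\cdot)$ to vanish and fixes $\partial_t\varphi_+ = \bigl(\mathbf{P} - \tfrac{z-z^{-1}}{2}\bigr)\varphi_+$; the analogous matching at $n\to-\infty$ gives $\partial_t\varphi_- = \bigl(\mathbf{P}+\tfrac{z-z^{-1}}{2}\bigr)\varphi_-$. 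For the bound states $z=\zeta_j$ inside the disk the same identity holds, because there the combination $\partial_t\varphi_+ - \mathbf{P}\varphi_+$ decays as $n\to+\infty$ and so must be a multiple of the unique decaying solution $\varphi_+(\zeta_j;\cdot)$.

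With the Jost evolution in hand the rest is bookkeeping. Differentiating the scattering relation \eqref{eq:4solns2}, substituting the evolution equations for $\varphi_+(z;\cdot)$, $\varphi_+(z^{-1};\cdot)$, and $\varphi_-(z;\cdot)$, applying $\mathbf{P}$ to \eqref{eq:4solns2} to cancel the $\mathbf{P}$-terms, and matching the coefficients of the independent solutions $\varphi_+(z;\cdot)$ and $\varphi_+(z^{-1};\cdot)$ yields the decoupled linear ODEs $\dot\Delta = 0$ and $\dot\beta_+ = (z-z^{-1})\beta_+$. Hence $\Delta$ is conserved (consistent with $|\Delta|^2 = 1 + |\beta_+|^2$, since $z-z^{-1}$ is purely imaginary on $\mathbb{T}$) and $R(z;t) = \beta_+(z;t)/\Delta(z) = R(z)\,\E^{(z-z^{-1})t}$. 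For the norming constants I would differentiate $\gamma_j^{-1} = \|\varphi_+(\zeta_j;\cdot)\|_2^2$: since $\varphi_+(\zeta_j;\cdot)$ is real (from $\varphi_+(z^*;\cdot)^*=\varphi_+(z;\cdot)$ at real $\zeta_j$) and $\mathbf{P}$ is skew-symmetric, $\langle\mathbf{P}\varphi_+,\varphi_+\rangle = 0$, so $\tfrac{d}{dt}\|\varphi_+(\zeta_j;\cdot)\|_2^2 = -(\zeta_j - \zeta_j^{-1})\|\varphi_+(\zeta_j;\cdot)\|_2^2$, which integrates to $\gamma_j(t) = \gamma_j\,\E^{(\zeta_j-\zeta_j^{-1})t}$.

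The step I expect to be most delicate is the derivation of the Jost evolution law itself: one must justify that $\varphi_\pm(z;n,t)$ is differentiable in $t$ with enough asymptotic control to differentiate under the limits, and the matching on $|z|=1$ is subtle because the candidate unwanted term behaves like the nonconvergent factor $z^{\mp 2n}$, so the vanishing of its coefficient has to be argued from the \emph{existence} of the limit rather than by taking it naively. Everything downstream---the coefficient matching for $R$ and the skew-symmetry computation for $\gamma_j$---is then routine.
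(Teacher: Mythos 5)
Your proposal is correct and follows essentially the route the paper itself indicates (the paper relegates the proof to an exercise citing Theorem 13.4 of Teschl's book): differentiate the spectral problem in time to obtain the evolution of the Jost solutions, $\partial_t\varphi_\pm = \bigl(\mathbf{P}\mp\tfrac{z-z^{-1}}{2}\bigr)\varphi_\pm$, and then read off the decoupled linear ODEs for the scattering coefficients from the relation \eqref{eq:4solns2}. Your treatment of the norming constants via skew-symmetry of $\mathbf{P}$, and your care with the nonconvergent factor $z^{\mp 2n}$ when pinning down the coefficients on $|z|=1$, correctly fill in details that the paper's exercise hint (which packages the same computation in terms of the scattering matrix $\mathbf{S}(z)$) leaves to the reader.
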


\begin{exercise}
Prove Proposition~\ref{pr:scattime}. Define the \emph{scattering matrix} $\mathbf{S}(z)$ by
\begin{equation*}
\mathbf{S}(z)\defeq \begin{pmatrix}\Delta(z) & -\beta_+(z) \\ \beta_-(z) & \Delta(z^{-1}) \end{pmatrix},
\end{equation*}
and then the ``$\infty \times 2$'' matrices $\boldsymbol{\Psi}(z)$ and $\boldsymbol{\Phi}(z)$ whose $n^\text{th}$ rows are
\begin{equation}
\begin{aligned}
\boldsymbol{\Psi}(z;n) &= \begin{pmatrix} \varphi_+(z;n) & \varphi_+ (z^{-1};n)\end{pmatrix},\\
\boldsymbol{\Phi}(z;n) &= \begin{pmatrix} \varphi_-(z^{-1};n) & \varphi_- (z;n)\end{pmatrix},
\end{aligned}
\end{equation}
respectively, so that we have $\boldsymbol{\Psi}(z;n)= \boldsymbol{\Phi}(z;n) \mathbf{S}(z)$ for all $n\in\mathbb{Z}$. Differentiate both sides of \eqref{eq:specprob} and obtain the time evolution $\mathbf{S}(z;t)$ (see the Appendix in~\cite{BN}).
\end{exercise}

\begin{exercise}[J.~Moser,~\cite{Moser}]
Consider the finite version of the Toda lattice as described in \eqref{eq:finiteToda}. Show that the time evolution of the first component $\psi^{[k]}_1$ of the normalized eigenvector $\psi^{[k]}$ associated to the eigenvalue $\lambda_k$ is given explicitly by:
\begin{equation}
\displaystyle\psi^{[k]}_1(t)^2 = \frac{\E^{2\lambda_k t}\psi^{[k]}_1(0)^2}{\sum\limits_{j=1}^N \E^{2\lambda_j t}\psi^{[j]}_1(0)^2}.
\label{eq:pdm}
\end{equation}
\emph{See also \cite{PDM} for a detailed proof of \eqref{eq:pdm}.}
\label{ex:pdm}
\end{exercise}

We close this section with a calculation which is due to J.~Moser~\cite{Moser} and dates back to 1975.
\begin{exercise}[J.~Moser,~\cite{Moser}]
For the finite Toda lattice~\eqref{eq:finiteToda} considered in Remark~\ref{r:finiteToda} show that the off-diagonal elements $L_{j+1, j}(t)=L_{j, j+1}(t)$ of $\mathbf{L}(t)$ converge to zero as $t$ tends to infinity.
\end{exercise}

\subsection{Inverse problem}\label{sec:inverse}
The problem of reconstructing $\mathbf{L}$ from its scattering data is often called the inverse problem (and the mapping that achieves this is called the inverse scattering transform). This problem can be formulated as a Riemann-Hilbert factorization problem. Before giving a brief description of a Riemann-Hilbert problem, we need to introduce some notation. For an oriented contour $\Sigma$ in the complex plane, its $+$ side is to the left of $\Sigma$ as it is traversed in the direction of orientation, and its $-$ side is to the right. Given this convention, we denote by $\phi^{+}$ and $\phi^{-}$ the nontangential limits of a function $\phi$ on $\Sigma$ from $+$ side and $-$ side of $\Sigma$, respectively. Loosely speaking, a Riemann-Hilbert problem is the problem of finding a \emph{sectionally analytic} (or \emph{sectionally meromorphic}) function $\phi$ that is discontinuous across an oriented contour $\Sigma\subset\mathbb{C}$ with the jump condition
\begin{equation*}
\phi^{+}(z) =  \phi^{-}(z)G(z) + F(z),\quad \text{for}~z\in\Sigma.
\end{equation*}
In case the sought after function $\phi$ is sectionally meromorphic, there are prescribed residue conditions along with the jump condition. To assure that the problem at hand is uniquely solvable, one specifies normalization of $\phi$ at a point $\alpha\in\mathbb{C}\cup\{\infty\}$
and, perhaps, a symmetry condition. A detailed overview and discussion on Riemann-Hilbert problems is beyond the scope of these lectures. The interested reader is encouraged to see \cite{AbFokas,Clancey, DeiftBook,TomBook}. We now proceed with presenting the Riemann-Hilbert formulation of the inverse scattering transform for the Toda lattice.

Based on the domains of analyticity of the Jost solutions $z\mapsto\varphi_{\pm}(z;\cdot,\cdot) $, we define the following row-vector valued meromorphic function in the complex plane:
\begin{equation}
m(z;n,t)\defeq\begin{cases}
\begin{pmatrix} T(z) \varphi_-(z;n,t)z^n & \varphi_+(z;n,t)z^{-n}\end{pmatrix},&\quad |z|<1,\vspace{0.5em}\\
\begin{pmatrix} \varphi_+(z^{-1};n,t)z^n & T(z^{-1})\varphi_-(z^{-1};n,t)z^{-n}\end{pmatrix},&\quad |z|>1,\\
\end{cases}
\label{eq:m}
\end{equation}
We first find what jump condition $m(z;n,t)$ satisfies on the unit circle $\mathbb{T}$. Using the notation given above for the nontangential limits, define
\begin{equation*}
m^{\pm}(z_0;n,t)\defeq \lim_{\substack{z\to z_0 \\ |z|^{\pm}< 1}} m(z),~\quad\text{for}~|z_0|=1.
\end{equation*}
We note here that these $\pm$ superscripts should not be confused with the $\pm$ signs used in subscripts for labeling the asymptotic behavior of Jost solutions $\varphi_{\pm}$.

From the relations \eqref{eq:4solns1} and \eqref{eq:4solns2} we see that the jump condition satisfied by $m$ across the unit circle $\mathbb{T}$ is
\begin{equation}
m^{+}(z;n,t) = m^{-}(z;n,t)\begin{pmatrix}
1 - |R(z)|^2 & -R(z)^*\E^{-\theta(z;n,t)} \\
R(z)\E^{\theta(z;n,t)} & 1
\end{pmatrix},\quad\text{for}~z\in\mathbb{T},
\end{equation}
where
\begin{equation*}
\theta(z;n,t)=t\big(z-z^{-1}\big)+2n\log{z}.
\end{equation*}
We have arrived at the following fact~\cite[Theorem 3.3]{KruTes}. $m(z;n,t)$ is a solution of the following vector Riemann-Hilbert problem:

\begin{rhp}
Find a row vector-valued function $m(\cdot;n,t)\colon \mathbb{C}\setminus\mathbb{T}\to\mathbb{C}^{1\times 2}$ which satisfies the following conditions:
\begin{itemize}
\item \emph{Analyticity condition}: $m(z;n,t)$ is sectionally meromorphic away from the unit circle $\mathbb{T}$ with simple poles at $\zeta_j^{\pm 1}$, $j=1,2,\dots,N$.
\item \emph{Jump condition}:
\begin{equation*}
m^{+}(z;,n,t) = m^{-}(z;n,t)\begin{pmatrix}
1 - |R(z)|^2 & -R(z)^*\E^{-\theta(z;n,t)} \\
R(z)\E^{\theta(z;n,t)} & 1
\end{pmatrix},\quad\text{for}~z\in\mathbb{T},
\end{equation*}
\item \emph{Residue conditions}:
\begin{equation*}
\begin{aligned}
\underset{\zeta_j}{\res}\,m(z;n,t) &=\lim_{z\to\zeta_j} m(z)\begin{pmatrix} 0 & 0\\
-\zeta_j \gamma_j \E^{\theta(\zeta_j;n,t)} & 0
\end{pmatrix},\\
\underset{\zeta_j^{-1}}{\res}\,m(z;n,t) &=\lim_{z\to\zeta_j} m(z)\begin{pmatrix} 0 & \zeta_j^{-1} \gamma_j \E^{\theta(\zeta_j;n,t)}\\
0 & 0
\end{pmatrix},
\end{aligned}
\end{equation*}
\item \emph{Symmetry condition}:
\begin{equation*}
m\big(z^{-1}\big) = m(z)\begin{pmatrix} 0 & 1\\1 & 0
\end{pmatrix},
\end{equation*}
\item \emph{Normalization condition}:
\begin{equation*}
\lim_{z\to\infty} m(z;n,t) \eqdef m(\infty;n,t) = \begin{pmatrix} m_1 & m_2 \end{pmatrix},~m_1 \cdot m_2=1,~m_1>0.
\end{equation*}
\end{itemize}
\label{rhp:m}
\end{rhp}

The solution of this Riemann-Hilbert problem is unique (see \cite{KT_sol}) and the symmetry condition is essential for uniqueness in presence of poles. Note that once $m(z;n,t)$ is obtained, we can extract the solution $\big(a_n(t), b_n(t)\big)$ of the Cauchy initial value problem for the Toda lattice equations from the asymtptotic expansion of $m(z;n,t)$ as $z\to \infty$:
\begin{equation}
m(z;n,t)=\begin{pmatrix}
\frac{1}{A_+(n,t)}\left(1 + 2B_{+}(n,t)z\right)&A_+(n,t)\left(1 - 2B_{+}(n-1,t)z\right)
\end{pmatrix}
+O(z^{-2}).
\label{eq:m-asy}
\end{equation}
We recover $(a,b)$ from the formulae
\begin{equation*}
a_n(t) = \frac{A_+(n+1,t)}{A_+(n,t)}\quad\text{and}\quad b_n(t)= B_{+}(n) - B_{+}(n-1).
\label{eq:recover}
\end{equation*}

As can be seen, solving the Cauchy initial value problem for sufficiently decaying initial data for the Toda lattice equations is equivalent to solving \rhref{rhp:m} which depends parametrically on the independent variables $(n,t)$ of the nonlinear evolution equations \eqref{eq:eom-ab}. However, obtaining an explicit formula for the solution $m(z;n,t)$ for any given values of $(n,t)$ is difficult, if possible. One can, on the other hand, obtain rigorous long-time asymptotics (as $t\to\infty$) for the solutions of the Cauchy initial value problem through the analysis of Riemann-Hilbert formulation of the inverse scattering transform as $t\to\infty$. Note that as $t\to\infty$, the jump matrix in \rhref{rhp:m} becomes highly oscillatory since the coefficient of the $t$-term in $\theta(z;n,t)$ is purely imaginary for $|z|=1$. One can approach this problem by the method of nonlinear stationary phase/steepest descent (see \cite{DZ}). Loosely speaking, this method involves finding the stationary phase points of the exponential terms in the jump matrix, and then deforming the jump contours so that the deformed contours pass locally from the directions of steepest descent of these exponential terms at the stationary phase points and that the new jump matrices on those deformed contours tend to the identity matrix as $t\to\infty$ exponentially fast away from the stationary phase points.

This method has yielded a large number of rigorous asymptotic results for various completely integrable partial differential equations (see, for example, \cite{DVZ} and \cite{DZP2}, among many others) as well as for the Toda lattice (see \cite{KruTes, KT_sol}). Moreover, recent advances in numerical solution of the Riemann-Hilbert problems~\cite{OlverP2,OlverRHP,OTRHP} has led to numerical implementations of the inverse scattering transform method for integrable PDEs \cite{TONLS,TODKdV} and for the Toda lattice \cite{BT}. Therefore, it is possible to numerically compute and plot solutions accurately in arbitrarily long time scales without using any time-stepping methods.

The deformations that are employed in the process of analyzing the solutions of the Riemann-Hilbert problem in the long time regime are determined by the asymptotic region that $(n,t)$ lies in as $t\to\infty$ (see Figure~\ref{F:regions} for asymptotic regions for the Toda lattice).
\begin{figure}
	\centering
\includegraphics[scale=0.4]{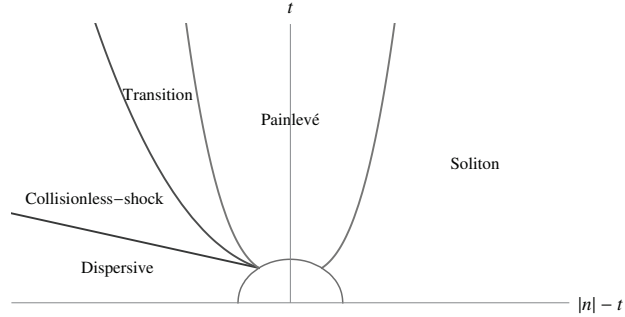}
\caption{Asymptotic regions for the Toda lattice, figure taken from \cite{BT}.}
\label{F:regions}
\end{figure}
The Toda lattice has the following asymptotic regions~\cite{BT}: for constants $k_j  > 0$:
\begin{enumerate}
\item[1.] \emph{The dispersive region.} This region is defined for $|n| \leq k_1 t$, with $0<k_1<1$. Asymptotics in this region were obtained in \cite{KruTes}.

\item[2.] \emph{The collisionless shock region.} This region is defined by the relation $  c_1 t\leq |n|  \leq t - c_2 t^{1/3}\left(\log{t} \right)^{2/3} $. For the behavior of the solutions obtained via the numerical inverse scattering transform see \cite{BT}.

\item[3.] \emph{The transition region.} This region is also not present in the literature for the Toda lattice. {The region is} defined by the relation $ t - c_2 t^{1/3}\left(\log{t} \right)^{2/3}   \leq |n| \leq t - c_3 t^{1/3}$~\cite{BT}. An analogue of this region was first introduced for KdV in \cite{TODKdV}.

\item[4.] \emph{The Painlev\'e region.} This region is defined for $t - c_3 t^{1/3} \leq |n| \leq t + c_3 t^{1/3}$. Asymptotics in this region were obtained in \cite{Kam} in absence of solitons and under the additional assumption that $|R(z)|<1$.

\item[5.] \emph{The soliton region.} This region is defined for $|n| > t + c_3 t^{1/3}$. Let $v_k >1$ denote the velocity of the $k^{\text{th}}$ soliton and choose $\nu>0$ so that the intervals $(v_k - \nu, v_k +\nu)$, $k=1,2,\dots, N$, are disjoint. If $|n/t - v_k| < \nu$, the asymptotics in this region were obtained in \cite{KT_sol} and \cite{KruTes}.
\end{enumerate}

For the deformations used in each region for the Toda lattice we refer the reader to~\cite{KruTes} and \cite{BT}.
\begin{remark}
The collisionless shock region and the transition region appear as $t\to\infty$ only if the reflection coefficient $R(z)$ corresponding to the initial data attains the value $-1$ at the edges of the continuous spectrum $z=\pm 1$. Theorem~\ref{T:genericity} tells us that this is generically the case: for an open dense subset of initial data inside the Marchenko class \eqref{eq:Marchenko}, the long time behavior of the solution differs from that in the Painlev\'e region. If $|R(\pm 1)|<1$ the collisionless shock and the transition regions are absent.
\end{remark}

The following rigorous results were obtained in \cite{KruTes, KT_sol}. The solution corresponding to sufficiently decaying (as in \eqref{eq:Marchenko}) splits into a sum of $N$ independent solitons ($N$ being the number of eigenvalues of the Lax operator) and radiation. This result shows that in the region $\frac{n}{t} > 1$, the solution is asymptotically given by an $N_-$-soliton solution, where $N_-$ is the number of $\zeta_j\in(-1,0)$. Similarly, in the region $\frac{n}{t}<-1$ the solution is asymptotically given by an $N_+$-soliton solution, where $N_+$ is the number of $\zeta_j\in(0,1)$. Each soliton is formed by a distinct pair of eigenvalue and the associated norming constant, that is, by a $(\zeta_j, \gamma_j)$, for some $j\in\lbrace 1,2,\dots \mathbb{N} \rbrace$. In the dispersive region $\frac{n}{t}<1$, the solution (radiation) decays to the background, i.e.\ $a_n(t)\to\frac{1}{2}$ and $b_n(t)\to 0$ in the sup-norm as $t\to\infty$.

We close this subsection with a special case that leads to an explicit formula for the solution. If the potential Jacobi matrix is reflectionless, that is if $R(z)\equiv 0$ on the entire unit circle, then \rhref{rhp:m} can be solved explicitly. Below we present the solution $m(z;n,t)$ in the case where there is only one eigenvalue $\zeta \in (-1,0)\cup(0,1)$ and the associated norming constant $\gamma > 0$ in the scattering data. Such scattering data results in a pure 1-soliton solution of the Toda lattice.
\begin{proposition}[Lemma 2.6, \cite{KT_sol}]
If $\mathcal{S}(\mathbf{L})=\lbrace R(z)\equiv 0~\text{for}~|z|=1, \zeta, \gamma \rbrace$, then the unique solution of \rhref{rhp:m} is given by
\begin{equation}
m(z;n,t) = \begin{pmatrix} f(z) & f(1/z)\end{pmatrix},
\end{equation}
where
\begin{equation}
f(z) = \frac{1}{\sqrt{1-\zeta^2 + \gamma \E^{\theta(\zeta;n,t)} }\sqrt{1-\zeta^2 + \zeta^2\gamma \E^{\theta(\zeta;n,t)} }}\left(1 - \zeta^2 + \zeta^2 \gamma \E^{\theta(\zeta;n,t)}\frac{z-\zeta^{-1}}{z-\zeta} \right).
\end{equation}
\label{pr:sol}
\end{proposition}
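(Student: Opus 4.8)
The plan is to exploit the collapse of the jump condition in the reflectionless case. Setting $R(z)\equiv 0$ turns the jump matrix in \rhref{rhp:m} into the identity, so that $m^{+}(z;n,t)=m^{-}(z;n,t)$ on $\mathbb{T}$. A standard Morera argument then shows that $m(\cdot;n,t)$, which is analytic on each side of $\mathbb{T}$ with matching continuous boundary values, extends analytically across the unit circle. Consequently $m(\cdot;n,t)$ is meromorphic on the whole Riemann sphere, its only singularities being the prescribed simple poles at $\zeta$ (inside $\mathbb{T}$) and $\zeta^{-1}$ (outside $\mathbb{T}$), and finite at $\infty$ by the normalization condition. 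A function meromorphic on $\mathbb{C}\cup\{\infty\}$ with finitely many poles is rational, so each entry of $m$ is a rational function whose only possible poles are simple and located at $\zeta^{\pm1}$.

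First I would apply the symmetry condition, which exchanges the two entries and reads $m^{(2)}(z)=m^{(1)}(z^{-1})$; writing $f\defeq m^{(1)}$ this already produces the asserted shape $m=\begin{pmatrix}f(z)&f(1/z)\end{pmatrix}$. Reading off the residue conditions, the residue of the second entry at $\zeta$ vanishes and the residue of the first entry at $\zeta^{-1}$ vanishes; equivalently $f$ is regular at $\zeta^{-1}$ and carries its lone simple pole at $\zeta$. Hence $f$ must have the one-pole form $f(z)=c_0+\dfrac{r}{z-\zeta}$ for constants $c_0,r$ depending on $(n,t)$.

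It then remains to determine $c_0$ and $r$ from the two surviving scalar constraints. The residue condition at $\zeta$ gives $r=\res_{\zeta} f=-\zeta\gamma\,\E^{\theta(\zeta;n,t)}f(\zeta^{-1})$; substituting $f(\zeta^{-1})=c_0+r/(\zeta^{-1}-\zeta)$ yields a linear relation in which $r$ is proportional to $c_0$ with the combination $1-\zeta^2+\zeta^2\gamma\,\E^{\theta}$ appearing in the denominator. The normalization $m_1 m_2=1$ with $m_1=c_0>0$ and $m_2=f(0)=c_0-r/\zeta$ then determines $c_0$ as the positive square root $c_0=\sqrt{(1-\zeta^2+\zeta^2\gamma\,\E^{\theta})/(1-\zeta^2+\gamma\,\E^{\theta})}$. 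Feeding $c_0$ and $r$ back into $f(z)=c_0+r/(z-\zeta)$ and collecting terms over the common denominator $z-\zeta$ recovers the bracketed factor $(1-\zeta^2)+\zeta^2\gamma\,\E^{\theta}\tfrac{z-\zeta^{-1}}{z-\zeta}$ together with the stated symmetric prefactor.

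The main obstacle is purely computational: the algebra that rewrites the pair $(c_0,r)$ in the symmetric square-root normalization, together with the observation that the residue relation at $\zeta^{-1}$ is not an independent constraint but is forced by the relation at $\zeta$ through $m^{(2)}(z)=f(1/z)$. Because uniqueness for \rhref{rhp:m} is already in hand (\cite{KT_sol}), one may alternatively bypass the derivation entirely: directly verify that the displayed $f$ yields a row vector meeting analyticity, the trivial jump, both residue conditions, the swap symmetry, and the normalization, and then invoke uniqueness. The constructive route is preferable, however, since it explains the provenance of the particular combination $1-\zeta^2+\zeta^2\gamma\,\E^{\theta}$.
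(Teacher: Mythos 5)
Your proposal is correct and follows essentially the same route as the paper's proof: symmetry reduces $m$ to the form $\begin{pmatrix} f(z) & f(1/z)\end{pmatrix}$ with $f$ meromorphic on the Riemann sphere having a single simple pole at $\zeta$, a one-pole rational ansatz is made (your $c_0 + r/(z-\zeta)$ is just a reparametrization of the paper's $\tfrac{1}{G}\left(1+\tfrac{2H}{z-\zeta}\right)$), and the constants are then fixed by the residue condition at $\zeta$ together with the normalization $f(0)f(\infty)=1$ with positivity. The only difference is that you spell out details the paper leaves implicit -- the Morera continuation across $\mathbb{T}$ in the reflectionless case, the redundancy of the residue condition at $\zeta^{-1}$ under the symmetry, and the final algebra -- all of which check out.
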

\begin{proof}
The symmetry condition forces the solution to be of the form
\begin{equation*}
m(z;n,t) = \begin{pmatrix}f(z;n,t) & f(1/z;n,t) \end{pmatrix},
\end{equation*}
where $f(z;n,t)$ is meromorphic in $\mathbb{C}\cup {\infty}$ with simple pole at $z=\zeta$. Therefore $f$ must be of the form
\begin{equation}
f(z) = \frac{1}{G(n,t)}\left(1 + 2 \frac{H(n,t)}{z-\zeta} \right),
\end{equation}
where the unknown constants (in $z$) $G$ and $H$ are uniquely determined by the residue condition
\begin{equation*}
\underset{\zeta}{\res}\,m(z;n,t) = -f\big(\zeta^{-1}\big)\zeta \gamma \E^{\theta(\zeta;n,t)}
\end{equation*}
and the normalization $f(0)f(\infty) =1$, $f(0)>0$. 
\end{proof}

From Proposition~\ref{pr:sol} we obtain
\begin{equation}
A_{+}(n,t) = \sqrt{ \frac{ 1-\zeta^2 +\gamma \E^{\theta(\zeta;n,t)}  }{  1-\zeta^2 +\zeta^2\gamma \E^{\theta(\zeta;n,t)} }   }\quad\text{and}\quad
B_{+}(n,t)=\frac{ \zeta^2\gamma \E^{\theta(\zeta;n,t)} \zeta\big(\zeta^2 - 1 \big)  }{ 2 \left(1-\zeta^2 +\zeta^2\gamma \E^{\theta(\zeta;n,t)}\right)    },
\end{equation}
from which the solution $(a_n(t), b_n(t))$ to the Cauchy initial value problem can be extracted using \eqref{eq:recover}.

The solutions $(a,b)$ that correspond to reflectionless Jacobi matrices (Lax operators) $\mathbf{L}$ are precisely the $N$-soliton solutions of the Toda lattice given in \eqref{eq:Nsol}, where $N$ is the number of eigenvalues $\lambda_j = \tfrac{1}{2}\left(\zeta_j + \tfrac{1}{\zeta_j}\right)$ of $\mathbf{L}$, i.e. the number of pole pairs $(\zeta_j, \zeta_j^{-1})$ in the Riemann-Hilbert formulation of the inverse scattering transform.

The inverse scattering transform method for solution of the initial value problem also applies to the finite Toda lattice and the periodic Toda lattice. While it is not possible to give a full exposure to these cases here, we provide a summary and references below for the interested reader.

\subsubsection{The solution of the finite Toda lattice}
For the finite $N\times N$ version\footnote{Finite Toda lattice is sometimes called the open Toda lattice.} of the Toda lattice given in \eqref{eq:finiteToda}, the Lax operator becomes
\begin{equation}
\mathbf{L}=
\begin{pmatrix}
 b_0 & a_0 & 		   \\
 a_0 & b_1 & a_1 &            \\
        & a_1 & b_2 & a_2  & \\
 &   & \ddots   & \ddots & \ddots & &       	 \\
		  & &		  & a_{N-3}     &b_{N-2} & a_{N-2}         \\
		   & &		  &      &a_{N-2} & b_{N-1}         \\
\end{pmatrix}
\end{equation}
and the scattering data consists of $N$ real simple eigenvalues $\lambda_j$, $j=1,2,\dots,N$ of the Jacobi matrix $\mathbf{L}(t)$ and 
\begin{equation*}
w_j(t)\defeq \left(\psi_1^{[j]}(t)\right)^2,
\end{equation*} 
where $\psi_1^{[j]}$ is the first component of the $j^{\text{th}}$ normalized eigenvector (associated with $\lambda_j$). As mentioned in Exercise~\ref{ex:pdm}, $w_j(t)$ has simple time evolution:
\begin{equation}
w_j(t) = \frac{\E^{2\lambda_j t}w_j(0)}{\sum\limits_{k=1}^{N}\E^{2\lambda_k t}w_k(0)},~\quad j=1,2,\dots,N,
\end{equation}
when $\mathbf{L}(t)$ evolves under the finite Toda dynamics (observed by Moser~\cite{Moser}) and $\lambda_j$ remain constant in time~\cite{Flaschka}:
\begin{equation*}
\lambda_j(t) = \lambda_j(0),\quad j=1,2,\dots,N.
\end{equation*}
It is well-known that $\mathbf{L}(t)$ can be reconstructed from the data $\lbrace \lambda_j, w_j(t)\rbrace_{j=1}^{N}$ via Hankel determinants (see \cite{Szego}, \cite{DeiftKen}, or \cite{DeiftBook} and the references therein.) Moreover, the inverse scattering transform method to solve the initial value problem for the finite Toda equations becomes the Gramm-Schmidt orthogonalization process:
\begin{enumerate}
\item Given initial data $\mathbf{L}(0)$ and some time $t>0$ at which the solution $\mathbf{L}(t)$ is desired, compute the $QR$-factorization:
\begin{equation*}
\E^{t \mathbf{L}(0)}= \mathbf{Q}(t) \mathbf{R}(t).
\end{equation*} 
\item The solution of the Toda lattice is then obtained via
\begin{equation*}
\mathbf{L}(t) = \mathbf{Q}(t)^{\top}\mathbf{L}(0)\mathbf{Q}(t),
\end{equation*}
where $\mathbf{Q}(t)^\top$ is the transpose of the orthogonal matrix $\mathbf{Q}(t)$. For details, we refer the reader to \cite{Moser} and \cite{Symes1,Symes2}.
\end{enumerate}

\subsubsection{The periodic and quasi-periodic solutions of the Toda lattice}
The inverse scattering transform methods in the context of nonlinear evolution equations are not limited to spatially decaying solutions. Beginning with the seminal works of Novikov~\cite{Novikov}, Lax~\cite{Lax2}, Dubrovin~\cite{Dubrovin1,Dubrovin2}, Its and Matveev~\cite{Its-Matveev1, Its-Matveev2}, and McKean and van Moerbeke~\cite{McKean-vanMoer}, the inverse scattering transform methods were extended to construct spatially periodic and quasi-periodic solutions of the KdV equation in the late 1970s. These methods make extensive use of the algebro-geometric theory of Riemann surfaces. By the early 1980s analogous methods were developed for other integrable nonlinear evolution equations such as the nonlinear Schr\"odinger equation. Explicit formulae for periodic and quasi-periodic solutions, which give soliton solutions as certain limiting cases, became available in terms of the Riemann theta function (see \cite{Dubrovin-theta} for a survey article on theta functions and nonlinear equations). The fundamental object of the algebro-geometric approach to integrate nonlinear evolution equations is the Baker-Akheizer function (a function with certain analyticity properties on a Riemann surface)~\cite{Baker,Akhiezer1,Akhiezer2}. Such a connection between nonlinear evolution equations and algebro-geometric methods led to rapid developments both in algebraic geometry and in the theory of integrable PDEs. While it is not possible to list here the vast literature, the interested reader is encouraged to see the recent survey article of Matveev~\cite{Matveev-gap} on finite gap theory, a not so recent survey article by Krichever~\cite{Krichever-survey} and his previous work \cite{Krichever-previous, Krichever-elliptic}.

Integrability of the periodic Toda lattice consisting of $N$ particles, with
\begin{equation}
a_{n+N}=a_n,\quad b_{n+N} = b_n
\end{equation}
was first established by H\'{e}non~\cite{Henon} in 1974 (see also \cite{Flaschka-McLaughlin}, where Flaschka and McLaughlin obtained the action-angle variables, and \cite{vanMoerbeke-periodic}). In 1976 Dubrovin, Matveev, and Novikov~\cite{DMN} and Date and Tanaka~\cite{Date-Tanaka} simultaneously gave the explicit formula for the diagonal elements $b_n(t)$ of the solution of the periodic Toda lattice in terms of Riemann theta functions. 3 years after this development Krichever gave the explicit formulae (again in terms of theta functions) for the off-diagonal elements $a_n(t)$~\cite{KricheverTodaPaper}. In his work, Krichever provides the solution for both quasi-periodic and the periodic Toda lattices. In these cases the spectrum of the Lax operator consists of finitely many bands on the real line and the scattering data is generally referred as ``algebro-geometric'' data obtained from the resulting Riemann surface. For a detailed treatment on algebro-geometric solution methods for the periodic and quasi-periodic Toda lattice (and the entire Toda hierarchy), we refer the reader to \cite{GHMTv2} and the references therein.

\section{Darboux \& B\"acklund transformations. Derivation of discrete systems}
\label{sec:3}
B\"acklund and Darboux (or Darboux type) transformations originate from differential geometry of surfaces in the nineteenth century, and they constitute an important and very well studied connection with the modern soliton theory and the theory of integrable systems. In the modern theory of integrable systems, these transformations are used to generate solutions of partial differential equations, starting from known solutions, even trivial ones. In fact, Darboux transformations apply to systems of linear equations, while B\"acklund transformations are generally related to systems of nonlinear equations.

For further information on B\"acklund and Darboux transformations we indicatively refer to \cite{GuChaohao, Matveev-Salle, Rog-Schief} (and the references therein).


\subsection{Darboux transformations}
In 1882 Jean Gaston Darboux \cite{Darboux} presented the so-called ``Darboux theorem'' which states that a Sturm-Liouville problem is covariant with respect to a linear transformation. In the recent literature, this is called the \textit{Darboux transformation} \cite{Matveev-Salle, Rog-Schief}. The first book devoted to the relation between Darboux transformations and the soliton theory is that of Matveev and Salle \cite{Matveev-Salle}.

\subsubsection{Darboux's theorem}
Darboux's original result is related to the so-called \textit{one-dimensional, time-independent Schr\"odinger} equation, namely
\begin{equation}\label{Sturm-Liouville}
y''+(\lambda-u)y=0, \quad u=u(x),
\end{equation}
which can be found in the literature as a \textit{Sturm-Liouville problem} of finding eigenvalues and eigenfunctions. Moreover, we refer to $u$ as a \textit{potential function}, or just \textit{potential}.

In particular we have the following.

\begin{theorem}(Darboux)\label{DarbouxTheorem}
Let $y_1=y_1(x)$ be a particular integral of the Sturm-Liouville problem (\ref{Sturm-Liouville}), for the value of the spectral parameter $\lambda=\lambda_1$. Consider also the following (Darboux) transformation
\begin{equation}\label{DT}
y\mapsto y[1]:=\left(\frac{d}{dx}-l_1\right)y,
\end{equation}
of an arbitrary solution, $y$, of (\ref{Sturm-Liouville}),  where $l_1= l_1(y_1)=y_{1,x}y_1^{-1}$ is the logarithmic derivative of $y_1$. Then, $y[1]$ obeys the following equation
\begin{subequations}\label{transEq}
\begin{align}
y''[1]+&(\lambda-u[1])y[1]=0,\label{transEq-1}
\intertext{where $u[1]$ is given by}
&u[1]=u-2l_1'.\label{transEq-2}
\end{align}
\end{subequations}
\end{theorem}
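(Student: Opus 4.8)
The plan is to verify the claim by direct computation, the essential structural input being that the logarithmic derivative $l_1 = y_1'/y_1$ satisfies a Riccati equation inherited from the fact that $y_1$ solves \eqref{Sturm-Liouville} at $\lambda = \lambda_1$. First I would record this Riccati relation: differentiating $l_1 = y_1'/y_1$ gives $l_1' = y_1''/y_1 - l_1^2$, and substituting $y_1'' = (u - \lambda_1)y_1$ from \eqref{Sturm-Liouville} yields
\begin{equation*}
l_1' + l_1^2 = u - \lambda_1.
\end{equation*}
Differentiating this relation once more produces the identity
\begin{equation*}
u' = l_1'' + 2 l_1 l_1',
\end{equation*}
which, as I will see, is precisely the residual identity needed to close the computation. (Here primes denote differentiation in $x$.)

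Next I would compute the first two derivatives of the transformed function $y[1] = y' - l_1 y$ from \eqref{DT}, treating $y$ as an arbitrary solution of \eqref{Sturm-Liouville} for a \emph{generic} $\lambda$, so that $y'' = (u-\lambda)y$ and hence $y''' = u' y + (u - \lambda) y'$. Carrying out the differentiation and substituting for $y''$ gives
\begin{equation*}
y[1]'' = y''' - l_1'' y - 2 l_1' y' - l_1 y'' = u' y + (u-\lambda) y' - l_1'' y - 2 l_1' y' - l_1 (u-\lambda) y.
\end{equation*}

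Finally I would insert this into the target equation \eqref{transEq-1}, i.e.\ check that $y[1]'' = (u[1] - \lambda)\, y[1]$ with $u[1] = u - 2 l_1'$ as in \eqref{transEq-2}. Expanding the right-hand side $(u - 2 l_1' - \lambda)(y' - l_1 y)$ and subtracting, every term carrying $y'$ and every term of the form $(u-\lambda) l_1 y$ cancels in pairs, leaving exactly the scalar requirement $u' - l_1'' = 2 l_1 l_1'$ as coefficient of $y$. This is precisely the differentiated Riccati identity established in the first step, so the equation holds identically and the proof is complete.

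The computation is entirely elementary; the only points requiring care are the third-derivative bookkeeping and the recognition that the leftover terms collapse to $\frac{d}{dx}\big(l_1' + l_1^2\big) = u'$ rather than to something that would force the original equation at $\lambda_1$ to be invoked a second time. I expect this reorganization—reducing the whole identity to the differentiated Riccati relation—to be the conceptual crux, after which no genuine obstacle remains.
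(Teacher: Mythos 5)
Your proof is correct: the algebra checks out, and reducing the covariance claim to the differentiated Riccati identity $u' = l_1'' + 2l_1 l_1'$ (which follows from $l_1' + l_1^2 = u - \lambda_1$, itself a consequence of $y_1$ solving \eqref{Sturm-Liouville} at $\lambda=\lambda_1$) is exactly the right organizing principle. Note that the paper states Darboux's theorem without any proof, so your direct verification supplies an argument the text omits; it is the standard one, and there is nothing in the paper to contrast it with.
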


Darboux's theorem states that function $y[1]$ given in (\ref{DT}) obeys a Sturm-Liouville problem of the same structure with (\ref{Sturm-Liouville}), namely the same equation (\ref{Sturm-Liouville}) but with an updated potential $u[1]$. In other words, equation (\ref{Sturm-Liouville}) is covariant with respect to the Darboux transformation, $y\mapsto y[1]$, $u\mapsto u[1]$.

\subsubsection{Darboux transformation for the KdV equation}
The significance of the Darboux theorem lies in the fact that transformation (\ref{DT}) maps solutions of a Sturm-Liouville equation (\ref{Sturm-Liouville}) to other solutions of the same equation, which allows us to construct hierarchies of such solutions. At the same time, the theorem provides us with a relation between the ``old'' and the ``new'' potential. In fact, if the potential $u$ obeys a nonlinear ODE (or more importantly a nonlinear PDE \footnote{Potential $u$ may depend on a temporal parameter $t$, namely $u=u(x,t)$.}), then relation (\ref{transEq}) may allow us to construct new non-trivial solutions starting from trivial ones, such as the zero solution.

\begin{example}\normalfont
Consider the Sturm-Liouville equation (\ref{Sturm-Liouville}) in the case where the potential, $u$, satisfies the KdV equation. Therefore, both the eigenfunction $y$ and the potential $u$ depend on $t$, which slips into their expressions as a parameter.

In this case, equation (\ref{Sturm-Liouville}) is nothing else but the spatial part of the Lax pair for the KdV equation, namely:
\begin{equation}\label{x-LaxKdV}
\textbf{L} y=\lambda y \quad \text{or}\quad y_{xx}+(\lambda-u(x,t))y=0.
\end{equation}

Now, according to theorem \ref{DarbouxTheorem}, for a known solution of the KdV equation, say $u$, we can solve (\ref{Sturm-Liouville}) to obtain $y=y(x,t;\lambda)$. Evaluating at $\lambda=\lambda_1$, we get $y_1(x,t)=y(x,t;\;\lambda_1)$ and thus, using equation (\ref{transEq-2}), a new potential $u[1]$. Therefore, we simultaneously obtain new solutions, $(y[1],u[1])$, for both the linear equation (\ref{x-LaxKdV}) and the KdV equation\footnote{Potential $u[1]$ is a solution of the KdV equation, since it can be readily shown that the pair $(y[1],u[1])$ also satisfies the temporal part of the Lax pair for KdV.}, which are given by
\begin{subequations}\label{eqy1u1}
\begin{align}
y[1]&=(\partial_x-l_1)y,\label{eqy1u1-1}\\
u[1]&=u-2l_{1,x},        \label{eqy1u1-2}
\end{align}
\end{subequations}
respectively.

Now, applying the Darboux transformation once more, we can construct a second solution of the KdV equation in a fully algebraic manner. Specifically, first we consider the solution $y_2[1]$, which is $y[1]$ evaluated at $\lambda=\lambda_2$, namely
\begin{equation}
y_2[1]=(\partial_x-l_1)y_2.
\end{equation}
where $y_2=y(x,t;\lambda_2)$
Then, we obtain a second pair of solutions, $(y[2],u[2])$, for (\ref{x-LaxKdV}) and the KdV equation, given by
\begin{subequations}
\begin{align}
y[2]&=(\partial_x-l_2)y[1]\stackrel{(\ref{eqy1u1-1})}{=}(\partial_x-l_2)(\partial_x-l_1)y,\\
u[2]&=u[1]-2l_{2,x}\stackrel{(\ref{eqy1u1-2})}{=}u-2(l_{1,x}+l_{2,x}).
\end{align}
\end{subequations}

This procedure can be repeated successively, in order to construct hierarchies of solutions for the KdV equation, namely
\begin{equation}
(y[1],u[1])\rightarrow(y[2],u[2])\rightarrow\cdots\rightarrow(y[n],u[n])\rightarrow\cdots,
\end{equation}
where $(y[n],u[n])$ are given by
\begin{equation}\label{eqynun}
y[n]=\left(\prod_{k=1}^{\stackrel{\curvearrowleft}{n}}(\partial_x-l_k)\right)y,\quad u[n]=u-2\sum_{k=1}^n(l_{k,x}),
\end{equation}
where ``$\curvearrowleft$" indicates that the terms of the above ``product" are arranged from the right to the left.
\end{example}

In these notes, we understand Darboux transformations as gauge-like transformations which depend on a spectral parameter. In fact, as we shall see in the next chapter, their dependence on the spectral parameter is essential to construct discrete integrable systems.


\subsection{B\"acklund transformations}
As mentioned earlier, B\"acklund transformations  originate in differential geometry in the 1880s and, in particular, they arose as certain transformations between surfaces.

In the theory of integrable systems, they are seen as relations between solutions of the same PDE (auto-BT) or as relations between solutions of two different PDEs (hetero-BT). Regarding the nonlinear equations which have Lax representation, Darboux transformations apply to the associated linear problem (Lax pair), while B\"acklund transformations are related to the nonlinear equation itself. Therefore, unlike DTs, BTs do not depend on the spectral parameter which appears in the definition of the Lax pair. Yet, both DTs  and BTs serve the same purpose; they are used to construct non-trivial solutions starting from trivial ones.

\begin{definition} (BT-loose Def.) Consider the following partial differential equations for $u$ and $v$:
\begin{subequations}
\begin{align}
F(u,u_x,u_t,u_{xx},u_{xt},\ldots)&=0,\\
G(v,v_x,v_t,v_{xx},v_{xt},\ldots)&=0.
\end{align}
\end{subequations}
Consider also the following pair of relations
\begin{equation}
\mathcal{B}_i(u,u_x,u_t,\ldots,v,v_x,v_t,\ldots)=0,
\end{equation}
between $u$, $v$ and their derivatives. If $\mathcal{B}_i=0$ is integrable for $v$, $\mod \langle F=0 \rangle$, and the resulting $v$ is a solution of $G=0$, and vice versa, then it is an hetero-B\"acklund transformation. Moreover, if $F\equiv G$, the relations $\mathcal{B}_i=0$ is an auto-B\"acklund transformation.
\end{definition}

The simplest example of BT are the well-known Cauchy-Riemann relations in complex analysis, for the analyticity of a complex function, $f=u(x,t)+v(x,t)i$.

\begin{example}\normalfont (Laplace equation)
Functions $u=u(x,t)$ and $v=v(x,t)$ are harmonic, namely
\begin{equation}\label{Laplace}
\nabla^2u=0, \quad \nabla^2v=0,
\end{equation}
if the following Cauchy-Riemann relations hold
\begin{equation}\label{Cauchy-Riemann}
u_x=v_t, \quad u_t=-v_x.
\end{equation}
The latter equations constitute an auto-BT for the Laplace equation (\ref{Laplace}) and can be used to construct solutions of the same equations, starting with known ones. For instance, consider the simple solution $v(x,t)=xt$. Then, according to (\ref{Cauchy-Riemann}), a second solution of (\ref{Laplace}), $u$, has to satisfy $u_x=x$ and $u_t=-t$. Therefore, $u$ is given by
\begin{equation}
u=\frac{1}{2}(x^2-t^2).
\end{equation}
\end{example}

However, even though Laplace's equation is linear, the same idea works for nonlinear equations.


\subsubsection{B\"acklund transformation for the PKdV equation}
An auto-B\"acklund transformation associated to the PKdV equation,
\begin{equation}\label{pkdv}
u_t=6 u_x^2-u_{xxx},
\end{equation}
is given by the following relations
\begin{align}\label{BTKdV}
\mathcal{B}_\alpha:
\begin{cases}
\left(u+v\right)_x&=2\alpha+\frac{1}{2}(u-v)^2,\\
\left(u-v\right)_t&=3(u_x^2-v_x^2)-(u-v)_{xxx},
\end{cases}
\end{align}
which was first presented in 1973 in a paper of Wahlquist and Estabrook \cite{Wahlquist-Estabrook}. In this subsection we show how we can construct algebraically a solution of the PKdV equation, using Bianchi's permutability.

\begin{remark}\normalfont
We shall refer to the first equation of (\ref{BTKdV}) as the \textit{spatial part} (or $x$-\textit{part}) of the BT, while we refer to the second one as the \textit{temporal part} (or $t$-\textit{part}) of the BT.
\end{remark}

\begin{exercise}
Show that relations (\ref{BTKdV}) constitute an auto-B\"acklund transformation for the PKdV equation \eqref{pkdv}.
\end{exercise}

Now, let $u=u(x,t)$ be a function satisfying the PKdV equation. Focusing on the spatial part of the BT (\ref{BTKdV}), we can construct a new solution of the PKdV equation, $u_1=\mathcal{B}_{\alpha_1}(u)$, i.e.
\begin{equation}
\left(u_1+u\right)_x=2\alpha_1+\frac{1}{2}(u_1-u)^2.\label{BTKdVu1}
\end{equation}
Moreover, using another parameter, $\alpha_2$, we can construct a second solution $u_2=\mathcal{B}_{\alpha_2}(u)$, given by
\begin{equation}
\left(u_2+u\right)_x=2\alpha_2+\frac{1}{2}(u_2-u)^2.\label{BTKdVu2}
\end{equation}

Starting with the solutions $u_1$ and $u_2$, we can construct two more solutions from relations $u_{12}=\mathcal{B}_{\alpha_2}(u_1)$ and $u_{21}=\mathcal{B}_{\alpha_1}(u_2)$, namely
\begin{subequations}
\begin{eqnarray}
\left(u_{12}+u_1\right)_x&=&2\alpha_2+\frac{1}{2}(u_{12}-u_1)^2,\label{BTKdVu12}\\
\left(u_{21}+u_2\right)_x&=&2\alpha_1+\frac{1}{2}(u_{21}-u_2)^2,\label{BTKdVu21}
\end{eqnarray}
\end{subequations}
as represented in Figure \ref{BianchiPer}-(a).

\begin{figure}[ht]
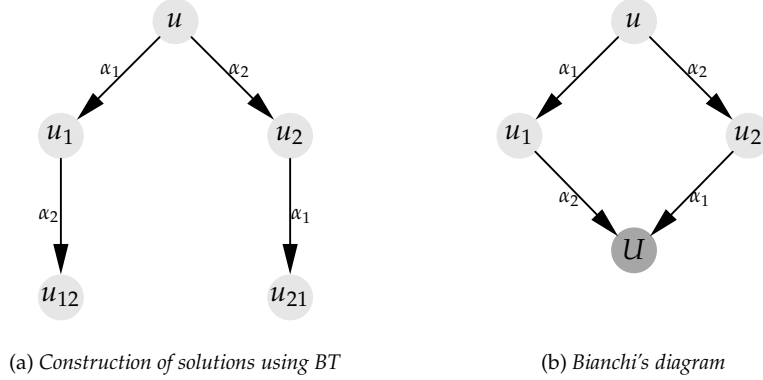

\centering
\centertexdraw{ \setunitscale 0.6
\move (0 1)\fcir f:.9 r:.2  \move (-1 0)\fcir f:.9 r:.2 \move (1 0)\fcir f:.9 r:.2 \move (-1 -1.41)\fcir f:.9 r:.2 \move (1 -1.41)\fcir f:.9 r:.2
\move (4 1)\fcir f:.9 r:.2  \move (3 0)\fcir f:.9 r:.2 \move (5 0)\fcir f:.9 r:.2 \move (4 -1)\fcir f:.65 r:.2
\textref h:C v:C \htext(0 1){$u$} \textref h:C v:C \htext(-1 0){$u_1$} \textref h:C v:C \htext(1 0){$u_2$} \textref h:C v:C \htext(-1 -1.41){$u_{12}$} \textref h:C v:C \htext(1 -1.41){$u_{21}$}
\textref h:C v:C \htext(4 1){$u$} \textref h:C v:C \htext(3 0){$u_1$} \textref h:C v:C \htext(5 0){$u_2$} \textref h:C v:C \htext(4 -1){$U$}
\move (-.14 .86) \arrowheadtype t:F \avec(-.86 .14) \move (.14 .86) \arrowheadtype t:F \avec(.86 .14)
\move (-1 -.2) \arrowheadtype t:F \avec(-1 -1.21) \move (1 -.2) \arrowheadtype t:F \avec(1 -1.21)
\move (3.86 .86) \arrowheadtype t:F \avec(3.14 .14) \move (4.14 .86) \arrowheadtype t:F \avec(4.86 .14)
\move (3.14 -.14) \arrowheadtype t:F \avec(3.86 -0.88)
\move (4.86 -.14) \arrowheadtype t:F \avec (4.14 -0.88)
\textref h:C v:C \scriptsize{\htext(0 -2){(a) \textit{Construction of solutions using BT}}}
\textref h:C v:C \scriptsize{\htext(4 -2){(b) \textit{Bianchi's diagram}}}
\textref h:C v:C \htext(-.56 .56){$\alpha_1$}
\textref h:C v:C \htext(.56 .56){$\alpha_2$}
\textref h:C v:C \htext(-1.1 -.705){$\alpha_2$}
\textref h:C v:C \htext(1.1 -.705){$\alpha_1$}
\textref h:C v:C \htext(3.44 .56){$\alpha_1$}
\textref h:C v:C \htext(4.56 .56){$\alpha_2$}
\textref h:C v:C \htext(3.44 -.56){$\alpha_2$}
\textref h:C v:C \htext(4.576 -.56){$\alpha_1$}
}
\caption{Bianchi's permutability}\label{BianchiPer}
\end{figure}

Nevertheless, the above relations need integration in order to derive the actual solutions $u_1$, $u_2$ and, in retrospect, solutions $u_{12}$ and $u_{21}$. Yet, having at our disposal solutions $u_1$ and $u_2$, a new solution can be constructed using Bianchi's permutativity (see Figure \ref{BianchiPer}-(b)) in a purely algebraic way. Specifically, we have the following.

\begin{proposition}
Imposing the condition $u_{12}=u_{21}$, the BTs \{(\ref{BTKdVu1})-(\ref{BTKdVu21})\} imply the following solution of the PKdV equation:
\begin{equation}
U=u-4\frac{\alpha_1-\alpha_2}{u_1-u_2}.
\end{equation}
\end{proposition}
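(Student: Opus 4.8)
The plan is to work entirely with the four spatial ($x$-part) relations \eqref{BTKdVu1}--\eqref{BTKdVu21} and to eliminate every $x$-derivative by computing the single quantity $(u_1-u_2)_x$ in two different ways. First I would subtract \eqref{BTKdVu2} from \eqref{BTKdVu1} to obtain
\[
(u_1-u_2)_x = 2(\alpha_1-\alpha_2) + \tfrac{1}{2}\big[(u_1-u)^2-(u_2-u)^2\big].
\]
Next, after imposing the closing condition $u_{12}=u_{21}=:U$ in \eqref{BTKdVu12} and \eqref{BTKdVu21}, I would subtract the resulting two relations to get a second expression for the same derivative,
\[
(u_1-u_2)_x = 2(\alpha_2-\alpha_1) + \tfrac{1}{2}\big[(U-u_1)^2-(U-u_2)^2\big].
\]
Equating these two expressions removes the derivative completely and leaves a purely algebraic identity in $U$, $u$, $u_1$, $u_2$, and the parameters.

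The key algebraic step is then to factor each bracketed term as a difference of squares. I would write $(u_1-u)^2-(u_2-u)^2=(u_1+u_2-2u)(u_1-u_2)$ and likewise $(U-u_1)^2-(U-u_2)^2=-(2U-u_1-u_2)(u_1-u_2)$. Substituting these into the equated identity, every surviving term carries the common factor $(u_1-u_2)$ except for the constant $4(\alpha_1-\alpha_2)$ collected from the two parameter contributions. The two quadratic factors then telescope: the combination $(2U-u_1-u_2)+(u_1+u_2-2u)$ collapses to $2(U-u)$, so the identity reduces to $4(\alpha_1-\alpha_2)=-(u_1-u_2)(U-u)$.

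Finally, dividing by $u_1-u_2$ (legitimate wherever the two intermediate solutions are distinct) and solving for $U$ yields precisely $U=u-4(\alpha_1-\alpha_2)/(u_1-u_2)$, the claimed nonlinear superposition formula. I do not anticipate a genuine obstacle, since the whole argument is algebraic once the derivative has been eliminated; the only point requiring care is choosing the pairing of the four B\"acklund relations that makes $(u_1-u_2)_x$ appear identically on both sides, after which the difference-of-squares factorization produces the clean telescoping. It is worth noting that the temporal ($t$-part) relations of the B\"acklund transformation play no role here --- Bianchi's permutability is a consequence of the spatial parts alone.
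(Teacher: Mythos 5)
Your proof is correct and follows essentially the same route as the paper's: the paper likewise subtracts \eqref{BTKdVu2} and \eqref{BTKdVu21} from \eqref{BTKdVu1} and \eqref{BTKdVu12} respectively (so that $U_x$ and then $(u_1-u_2)_x$ drop out) and subtracts the resulting equations, which is exactly your ``compute $(u_1-u_2)_x$ two ways and equate'' step followed by the difference-of-squares factorization. Your added remarks --- that $u_1\neq u_2$ is needed for the final division and that the temporal part of the BT is never used --- are correct refinements of the paper's one-line argument.
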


\begin{proof}
It is straightforward calculation; one needs to subtract (\ref{BTKdVu2}) and (\ref{BTKdVu21}) by (\ref{BTKdVu1}) and (\ref{BTKdVu12}), respectively, and subtract the resulting equations.
\end{proof}

Later on we study the Darboux transformations for the NLS equation, and we shall see that BT arise naturally in the derivation of DT.

\begin{exercise} (\textit{B\"acklund transform for the sine-Gordon equation})
Show that the following relations
	\begin{align}\label{BTSG}
\mathcal{B}_\alpha:
\begin{cases}
\left(\frac{u+v}{2}\right)_x&=\alpha \sin\left(\frac{u-v}{2}\right),\\
\left(\frac{u-v}{2}\right)_t&=\frac{1}{\alpha} \sin\left(\frac{u+v}{2}\right),
\end{cases}
\end{align}
between functions $u=u(x,t)$ and $v=v(x,t)$, constitute an auto-BT for the sine-Gordon equation:
\begin{equation}\label{SG-eq}
u_{xt}=\sin u.
\end{equation}
\item Following the same procedure as for the PKdV equation, namely using the BT (\ref{BTSG}) and Bianchi's permutability, show that:
\begin{equation}\label{U-solSG}
U=u+4\arctan\left[\frac{\alpha_1+\alpha_2}{\alpha_2-\alpha_1}\tan\left(\frac{u_1-u_2}{4}\right)\right],
\end{equation}
where $u$ satisfies the equation (\ref{SG-eq}), is also a solution of (\ref{SG-eq}).
\end{exercise}


\subsection{From Darboux transformations to Discrete systems}
In this section, we shall show how one can derive discrete integrable systems using Darboux transformations. In particular, our starting point will be continuous Lax operators, and by considering the associated Darboux transformations, we derive semi-discrete and fully discrete Lax operators. That is, in the derivation of Darboux transformations, we derive differential-difference (B\"acklund transformations, \cite{levi1980, levi1981}) and difference-difference integrable systems.

\subsubsection{Lax-Darboux scheme}
With the single term \textit{Lax-Darboux scheme}, we describe several structures which are related to each other and all of them are related to integrability. To be more precise, the Lax-Darboux scheme incorporates Lax operators, corresponding Darboux matrices and Darboux transformations, as well as the Bianchi permutability of the latter transformations.

In what follows we present the basic points of the scheme:

\begin{itemize}
	\item We consider Lax operators of the following AKNS form:
	\begin{equation*}
\mathbf{L}:=D_x+U(p,q;\lambda), \quad p=p(x),\, q=q(x),
\end{equation*}
where $U$ is a matrix which belongs in the Lie algebra ${\mathfrak{sl}}(2,{\mathbb{C}})$.
\begin{remark}\normalfont
The abbreviation AKNS is due to Ablowitz, Kaup, Newell and Segur who solved the sine-Gordon equation \eqref{SG-eq} (see \cite{AKNS}) writing it as compatibility condition of a set of Lax operators in the form $\textbf{L}=D_x+U$ and $\textbf{T}=D_t+V$, where $U$ and $V$ are certain matrices. Moreover, they generalized this method to cover a number of PDEs \cite{AKNS2}. It is worth mentioning that the authors of 
\cite{AKNS, AKNS2} were motivated by Zakharov and Shabat \cite{ZS-1} who applied the inverse scattering method to the nonlinear Schr\"odinger equation.
\end{remark}

\item By \textit{Darboux transformation}, here, we understand a map
\begin{equation}\label{DarbouxDef}
\mathbf{L} \rightarrow M\mathbf{L}M^{-1}=\mathbf{L}_1:=D_x+\underbrace{U(p_{10},q_{10};\lambda)}_{U_{10}}.
\end{equation}
Matrix $M$ is called \textit{Darboux matrix} and satisfies the following equation (see exercise \ref{exDefDar}).
\begin{equation}\label{DMeq}
D_xM+U_{10}M-MU=0,\quad M=M(p,q,p_{10},q_{10},f;\lambda).
\end{equation}
The Darboux matrix $M$ depends on the ``old" potentials, $p,q$ and the ``updated" ones, $p_{10},q_{10}$. It also depends on the spectral parameter and may depend on an auxiliary function, as we shall see later in the example.

\textit{Darboux transformations} consist of \textit{Darboux matrices} $M$ along with corresponding \textit{B\"acklund transformations} (or dressing chains).

\begin{remark}\normalfont
It is obvious from relations (\ref{DarbouxDef})-(\ref{DMeq}) that, for a given operator $\mathbf{L}$ we cannot determine $M$ in full generality, and we need to make an ansatz. In what follows, we shall be assuming that $M$ has the same dependence on the spectral parameter as $\mathbf{L}$.
\end{remark}

\begin{exercise}\label{exDefDar}
 Show that according to (\ref{DarbouxDef}) for a Darboux transformation, the corresponding Darboux matrix satisfies equation (\ref{DMeq}).
\end{exercise}

\item The associated B\"acklund transformation is a set of differential equations relating the potentials and the auxiliary functions involved in $\mathbf{L}$ and 
$\widetilde{\mathbf{L}}_1$. It can be regarded as integrable systems of differential-difference equations D$\Delta$Es \cite{levi1980, levi1981}.  This follows from the interpretation of the corresponding Darboux transformation defining a sequence
$$\cdots \stackrel{M}{\rightarrow} (p_{-10},q_{-10}) \stackrel{M}{\rightarrow} (p,q) \stackrel{M}{\rightarrow} (p_{10},q_{10})  \cdots.$$

\item A Darboux matrix $M$ maps a fundamental solution of the equation $\mathbf{L}\Psi=0$, where $\Psi=\Psi(x,\lambda)$ to a fundamental solution $\Psi_{10}$ of $\mathbf{L}_1\Psi_{10}=0$ according to $\Psi_{10}=M\Psi$. In general, matrix $M$ is invertible and depends on $p$ and $q$, their updates $p_{10}$ and $q_{10}$, the spectral parameter $\lambda$, and some auxiliary functions, i.e. $M=M(p,q,p_{10},q_{10};\lambda)$.

Moreover, the determinant of the Darboux matrix is independent of $x$ (see exercise \ref{fundSolDet}).

\begin{exercise}\label{fundSolDet}
 Using definition (\ref{DarbouxDef}), where $U\in {\mathfrak{sl}}(2,{\mathbb{C}})$ show that matrix $M$:
\begin{enumerate}
		\item maps fundamental solutions of $\mathbf{L}\Psi=0$ to fundamental solutions of $\mathbf{L}_1\Psi=0$.
		\item has determinant independent of $x$, namely $\partial_x(\det M)=0$. (Hint: Use Liouville's well-known formula for the determinant of the solution of the equation $\frac{d}{dx}\Psi=M\Psi$).
\end{enumerate}
\end{exercise}

\item We employ Darboux Matrices to derive two new fundamental solutions $\Psi_{10}$ and $\Psi_{01}$:
We can employ Darboux Matrices to derive two new fundamental solutions $\Psi_{10}$ and $\Psi_{01}$:
\begin{equation*}
\Psi_{10}=M(p,q,p_{10},q_{10};\lambda)\Psi \equiv M\Psi, \quad \Psi_{01}=M(p,q,p_{01},q_{01};\lambda)\Psi \equiv K\Psi
\end{equation*}

\item Considering compatibility around the square, a third solution can be derived as descriped in the \textit{Bianchi-type diagram} \ref{fig:bianchi}, where
$M_{01}:=M(p_{01},q_{01},p_{11},q_{11};\lambda)$ and $K_{10}:=M(p_{10},q_{10},p_{11},q_{11};\lambda)$.
\begin{figure}[ht!]
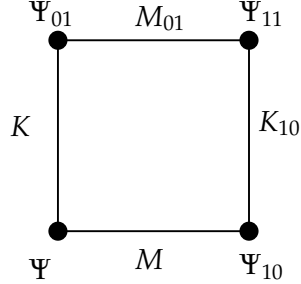

\centertexdraw{ \setunitscale 0.5
\linewd 0.02 \arrowheadtype t:F
\htext(0 0.3) {\phantom{T}}
\move (-1 -2) \lvec (1 -2)
\move(-1 -2) \lvec (-1 0) \move(1 -2) \lvec (1 0) \move(-1 0) \lvec(1 0)
\move (1 -2) \fcir f:0.0 r:0.1 \move (-1 -2) \fcir f:0.0 r:0.1
 \move (-1 0) \fcir f:0.0 r:0.1 \move (1 0) \fcir f:0.0 r:0.1
\htext (-1.3 -2.5) {$\Psi$} \htext (.9 -2.5) {$\Psi_{10}$} \htext (-.2 -2.4) {$M$}
\htext (-1.3 .15) {$\Psi_{01}$} \htext (.9 .15) {$\Psi_{11}$} \htext (-.2 .1) {$M_{01}$}
\htext (-1.5 -1) {$K$} \htext (1.1 -1) {$K_{10}$}}
\caption{{\em{Bianchi commuting diagram}}} \label{fig:bianchi}
\end{figure}

The compatibility around the square yields the following condition:
\begin{equation}
M_{01}K-K_{01}M=0,
\end{equation}
If the latter condition is written out explicitly, it results in algebraic relations among the various potentials involved, namely a system of P$\Delta$E equations.

We can interpret the above construction in a discrete way. Particularly, let us assume that $p$ and $q$ are functions depending not only on $x$ but also on two discrete variables $n$ and $m$, i.e. $p=p(x;n,m)$ and $q=q(x;n,m)$. Furthermore, we define the \textit{shift operators} $\mathcal{S}$ and $\mathcal{T}$ acting on a function $f=f(n,m)$ as
\begin{equation}\label{shiftST}
f_{10}:=\mathcal{S}f(n,m)=f(n+1,m), \qquad f_{01}:=\mathcal{T}f(n,m)=f(n,m+1).
\end{equation}
In general,
\begin{equation}\label{shiftSiTj}
f_{ij}:=\mathcal{S}^iT^jf(n,m)=f(n+i,m+j)
\end{equation}
We shall refer to $\mathcal{S}$ and $\mathcal{T}$ as the \textit{shift operators in the} $n$ and the $m$-\textit{direction}, respectively.
\end{itemize}

\subsubsection{An example: Lax-Darboux scheme for the nonlinear Schr\"odinger equation}\label{sec-NLS}
Consider the following Lax operator
\begin{subequations}\label{NLS-U}
\begin{align}
&\mathbf{L} := D_x + U(p,q;\lambda)=D_x +\lambda U^{1}+U^{0},
\intertext{where $U^{1}$ and $U^0$ are given by}
&U^1\equiv \sigma_3={\diag}(1,-1),\quad U^0=\left(\begin{array}{cc} 0 & 2p \\ 2q & 0\end{array}\right).
\end{align}
\end{subequations}
Operator (\ref{NLS-U}) constitutes the spatial part of the Lax pair for the nonlinear Schr\"odinger equation,
\begin{equation*}
p_t=p_{xx}+4p^2q,~~~q_t=-q_{xx}-4pq^2.
\end{equation*}

Now, we seek \textit{elementary} Darboux transformation, namely Darboux transformation that cannot be decomposed to others. We assume that the associated Darboux matrix, $M$, has linear $\lambda$-dependence, namely it is of the form $M=\lambda M_{1}+M_{0}$. Substitution of the latter to equation (\ref{DMeq}) implies a second order algebraic equation in $\lambda$. Equating the coefficients of the several powers of $\lambda$ equal to zero, we obtain the following system of equations
\begin{subequations}
\begin{align}
\lambda^2: & \quad \left[\sigma_3,M_1\right]=0,\label{M-syst-1}\\
\lambda^1:   & \quad M_{1,x}+\left[\sigma_3,M_0\right]+U^{0}_{10}M_1-M_1U^0=0,\label{M-syst-2}\\
\lambda^0: & \quad M_{0,x}+U^0_{10}M_0-M_0U^0=0,\label{M-syst-3}
\end{align}
\end{subequations}
where with $\left[\sigma_3,M_1\right]$ we denote the commutator of $\sigma_3$ and $M_1$.

Equation (\ref{M-syst-1}) implies that $M_1$ must be diagonal, i.e. $M_1=\diag(c_1,c_2)$. Then, we substitute $M_1$ to (\ref{M-syst-2}).

Now, for simplicity of the notation, we denote the $(1,1)$ and $(2,2)$ entries of $M_0$ by $f$ and $g$ respectively. Then, it follows from equation (\ref{M-syst-3}) that the entries of matrix $M_0$ must satisfy the following equations
\begin{subequations} \label{fgpq10}
\begin{eqnarray}
\partial_x f&=&2 (M_{0,12}q-p_{10}M_{0,21}),\label{fgpq10-1}\\
\partial_x g&=&2 (M_{0,21}p-q_{10}M_{0,12}),\label{fgpq10-2}\\
\partial_x M_{0,12} &=&2 (p f -g p_{10}), \label{fgpq10-3}\\
\partial_x M_{0,21}&=&2 ( q g-q_{10}f).\label{fgpq10-4}
\end{eqnarray}
\end{subequations}

The off-diagonal part of (\ref{M-syst-2}) implies that the $(1,2)$ and $(2,1)$ entries of matrix $M_0$ are given by
\begin{equation}\label{M012M021}
M_{0,12}=c_1 p-c_2 p_{10}, \quad M_{0,21}=c_1q_{10}-c_2 q.
\end{equation}
Additionally, from the diagonal part of (\ref{M-syst-2}) we deduce that $c_{1,x}=c_{2,x}=0$. Therefore, $c_i$, $i=1,2$ can be either zero or nonzero. Thus, after rescaling we can choose either $c_1=1$, $c_2=0$ ($\rank{M_1}=1$), or $c_1=0$, $c_2=-1$ ($\rank{M_1}=1$), or $c_1=1, c_2=-1$ ($\rank{M_1}=2$). The first two choices correspond to gauge equivalent elementary Darboux matrices and the third one leads to a Darboux matrix which can be given as a composition of the two previous elementary matrices.

Indeed, the choice $c_1=1$, $c_2=0$ implies $M_{0,12}=p$ and $M_{0,21}=q_{10}$. Moreover, (\ref{fgpq10-2}) implies that $g=\text{const.}=\alpha$, i.e.
\begin{equation}
M_0=\left(\begin{array}{cc}f & p\\
q_{10} & \alpha \end{array}\right).
\end{equation}
In this case the Darboux matrix is given by
\begin{equation}\label{M-NLS-const.}
\lambda \left(\begin{array}{cc}1 & 0\\0 & 0\end{array}\right)+\left(\begin{array}{cc}f & p\\q_{10} & \alpha \end{array}\right),
\end{equation}
where, according to (\ref{fgpq10}), its entries satisfy
\begin{equation}\label{nls-comp-cond-alpha}
\partial_x f=2 (pq-p_{10}q_{10}),\quad \partial_x p =2 (p f -\alpha p_{10}),\quad \partial_x q_{10}=2 ( cq -q_{10}f).
\end{equation}

Now, if $\alpha\neq 0$, it can be rescaled to $\alpha=1$.

In the case where $\alpha=0$, from (\ref{nls-comp-cond-alpha}) we deduce
\begin{equation}\label{pxq}
p_x=2fp,\quad q_{10,x}=-2fq_{10}.
\end{equation}
Thus, the Darboux matrix in this case is given by
\begin{equation}
M(p,q_{10},f):=\lambda \left(\begin{array}{cc}1 & 0\\0 & 0\end{array}\right)+\left(\begin{array}{cc}f & p\\q_{10} & 0 \end{array}\right).
\end{equation}
In addition, after an integration with respect to $x$, equations (\ref{pxq}) imply that $q_{10}=c/p$.

In general, we have the following.

\begin{proposition}\label{M-NLS-proposition}
Let $M$ be an elementary Darboux matrix for the Lax operator (\ref{NLS-U}) and suppose it is linear in $\lambda$. Then, up to a gauge transformation, $M$ is given by
\begin{equation}\label{M-NLS}
M(p,q_{10},f):=\lambda \left(\begin{array}{cc}1 & 0\\0 & 0\end{array}\right)+\left(\begin{array}{cc}f & p\\q_{10} & 1 \end{array}\right),
\end{equation}
where the potentials $p$ and $q$ satisfy the following differential-difference equations
\begin{subequations} \label{nls-comp-cond}
\begin{eqnarray}
&& \partial_x f=2 (pq-p_{10}q_{10}),\label{nls-comp-cond-1}\\
&& \partial_x p =2 (p f -p_{10}), \label{nls-comp-cond-2}\\
&& \partial_x q_{10}=2 ( q-q_{10}f).\label{nls-comp-cond-3}
\end{eqnarray}
\end{subequations}
Moreover, matrix (\ref{M-NLS}) degenerates to
\begin{equation}\label{M-degen}
M_c(p,f)=\lambda \left(\begin{array}{cc}1 & 0\\0 & 0\end{array}\right)+\left(\begin{array}{cc}f & p\\\frac{c}{p} & 0 \end{array}\right),\quad f=\frac{p_x}{2p}.
\end{equation}
\end{proposition}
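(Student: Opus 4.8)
The plan is to carry out the ansatz-and-match computation already initiated in the discussion preceding the statement and then to make the phrase ``up to a gauge transformation'' precise. I would start from the linear ansatz $M = \lambda M_1 + M_0$ together with the Darboux equation \eqref{DMeq}; collecting powers of $\lambda$ produces the three relations \eqref{M-syst-1}--\eqref{M-syst-3}. Equation \eqref{M-syst-1} reads $[\sigma_3, M_1] = 0$, so $M_1 = \diag(c_1, c_2)$; the off-diagonal part of \eqref{M-syst-2} then yields \eqref{M012M021}, while its diagonal part forces $c_{1,x} = c_{2,x} = 0$, so that $c_1$ and $c_2$ are constants.

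The conceptual core of the statement is the reduction of the constant pair $(c_1, c_2)$ to a single canonical choice. Since $c_1, c_2$ are constant, the relevant invariant is $\rank M_1$. If $\rank M_1 = 2$, i.e. $c_1 c_2 \neq 0$, then $M$ factors as a product of two rank-one Darboux matrices and is therefore not elementary, so this case is excluded by hypothesis. The two rank-one normalizations $(c_1, c_2) = (1, 0)$ and $(c_1, c_2) = (0, -1)$ are exchanged by the discrete gauge symmetry of the AKNS operator \eqref{NLS-U} that interchanges $p \leftrightarrow q$ (conjugation by the antidiagonal involution together with $\lambda \mapsto -\lambda$); hence up to gauge I may fix $(c_1, c_2) = (1, 0)$, which collapses \eqref{M012M021} to $M_{0,12} = p$ and $M_{0,21} = q_{10}$.

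With the off-diagonal entries of $M_0$ determined, I would feed them into the $\lambda^0$ equation \eqref{M-syst-3}, whose componentwise form is \eqref{fgpq10}. Relation \eqref{fgpq10-2} becomes $\partial_x g = 2(q_{10} p - q_{10} p) = 0$, so the $(2,2)$ entry $g$ is a constant $\alpha$; writing $f$ for the $(1,1)$ entry, the remaining relations \eqref{fgpq10-1}, \eqref{fgpq10-3}, \eqref{fgpq10-4} reduce to the system \eqref{nls-comp-cond-alpha}. The proof then splits on whether $\alpha$ vanishes. For $\alpha \neq 0$ a constant rescaling normalizes $\alpha = 1$, turning \eqref{nls-comp-cond-alpha} into \eqref{nls-comp-cond} and $M$ into \eqref{M-NLS}. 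For $\alpha = 0$ the last two equations of \eqref{nls-comp-cond-alpha} reduce to \eqref{pxq}, namely $p_x = 2fp$ and $q_{10,x} = -2 f q_{10}$; the first gives $f = p_x/(2p)$, and since the two together yield $(p\, q_{10})_x = 0$, the product $p\, q_{10}$ is a constant $c$, so $q_{10} = c/p$, producing the degenerate form \eqref{M-degen}.

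I expect the only genuinely delicate point to be justifying the gauge-equivalence step rigorously: exhibiting the precise involution that identifies the two rank-one normalizations and checking that it preserves the class of admissible Lax operators \eqref{NLS-U}, together with the argument that the rank-two case genuinely decomposes as a composition of elementary factors and so falls outside the elementary hypothesis. Everything else reduces to direct substitution into \eqref{fgpq10} and a single integration in $x$, which are routine.
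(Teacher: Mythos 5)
Your proposal is correct and takes essentially the same route as the paper: the paper's own justification is precisely the derivation preceding the proposition, with the same $\lambda$-power matching, the same case analysis on constant $(c_1,c_2)$ (the rank-two choice dismissed as a composition of elementary matrices, the two rank-one choices identified via $\sigma_1$-conjugation, which the paper relegates to a subsequent exercise), and the same split into $\alpha\neq 0$ giving \eqref{M-NLS} and $\alpha=0$ giving \eqref{M-degen}. Your explicit remark that $p_x=2fp$ and $q_{10,x}=-2fq_{10}$ imply $(p\,q_{10})_x=0$ is exactly the integration the paper invokes to get $q_{10}=c/p$.
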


It is straightforward to show that system (\ref{nls-comp-cond}) admits the following first integral
\begin{equation} \label{nls-const}
\partial_x \left(f-p\,q_{10} \right)\,=\,0\,.
\end{equation}
which implies that $\partial_x\det M=0$.

\begin{exercise}
Show that the choice $c_1=0, c_2=-1$ in (\ref{M012M021}) leads to a Darboux matrix gauge equivalent to (\ref{M-NLS}), and, in particular, show that matrix $\sigma_1 N(p_{10},q,g) \sigma_1^{-1}$ is of the form (\ref{M-NLS}).
\end{exercise}

\begin{remark}\normalfont
In this case, of the nonlinear Schr\"odinger equation, an elementary Darboux transformation consists of an elementary Darboux matrix \eqref{M-NLS}, which is $\lambda$-depended, and a system of D$\Delta$Es, namely system \eqref{nls-comp-cond}. The latter is nothing else but the spatial part of a B\"acklund transformation associated to the nonlinear Schr\"odinger equation, and it does not depend on the spectral parameter, $\lambda$.
\end{remark}


\subsection{Derivation of discrete systems and initial value problems}
In this section we employ the Darboux matrices derived in the previous section to derive discrete integrable systems. We shall present only the pairs of Darboux matrices which lead to genuinely non-trivial discrete integrable systems.
For these systems we consider an initial value problem on the staircase.

\subsubsection{Nonlinear Schr\"odinger equation and related discrete systems}
Having derived two Darboux matrices for operator (\ref{NLS-U}), we focus on the one given in (\ref{M-NLS}) and consider the following discrete Lax pair
\begin{equation}\label{dLaxP}
\Psi_{10} = M \Psi,\quad \Psi_{01} = K\Psi,
\end{equation}
where $M$ and $K$ are given by
\begin{subequations} \label{NLS-disc-LP}
\begin{align}
&M\equiv M(p,q_{10},f)=\lambda \left(\begin{array}{cc} 1 & 0\\0 & 0 \end{array}\right)+\left(\begin{array}{cc} f & p\\q_{10} & 1\end{array}\right),\\
&K\equiv M(p,q_{01},g)= \lambda \left(\begin{array}{cc} 1 & 0\\ 0 & 0\end{array}\right)+\left(\begin{array}{cc} g & p\\ q_{01} & 1\end{array}\right).
\end{align}
\end{subequations}
The compatibility condition of (\ref{NLS-disc-LP}) results to
\begin{subequations} \label{nls-comp}
\begin{eqnarray}
 f_{01} -f - \left( g_{10}-g\right) = 0,&&\label{nls-comp-1}\\
 f_{01}\,g-fg_{10}-p_{10}q_{10}+p_{01}q_{01}=0,&&\\
 p \left(f_{01}-g_{10} \right)-p_{10}+p_{01}=0,&&\\
 q_{11}\left(f-g\right)-q_{01}+q_{10}=0.&&
\end{eqnarray}
\end{subequations}
This system can be solved either for $(p_{01},q_{01},f_{01},g)$ or for $(p_{10},q_{10},f,g_{10})$. In either of these cases, we derive two solutions. The first one is
\begin{equation} \label{triv-sol}
p_{10} = p_{01},\quad q_{10} = q_{01}, \quad f = g,\quad g_{10} = f_{01},
\end{equation}
which is trivial and corresponds to $M(p,q_{10},f)=M(p,q_{01},g)$.

The second solution is given by
\begin{subequations} \label{LPcompatEq}
\begin{eqnarray}
\hspace{-.3cm}&& p_{01} = \frac{q_{10} p^2 + (g_{10} - f) p + p_{10}}{1+p q_{11}},\quad
q_{01} = \frac{p_{10}{ q_{11}}^{2} + (f-g_{10}) q_{11} + q_{10}}{1+p q_{11}},  \\
\hspace{-.3cm}&& f_{01} = \frac{q_{11} (p_{10} + p g_{10}) + f - p q_{10}}{1+ p q_{11}},\quad
g = \frac{q_{11} (p f- p_{10}) + g_{10}+p q_{10}}{1+pq_{11}}.
\end{eqnarray}
\end{subequations}

The above system has some properties which take their rise in the derivation of the Darboux matrix. In particular, we have the following.

\begin{proposition}
System (\ref{LPcompatEq}) admits two first integrals, $\mathcal{F}:=f-pq_{10}$ and $\mathcal{G}:=g-pq_{01}$, and the following conservation law
	\begin{equation}\label{conlaw}
	(\mathcal{T}-1)\mathcal{F}=(\mathcal{S}-1)\mathcal{G}
	\end{equation}
\end{proposition}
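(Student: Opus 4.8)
The plan is to exploit the fact that the discrete system \eqref{LPcompatEq} is precisely the explicit form of the compatibility (Bianchi) condition for the discrete Lax pair \eqref{dLaxP}, and to read off the two conserved quantities from the determinants of the Darboux matrices. First I would record the elementary computation
\begin{equation*}
\det M = \det\begin{pmatrix} \lambda + f & p \\ q_{10} & 1\end{pmatrix} = \lambda + (f - p q_{10}) = \lambda + \mathcal{F}, \qquad \det K = \lambda + (g - p q_{01}) = \lambda + \mathcal{G}.
\end{equation*}
This is the discrete counterpart of the continuous first integral $\partial_x(f - p q_{10})=0$ recorded in \eqref{nls-const}, which encoded $\partial_x\det M = 0$.

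Next I would invoke the compatibility condition. Eliminating $\Psi_{11}$ from $\Psi_{11}=M_{01}K\Psi=K_{10}M\Psi$ gives the matrix identity $M_{01}K=K_{10}M$, where $M_{01}=\mathcal{T}M$ and $K_{10}=\mathcal{S}K$, so that $\det M_{01}=\lambda+\mathcal{F}_{01}$ and $\det K_{10}=\lambda+\mathcal{G}_{10}$. Taking determinants yields the polynomial identity in $\lambda$
\begin{equation*}
(\lambda + \mathcal{F}_{01})(\lambda + \mathcal{G}) = (\lambda + \mathcal{G}_{10})(\lambda + \mathcal{F}).
\end{equation*}
Since $\lambda$ is a free spectral parameter, I would match coefficients. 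The coefficient of $\lambda$ gives $\mathcal{F}_{01}+\mathcal{G}=\mathcal{G}_{10}+\mathcal{F}$, which rearranges to $\mathcal{F}_{01}-\mathcal{F}=\mathcal{G}_{10}-\mathcal{G}$, i.e.\ exactly the claimed conservation law $(\mathcal{T}-1)\mathcal{F}=(\mathcal{S}-1)\mathcal{G}$; the constant term gives the auxiliary relation $\mathcal{F}_{01}\mathcal{G}=\mathcal{G}_{10}\mathcal{F}$.

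To establish the two first integrals themselves I would verify them directly, rather than rely solely on combining the two coefficient relations (which only forces $\mathcal{F}_{01}=\mathcal{F}$ and $\mathcal{G}_{10}=\mathcal{G}$ off the locus $\mathcal{F}=\mathcal{G}$). Substituting the expressions for $f_{01}$ and $p_{01}$ from \eqref{LPcompatEq} into $\mathcal{F}_{01}=f_{01}-p_{01}q_{11}$, the numerator collapses: the terms carrying $p_{10}$ and $g_{10}$ cancel, and the remainder factors as $(f-pq_{10})(1+pq_{11})$, which cancels the common denominator $1+pq_{11}$ to give $\mathcal{F}_{01}=f-pq_{10}=\mathcal{F}$. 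The analogous substitution of $g$ and $q_{01}$ yields $\mathcal{G}=g_{10}-p_{10}q_{11}=\mathcal{G}_{10}$. Hence $\mathcal{F}$ is invariant under $\mathcal{T}$ and $\mathcal{G}$ under $\mathcal{S}$, so both are first integrals, and the conservation law follows at once (with both sides in fact vanishing).

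The only genuine obstacle is bookkeeping: one must apply the shift operators $\mathcal{S},\mathcal{T}$ to the entries of $M$ and $K$ consistently (so that $M_{01}$ carries $q_{11}$ rather than $q_{10}$, and $K_{10}$ carries $p_{10}$ and $q_{11}$) and carry out the cancellation in the direct verification without sign errors. Conceptually the whole statement is forced by the Lax/determinant structure: the single determinant identity simultaneously delivers the conservation law and signals the presence of the two first integrals, while the explicit cancellation upgrades these to unconditional invariances.
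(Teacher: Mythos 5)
Your proof is correct, but it reaches the conservation law by a genuinely different route than the paper. The paper's own proof is terse: it observes that the continuous relation (\ref{nls-const}) ``suggests'' the two first integrals, leaves their verification as a straightforward calculation (the exercise following the corollary asks for exactly the substitution you carried out, using (\ref{nls-comp})), and then notes that equation (\ref{nls-comp-1}), $f_{01}-f-(g_{10}-g)=0$, can be rewritten in the form (\ref{conlaw}). You instead take determinants in the compatibility condition $M_{01}K=K_{10}M$ (note the paper's displayed condition contains a typo, $K_{01}$ for $K_{10}$; you use the correct form), obtaining $(\lambda+\mathcal{F}_{01})(\lambda+\mathcal{G})=(\lambda+\mathcal{G}_{10})(\lambda+\mathcal{F})$ and reading off the conservation law from the coefficient of $\lambda$. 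This is more structural: it makes explicit the discrete counterpart of the paper's remark that (\ref{nls-const}) encodes $\partial_x\det M=0$, and it is the same mechanism that later powers Proposition \ref{genInv} (trace/determinant of a refactorisation identity generates invariants), so it generalizes immediately to other Darboux matrices. You were also right to flag, and then close, the genuine subtlety in your approach: the two coefficient identities $\mathcal{F}_{01}+\mathcal{G}=\mathcal{G}_{10}+\mathcal{F}$ and $\mathcal{F}_{01}\mathcal{G}=\mathcal{G}_{10}\mathcal{F}$ force $\mathcal{F}_{01}=\mathcal{F}$, $\mathcal{G}_{10}=\mathcal{G}$ only where $\mathcal{F}\neq\mathcal{G}$, so the determinant identity alone does not yield the first integrals; your direct substitution into (\ref{LPcompatEq}), with the numerator factoring as $(f-pq_{10})(1+pq_{11})$ and $(g_{10}-p_{10}q_{11})(1+pq_{11})$ respectively, is what the paper's exercise amounts to and settles the claim unconditionally. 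In short: the paper gets the conservation law essentially for free from (\ref{nls-comp-1}) and defers the first integrals; you get the conservation law from the Lax determinant structure and prove the first integrals in full.
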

\begin{proof}
Relation (\ref{nls-const}) suggests that
	\begin{equation}\label{1stInts}
	({\mathcal{T}}-1)\left(f-pq_{10}\right)=0\quad {\mbox{and}} \quad ({\mathcal{S}}-1)\left(g-pq_{01}\right)=0,
	\end{equation}
which can be shown by straightforward calculation, and it is left as an exercise. Thus, $F=f-pq_{10}$ and $G:=g-pq_{01}$ are first integrals. Moreover, equation (\ref{nls-comp-1}) can be written in the form of the conservation law (\ref{conlaw}).
\end{proof}

\begin{corollary}
The following relations hold.
\begin{equation} \label{nls-fi}
f-pq_{10}=\alpha(n)\quad {\mbox{and}} \quad g-pq_{01}=\beta(m).
\end{equation}
\end{corollary}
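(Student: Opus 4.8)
The plan is to read off both relations directly from the two first-integral identities established in the preceding Proposition, namely
\begin{equation*}
(\mathcal{T}-1)\mathcal{F}=0 \quad\text{and}\quad (\mathcal{S}-1)\mathcal{G}=0,
\end{equation*}
where $\mathcal{F}=f-pq_{10}$ and $\mathcal{G}=g-pq_{01}$. The whole content of the corollary is a translation of these annihilation statements into the language of functional dependence on the discrete variables $n$ and $m$, so no new computation is required; I would simply invoke the definitions \eqref{shiftST} of the shift operators.

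First I would expand the identity for $\mathcal{F}$. By definition $\mathcal{T}\mathcal{F}(n,m)=\mathcal{F}(n,m+1)$, so $(\mathcal{T}-1)\mathcal{F}=0$ reads $\mathcal{F}(n,m+1)=\mathcal{F}(n,m)$ for every $m\in\mathbb{Z}$. Iterating this over $m$ shows that $\mathcal{F}(n,m)$ takes one and the same value along each vertical line of constant $n$; equivalently, $\mathcal{F}$ is independent of the variable $m$. Hence there exists a function of $n$ alone, which I name $\alpha(n)$, with $f-pq_{10}=\mathcal{F}=\alpha(n)$, which is the first claimed relation.

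Next I would run the symmetric argument for $\mathcal{G}$. Using $\mathcal{S}\mathcal{G}(n,m)=\mathcal{G}(n+1,m)$, the identity $(\mathcal{S}-1)\mathcal{G}=0$ becomes $\mathcal{G}(n+1,m)=\mathcal{G}(n,m)$ for every $n\in\mathbb{Z}$, so $\mathcal{G}$ is independent of $n$ and therefore equals a function $\beta(m)$ of $m$ alone. This gives $g-pq_{01}=\mathcal{G}=\beta(m)$, the second relation, completing the proof.

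There is essentially no obstacle here: the only point that warrants care is matching the two first integrals to the correct discrete direction (the $\mathcal{T}$-invariant quantity $\mathcal{F}$ depending only on $n$, and the $\mathcal{S}$-invariant quantity $\mathcal{G}$ depending only on $m$), which is dictated by \eqref{1stInts}. The result is an immediate corollary of the Proposition.
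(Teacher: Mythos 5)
Your proof is correct and matches the paper's (implicit) reasoning: the paper states this corollary without proof precisely because it is the immediate translation of the identities $(\mathcal{T}-1)\mathcal{F}=0$ and $(\mathcal{S}-1)\mathcal{G}=0$ from the preceding proposition into statements of independence from $m$ and $n$, respectively. You have also matched the shift operators to the correct directions, which is the only point where one could slip.
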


\begin{remark}\normalfont
In view of relations (\ref{nls-fi}), we can interpret functions $f$ and $g$ as being given on the edges of the quadrilateral where system (\ref{LPcompatEq}) is defined, and, consequently, consider system (\ref{LPcompatEq}) as a vertex-bond system \cite{HV}.
\end{remark}

\begin{exercise}
Show relations (\ref{1stInts}) using (\ref{nls-comp}).
\end{exercise}

Our choice to solve system (\ref{nls-comp}) for $p_{01}$, $q_{01}$, $f_{01}$ and $g$ is motivated by the initial value problem related to system (\ref{LPcompatEq}). Suppose that the initial values for $p$ and $q$ are given on the vertices along a staircase as shown in Figure \ref{fig-ivp}. Functions $f$ and $g$ are given on the edges of this initial value configuration in a consistent way with the first integrals (\ref{nls-fi}). In particular, horizontal edges carry the initial values of $f$ and vertical edges the corresponding ones of $g$.

\begin{figure}[ht]
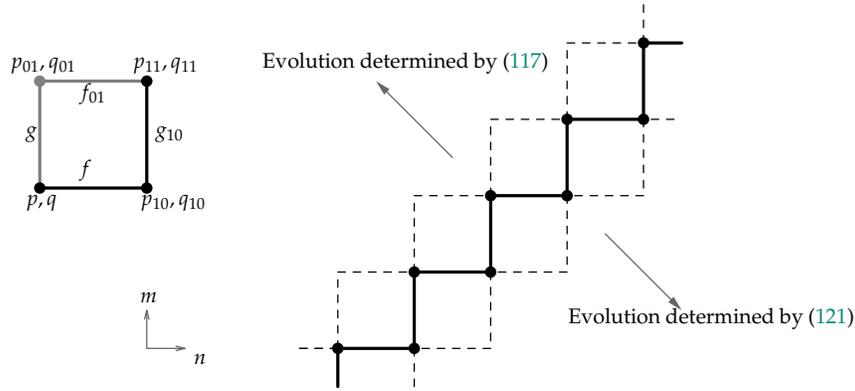

\centertexdraw{
\setunitscale 0.4
\move(-4.5 -2) \linewd 0.02 \setgray 0.4 \arrowheadtype t:V \arrowheadsize l:.12 w:.06 \avec(-4.5 -1.5)
\move(-4.5 -2) \arrowheadtype t:V  \avec(-4 -2)
\arrowheadsize l:.20 w:.10
\move(-.5 .5) \linewd 0.02 \setgray 0.4 \arrowheadtype t:F \avec(-1.5 1.5)
\move(1.5 -.5) \linewd 0.02 \setgray 0.4 \arrowheadtype t:F \avec(2.5 -1.5)
\setgray 0.5
\linewd 0.04 \move (-4.5 1.5)  \lvec (-5.9 1.5) \lvec (-5.9 .1)
\move (-5.9 1.5) \fcir f:0.5 r:0.075
\htext (-6.1 .7) {\scriptsize{$g$}}
\htext (-4.4 .7) {\scriptsize{$g_{10}$}}
\setgray 0.0
\linewd 0.04 \move (-2 -2.5) \lvec (-2 -2) \lvec (-1 -2) \lvec (-1 -1) \lvec (0 -1) \lvec (0 0) \lvec (1 0) \lvec(1 1) \lvec (2 1) \lvec(2 2) \lvec(2.5 2)
\move (-5.9 .1) \lvec (-4.5 .1) \lvec (-4.5 1.5)
\linewd 0.015 \lpatt (.1 .1 ) \move (-2 -2) \lvec (-2 -1) \lvec(-1 -1) \lvec (-1 0) \lvec (0 0) \lvec (0 1) \lvec(1 1) \lvec (1 2) \lvec (2 2) \lvec (2 2.5)
\move(-2.5 -2) \lvec(-2 -2) \move(-2.5 -2) \lvec(-2 -2)
\move (-1 -2.5) \lvec (-1 -2) \lvec(0 -2) \lvec(0 -1) \lvec(1 -1) \lvec(1 0) \lvec(2 0) \lvec(2 1) \lvec(2.5 1)
\move (-2 -2) \fcir f:0.0 r:0.075 \move (-1 -2) \fcir f:0.0 r:0.075
\move (-1 -1) \fcir f:0.0 r:0.075 \move (0 -1) \fcir f:0.0 r:0.075
\move (0 0) \fcir f:0.0 r:0.075 \move (1 0) \fcir f:0.0 r:0.075
\move (1 1) \fcir f:0.0 r:0.075 \move (2 1) \fcir f:0.0 r:0.075
\move (2 2) \fcir f:0.0 r:0.075
\move (-5.9 .1) \fcir f:0.0 r:0.075 \move (-4.5 .1) \fcir f:0.0 r:0.075 \move (-4.5 1.5) \fcir f:0.0 r:0.075
\htext (-3.9 -2.2) {\scriptsize{$n$}}
\htext (-4.6 -1.4) {\scriptsize{$m$}}
\htext (-6.1 -.2) {\scriptsize{$p,q$}}
\htext (-4.6 -.2) {\scriptsize{$p_{10},q_{10}$}}
\htext (-5.4 0.2) {\scriptsize{$f$}}
\htext (-5.4 1.2) {\scriptsize{$f_{01}$}}
\htext (-6.3 1.6) {\scriptsize{$p_{01},q_{01}$}}
\htext (-4.7 1.6) {\scriptsize{$p_{11},q_{11}$}}
\htext (-3 1.6) {{\scriptsize{Evolution determined by (\ref{LPcompatEq})}}}
\htext (1 -1.75) {{\scriptsize{Evolution determined by (\ref{LPcompatEq-2})}}}
}
\caption{{Initial value problem and direction of evolution}} \label{fig-ivp}
\end{figure}

With these initial conditions, the values of $p$ and $q$ can be uniquely determined at every vertex of the lattice, while $f$ and $g$ on the corresponding edges. This is obvious from the rational expressions (\ref{LPcompatEq}) defining the evolution above the staircase, cf. Figure \ref{fig-ivp}.

For the evolution below the staircase, one has to use
\begin{subequations}\label{LPcompatEq-2}
\begin{eqnarray}
p_{10} &=& \frac{q_{01} p^2 + (f_{01} - g) p + p_{01}}{1+p\, q_{11}},\quad
q_{10} = \frac{p_{01}{ q_{11}}^{2} + (g-f_{01})q_{11} + q_{01}}{1+pq_{11}},\\
g_{10} &=& \frac{q_{11} (p_{01} + p f_{01}) + g - p q_{01}}{1+ p q_{11}},\quad
f = \frac{q_{11} (p g- p_{01}) + f_{01}+p q_{01}}{1+p\,q_{11}},
\end{eqnarray}
\end{subequations}
which uniquely defines the evolution below the staircase as indicated in Figure \ref{fig-ivp}.

\begin{remark}\normalfont
We could consider more general initial value configurations of staircases of lengths $\ell_1$ and $\ell_2$ in the $n$ and $m$ lattice direction, respectively. Such initial value problems are consistent with evolutions (\ref{LPcompatEq}), (\ref{LPcompatEq-2}) determining the values of all fields uniquely at every vertex and edge of the lattice.
\end{remark}

Now, using first integrals we can reduce system (\ref{LPcompatEq}) to an \textit{Adler-Yamilov type} of system as those in \cite{Adler-Yamilov}. Specifically, we have the following.

\begin{proposition}
System (\ref{LPcompatEq}) can be reduced to the following non-autonomous Adler-Yamilov type of system for $p$ and $q$:
\begin{equation} \label{nls-pq-sys}
p_{01}=p_{10}-\frac{\alpha(n)-\beta(m)}{1+ pq_{11}}p,\quad q_{01}=q_{10}+\frac{\alpha(n)-\beta(m)}{1+ pq_{11}}q_{11}.
\end{equation}
\end{proposition}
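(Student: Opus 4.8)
The plan is to eliminate the auxiliary edge variables $f$ and $g$ from the first two components of system (\ref{LPcompatEq}) by invoking the first integrals recorded in (\ref{nls-fi}), thereby closing the system in the potentials $p$ and $q$ alone. Concretely, the equations for $p_{01}$ and $q_{01}$ in (\ref{LPcompatEq}) involve the edge quantities $f$ and $g_{10}$, so the whole task reduces to expressing these in terms of $p$, $q$ and the lattice functions $\alpha(n)$, $\beta(m)$.

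First I would use the first integral $f - pq_{10} = \alpha(n)$ directly to write $f = \alpha(n) + pq_{10}$. To obtain $g_{10}$ I would exploit the crucial feature that $\alpha$ and $\beta$ each depend on a single lattice direction and hence behave as constants under the complementary shift. Applying $\mathcal{S}$ to the relation $g - pq_{01} = \beta(m)$, and using $\mathcal{S}\beta(m) = \beta(m)$ together with $\mathcal{S}q_{01} = q_{11}$ and $\mathcal{S}p = p_{10}$, yields $g_{10} = \beta(m) + p_{10}q_{11}$.

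Next I would substitute these expressions into the numerators of the first two equations of (\ref{LPcompatEq}) and simplify. In the $p_{01}$ numerator the term $(g_{10}-f)p$ contributes $-p^2 q_{10}$, which cancels the leading $q_{10}p^2$, and the surviving terms reorganize as $(\beta(m)-\alpha(n))p + p_{10}(1+pq_{11})$; dividing by $1+pq_{11}$ gives exactly $p_{01} = p_{10} - \frac{\alpha(n)-\beta(m)}{1+pq_{11}}\,p$. An entirely parallel computation for $q_{01}$, in which the $p_{10}q_{11}^2$ terms cancel, produces $q_{01} = q_{10} + \frac{\alpha(n)-\beta(m)}{1+pq_{11}}\,q_{11}$, which is the claimed Adler-Yamilov form.

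Since the argument is a direct verification, there is no serious conceptual obstacle; the only step demanding genuine care is the shifting of the first integral for $g$, namely recognizing that $\beta(m)$ is invariant under $\mathcal{S}$ while the shift carries $q_{01}$ to $q_{11}$. An error there would destroy the cancellation in the numerator. Once the two substitutions are installed, it is precisely the factorization of each numerator by the common denominator $1+pq_{11}$ that collapses the rational expressions of (\ref{LPcompatEq}) into system (\ref{nls-pq-sys}).
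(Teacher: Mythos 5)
Your proof is correct and follows essentially the same route as the paper, which simply substitutes the first integrals (\ref{nls-fi}) for $f$ and $g$ into system (\ref{LPcompatEq}). Your write-up just makes explicit the one step the paper leaves implicit, namely shifting $g - pq_{01} = \beta(m)$ in the $n$-direction to obtain $g_{10} = \beta(m) + p_{10}q_{11}$, and the resulting cancellations are exactly as you describe.
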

\begin{proof}
The proof is straightforward if one uses relations (\ref{nls-fi}) to replace $f$ and $g$ in system (\ref{LPcompatEq}).
\end{proof}

Now, we will use two different Darboux matrices associated with the NLS equation to construct the discrete Toda equation \cite{Suris}.

In fact, we introduce a discrete Lax pair as (\ref{dLaxP}), with $M=M_1(p,f)$ in (\ref{M-degen}) and $K=M(p,q_{01},g)$ in (\ref{M-NLS}). That is, we consider the following system
\begin{subequations}
\begin{align}
\Psi_{10} &= \left(\lambda \left(\begin{array}{cc} 1 & 0\\0 & 0 \end{array}\right)+\left(\begin{array}{cc} f & p\\ \frac{1}{p} & 0 \end{array}\right)\right) \Psi,\\
\Psi_{01} &= \left( \lambda \left(\begin{array}{cc} 1 & 0\\ 0 & 0\end{array}\right)+\left(\begin{array}{cc} g & p\\ q_{01} & 1\end{array}\right) \right)\Psi,
\end{align}
\end{subequations}
and impose its compatibility condition.

From the coefficient of the $\lambda$-term in the latter condition we extract the following equations
\begin{subequations}
\begin{align}
f-f_{01}&=g-g_{10},\label{l1-term-1}\\
p_{01}&=\frac{1}{q_{11}}\label{l1-term-2}.
\end{align}
\end{subequations}
Additionally, the $\lambda^0$-term of the compatibility condition implies
\begin{subequations}
\begin{align}
f_{01}g-g_{10}f&=\frac{p_{10}}{p}-p_{01}q_{01},\label{l0-term-1}\\
g_{10}-f_{01}&=\frac{p_{01}}{p},\label{l0-term-2}\\
g-f&=\frac{p_{01}}{p}.\label{l0-term-3}
\end{align}
\end{subequations}

Now, recall from the previous section that, using (\ref{nls-fi}), the quantities $g$ and $g_{10}$ are given by
\begin{equation}
g=\beta(m)+pq_{01} \quad \text{and} \quad g_{10}=\beta(m)+p_{10}q_{11}.
\end{equation}
We substitute $g$ and $g_{10}$ into (\ref{l0-term-2}) and (\ref{l0-term-3}), and then replace $p$ and its shifts using (\ref{l1-term-2}). Then, we can express $f$ and $f_{01}$ in terms of the potential $q$ and its shifts:
\begin{subequations}\label{ff01}
\begin{align}
f&=\frac{q_{01}}{q_{10}}-\frac{q_{10}}{q_{11}}+\beta(m),\label{ff01-1}\\
f_{01}&=\frac{q_{11}}{q_{20}}-\frac{q_{10}}{q_{11}}+\beta(m).\label{ff01-2}
\end{align}
\end{subequations}

\begin{proposition}
The compatibility of system (\ref{ff01}) yields a fully discrete Toda type equation.
\end{proposition}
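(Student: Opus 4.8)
The plan is to exploit the fact that the two formulas in \eqref{ff01} are \emph{not} independent: since $f_{01} = \mathcal{T}f$ by the very definition of the shift operator $\mathcal{T}$ in \eqref{shiftST}, the expression \eqref{ff01-2} for $f_{01}$ must coincide with the one obtained by applying $\mathcal{T}$ directly to the expression \eqref{ff01-1} for $f$. Imposing this single consistency requirement is exactly what ``compatibility of system \eqref{ff01}'' means, and it will produce a scalar partial difference equation for the potential $q$ alone.

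First I would apply $\mathcal{T}$ to \eqref{ff01-1}, shifting every second subscript up by one and replacing $\beta(m)$ by $\beta(m+1)$, to get
\begin{equation*}
\mathcal{T}f = \frac{q_{02}}{q_{11}} - \frac{q_{11}}{q_{12}} + \beta(m+1).
\end{equation*}
Equating this with \eqref{ff01-2} and collecting terms (the $q_{10}/q_{11}$ and $q_{11}/q_{12}$ contributions do \emph{not} cancel) leaves
\begin{equation*}
\frac{q_{02} + q_{10}}{q_{11}} - q_{11}\left(\frac{1}{q_{12}} + \frac{1}{q_{20}}\right) = \beta(m) - \beta(m+1),
\end{equation*}
a non-autonomous fully discrete equation of Toda type on the cross-shaped stencil $\{q_{10},q_{02},q_{12},q_{20}\}$ around the central vertex $q_{11}$, the inhomogeneity $\beta(m)-\beta(m+1)$ arising from the first integral. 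Two of the arms point along the vertical direction and two along the anti-diagonal, so to display the Toda structure and match the normal form of Suris \cite{Suris} I would then pass to light-cone lattice coordinates aligned with that anti-diagonal together with an exponential/logarithmic change of the dependent variable, recasting the relation as a discrete analogue of a second difference of $\log(\cdot)$ equal to nearest-neighbour differences.

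Before concluding I would verify that the residual compatibility relations do not over-determine the system. Substituting $f,f_{01},g,g_{10}$ from \eqref{ff01} together with $p=1/q_{10}$ (from \eqref{l1-term-2}) and $g=\beta(m)+pq_{01}$, a short calculation gives $f-f_{01}=q_{01}/q_{10}-q_{11}/q_{20}=g-g_{10}$, so \eqref{l1-term-1} holds identically, and likewise $f_{01}g-g_{10}f$ collapses to $q_{10}/q_{20}-q_{01}/q_{11}$, which is precisely the right-hand side $p_{10}/p-p_{01}q_{01}$ of \eqref{l0-term-1}; hence no further constraints survive and the single scalar equation above \emph{is} the full compatibility condition. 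The main obstacle is therefore not the elimination itself, which is routine, but keeping the bookkeeping of the many shifts straight so that exactly one genuine relation remains, and then identifying the change of variables that brings it to the recognizable discrete Toda form of \cite{Suris}.
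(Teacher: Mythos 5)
Your proposal is correct and follows essentially the same route as the paper's proof: apply $\mathcal{T}$ to \eqref{ff01-1}, equate the result with \eqref{ff01-2} (your displayed relation is exactly the paper's equation multiplied by $-1$), and then make an exponential change of the dependent variable --- the paper's explicit choice is $q\rightarrow\exp(-w_{-1,-1})$ --- to arrive at the discrete Toda form \eqref{DToda}. Your additional verification that \eqref{l1-term-1} and \eqref{l0-term-1} hold identically once \eqref{ff01}, $p=1/q_{10}$ and $g=\beta(m)+pq_{01}$ are substituted, so that the single scalar equation exhausts the compatibility conditions, is correct and is a worthwhile point that the paper leaves implicit.
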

\begin{proof}
Applying the shift operator $\mathcal{T}$ on both sides of (\ref{ff01-1}) and demanding that its right-hand side agrees with that of (\ref{ff01-2}), we obtain
\begin{equation}
\frac{q_{11}}{q_{20}}-\frac{q_{02}}{q_{11}}+\frac{q_{11}}{q_{12}}-\frac{q_{10}}{q_{11}}=\beta(m+1)-\beta(m).
\end{equation}
Then, we make the transformation
\begin{equation}
q\rightarrow \exp(-w_{-1,-1}),
\end{equation}
which implies the following discrete Toda type equation
\begin{equation}\label{DToda}
\E^{w_{1,-1}-w}-\E^{w-w_{-1,1}}+\E^{w_{0,1}-w}-\E^{w-w_{0,-1}}=\beta(m+1)-\beta(m),
\end{equation}
and proves the statement.
\end{proof}

\begin{exercise}
Show that the discrete Toda equation (\ref{DToda}) can be written in the form of a conservation law.
\end{exercise}

\subsection{Lax-Darboux scheme for NLS type equations}
In the previous section we used the NLS equation as an illustrative example to describe the Lax-Darboux scheme. The NLS equation was not selected randomly, but as a simple example of recent classification results of automorphic Lie algebras. In particular, finite groups of fractional-linear transformations of a complex variable were classified by Klein \cite{Klein}, and they correspond to the cyclic groups $\mathbb{Z}_n$, the dihedral groups $\mathbb{D}_n$, the tetrahedral group $\mathbb{T}$, the octahedral group $\mathbb{O}$ and the icosahedral group $\mathbb{I}$. Following Klein's classification, in \cite{BuryPhD, Bury-Sasha} it has been shown that in the case of $2\times 2$ matrices ($n=2$), the essentially different reduction groups are
\begin{itemize}
\item the trivial group (with no reduction);
\item the cyclic reduction group $\mathbb{Z}_2$ (leading to the Kac-Moody algebra $A_1^1$);
\item the Klein reduction group $\mathbb{Z}_2\times \mathbb{Z}_2 \cong \mathbb{D}_2$.
\end{itemize}

Now, the following Lax operators
\begin{subequations}
\begin{align}
\label{Lax-NLS}
&\mathbf{L} =D_x +\lambda\left(\begin{array}{cc} 1 & 0 \\ 0 & -1\end{array}\right)+\left(\begin{array}{cc} 0 & 2p \\ 2q & 0\end{array}\right),\\
\label{Lax-DNLS}
&\mathbf{L}=D_x+\lambda^{2} \left(\begin{array}{cc} 1 & 0 \\ 0 & -1\end{array}\right)+\lambda \left(\begin{array}{cc} 0 & 2p \\ 2q & 0\end{array}\right),\\
\label{Lax-dDNLS}
&\mathbf{L}=D_x+(\lambda^2-\lambda^{-2})\left(\begin{array}{cc} 1 & 0 \\ 0 & -1\end{array}\right)+\lambda \left(\begin{array}{cc} 0 & 2\,p\\ 2\,q & 0\end{array}\right)+\lambda^{-1} \left(\begin{array}{cc} 0 & 2\,q\\ 2\,p & 0\end{array}\right),
\end{align}
\end{subequations}
constitute all the essential different Lax operators, with poles of minimal order, invariant with respect to the generators of $\mathbb{Z}_2$ and $\mathbb{D}_2$ groups with degenerate orbits\footnote{These are orbits corresponding to the fixed points of the fractional linear
transformations of the spectral parameter.}. In what follows, we study the Darboux transformations\index{Darboux transformation(s)} for all the above cases.

Operator (\ref{Lax-NLS}) is associated with the NLS equation  \cite{ZS}, while (\ref{Lax-DNLS}) and (\ref{Lax-dDNLS}) are associated with the DNLS equation \cite{Kaup-Newell}, and a deformation of the DNLS equation \cite{MSY}, respectively.

In \cite{sokor, SPS}, the Lax-Darboux scheme was applied to all cases \eqref{Lax-NLS}, \eqref{Lax-DNLS} and \eqref{Lax-dDNLS}. As a result, for all these cases, Darboux transformations were studied and novel discrete integrable systems were constructed.


\section{Discrete integrable systems and Yang-Baxter maps}
\label{sec:4}
As we mentioned in the introduction, a very important integrability criterion is that of the 3D-\textit{consistency} and, by extension, the \textit{multidimensional consistency} \cite{Bobenko-Suris, Frank4}.

In what follows, we briefly explain what is the 3D-consistency property and we review some recent classification results. For more information we refer to \cite{Frank5, Hiet-Frank-Joshi} which are two of the few self-contained books on the integrability of discrete systems, as well as \cite{Gramm-Schw-Tam} for a collection of results.

\subsection{Equations on Quad-Graphs: 3D-consistency}
Let us consider a discrete equation of the form
\begin{equation}\label{QGeq}
Q(u,u_{10},u_{01},u_{11};a,b)=0,
\end{equation}
where $u_{ij}$, $i,j=0,1$, $u\equiv u_{00}$, belong in a set $\mathcal{A}$ and the parameters $a,b\in\mathbb{C}$. Moreover, we assume that (\ref{QGeq}) is uniquely solvable for any $u_i$ in terms of the rest. We can interpret the fields $u_i$ to be attached to the vertices of a square as in Figure \ref{3Dconsistency}-(a).

If equation (\ref{QGeq}) can be generalized in a consistent way on the faces of a cube, then it is said to be \textit{3D-consistent}. In particular, suppose we have the initial values $u$, $u_{100}$, $u_{010}$ and $u_{001}$ attached to the vertices of the cube as in Figure \ref{3Dconsistency}-(b). Now, since equation (\ref{QGeq}) is uniquely solvable, we can uniquely determine values $u_{110}$, $u_{101}$ and $u_{011}$, using the bottom, front and left face of the cube. Then, there are three ways to determine value $u_{111}$, and we have the following.

\begin{definition}
If for any choice of initial values $u$, $u_{100}$, $u_{010}$ and $u_{001}$, equation $Q=0$ produces the same value $u_{111}$ when solved using the left, back or top face of the cube, then it is called 3D-consistent.
\end{definition}

\begin{figure}[ht]
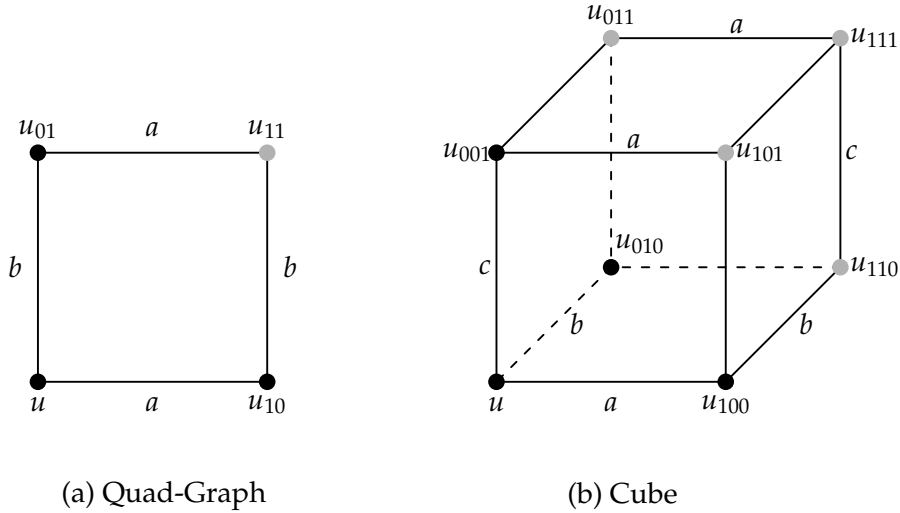

\centering
\centertexdraw{
\setunitscale 0.6
\move (-2 0)  \lvec(0 0) \lvec(0 2) \lvec(-2 2) \lvec(-2 0)
\move (-2 0) \fcir f:0.0 r:0.075 \move (0 0) \fcir f:0.0 r:0.075
\move (0 2) \fcir f:.7 r:0.075 \move (-2 2) \fcir f:0.0 r:0.075
\textref h:C v:C \htext(-2 -0.2){$u$}
\textref h:C v:C \htext(0 -0.2){$u_{10}$}
\textref h:C v:C \htext(-2 2.2){$u_{01}$}
\textref h:C v:C \htext(0 2.2){$u_{11}$}
\textref h:C v:C \htext(-1 -0.2){$a$}
\textref h:C v:C \htext(-1 2.2){$a$}
\textref h:C v:C \htext(-2.2 1){$b$}
\textref h:C v:C \htext(0.2 1){$b$}

\lpatt()
\setgray 0
\move (2 0)  \lvec(4 0) \lvec(5 1)
\lpatt(0.067 0.1) \lvec(3 1) \lvec(2 0)
\lpatt() \lvec(2 2) \lvec(3 3)
\lpatt (0.067 0.1) \lvec(3 1)
\lpatt() \move (3 3) \lvec(5 3) \lvec(4 2) \lvec(4 0)
\move (2 2) \lvec(4 2)
\move (5 3) \lvec(5 1)
\move (2 0) \fcir f:0.0 r:0.075 \move (4 0) \fcir f:0.0 r:0.075
\move (2 2) \fcir f:0.0 r:0.075 \move (4 2) \fcir f:0.7 r:0.075
\move (3 1) \fcir f:0.0 r:0.075 \move (5 1) \fcir f:0.7 r:0.075
\move (3 3) \fcir f:0.7 r:0.075 \move (5 3) \fcir f:0.7 r:0.075
\textref h:C v:C \htext(2 -.2){$u$}
\textref h:C v:C \htext(4 -.2){$u_{100}$}
\textref h:C v:C \htext(5.3 1){$u_{110}$}
\textref h:C v:C \htext(3.25 1.2){$u_{010}$}
\textref h:C v:C \htext(1.745 2){$u_{001}$}
\textref h:C v:C \htext(4.3 2){$u_{101}$}
\textref h:C v:C \htext(5.3 3){$u_{111}$}
\textref h:C v:C \htext(3 3.2){$u_{011}$}
\textref h:C v:C \htext(3 -.2){$a$}
\textref h:C v:C \htext(3.2 2.1){$a$}
\textref h:C v:C \htext(4.1 3.1){$a$}
\textref h:C v:C \htext(5.1 2){$c$}
\textref h:C v:C \htext(2.7 .5){$b$}
\textref h:C v:C \htext(4.7 .5){$b$}
\textref h:C v:C \htext(1.9 1){$c$}

\textref h:C v:C \htext(3.1 -1){(b) Cube}
\textref h:C v:C \htext(-0.9 -1){(a) Quad-Graph}
}
\caption{3D-consistency.}\label{3Dconsistency}
\end{figure}

\begin{note}\normalfont
In the above interpretation, we have adopted similar notation to \eqref{shiftST}-\eqref{shiftSiTj}: We consider the square in Figure \ref{3Dconsistency}-(a) to be an elementary square in a two dimensional lattice. Then, we assume that field $u$ depends on two discrete variables $n$ and $m$, i.e. $u=u(n,m)$. Therefore, $u_{ij}$s on the vertices of \ref{3Dconsistency}-(a) are
\begin{equation}
u_{00}=u(n,m),\quad u_{10}=u(n+1,m),\quad u_{01}=u(n,m+1),\quad u_{11}=u(n+1,m+1).
\end{equation}
Moreover, for the interpretation on the cube we assume that $u$ depends on a third variable $k$, such that
\begin{equation}
u_{000}=u(n,m,k),\quad u_{100}=u(n+1,m,k),\ldots\quad u_{111}=u(n+1,m+1,k+1).
\end{equation}
\end{note}

Now, as an illustrative example we use the discrete potential KdV equation which first appeared in \cite{HirotaKdV}.

\begin{example}\label{dpKdvEx}\normalfont (Discrete potential KdV equation)
Consider equation (\ref{QGeq}), where $Q$ is given by
\begin{equation}
Q(u,u_{10},u_{01},u_{11};a,b)=(u-u_{11})(u_{10}-u_{01})+b-a.
\end{equation}
Now, using the bottom, front and left faces of the cube \ref{3Dconsistency}-(b), we can solve equations
\begin{subequations}
\begin{align}
Q(u,u_{100},u_{010},u_{110};a,b)&=0,\\
Q(u,u_{100},u_{001},u_{101};a,c)&=0,\\
Q(u,u_{010},u_{001},u_{011};b,c)&=0,
\end{align}
\end{subequations}
to obtain solutions for $u_{110}$, $u_{101}$ and $u_{011}$, namely
\begin{subequations}\label{u110u101u001}
\begin{align}
u_{110}&=u+\frac{a-b}{u_{010}-u_{100}},\label{u110}\\
u_{101}&=u+\frac{a-c}{u_{001}-u_{100}},\label{u101}\\
u_{011}&=u+\frac{b-c}{u_{001}-u_{010}},\label{u011}
\end{align}
\end{subequations}
respectively.

Now, if we shift (\ref{u110}) in the $k$-direction, and then substitute $u_{101}$ and $u_{011}$ (which appear in the resulting expression for $u_{11}$) by (\ref{u110u101u001}), we deduce
\begin{equation}
u_{111}=-\frac{(a-b)u_{100}u_{010}+(b-c)u_{010}u_{001}+(c-a)u_{100}u_{001}}{(a-b)u_{001}+(b-c)u_{100}+(c-a)u_{010}}.
\end{equation}
It is obvious that, because of the symmetry in the above expression, we would obtain exactly the same expression for $u_{111}$ if we had alternatively shifted $u_{101}$ in the $m$-direction and substituted $u_{110}$ and $u_{011}$ by (\ref{u110u101u001}), or if we had shifted $u_{011}$ in the $n$-direction and substituted $u_{110}$ and $u_{101}$. Thus, the dpKdV equation is 3D-consistent.
\end{example}

\subsection{ABS classification of maps on quad-graphs}
In 2003 \cite{ABS-2004} Adler, Bobenko and Suris classified all the 3D-consistent equations in the case where $\mathcal{A}=\mathbb{C}$. In particular, they considered all the equations of the form (\ref{QGeq}), where $u,u_{10},u_{01},u_{11},a,b$$\in\mathbb{C}$, that satisfy the following properties:

\textbf{(I) Multilinearity.} Function $Q=Q(u,u_{10},u_{01},u_{11};a,b)$ is a first order polynomial in each of its arguments, namely linear in each of the fields $u,u_{10},u_{01},u_{11}$. That is,
\begin{equation}
Q(u,u_{10},u_{01},u_{11};a,b)=a_1uu_{10}u_{01}u_{11}+a_2uu_{10}u_{01}+a_3uu_{10}u_{11}+\ldots+a_{16},
\end{equation}
where $a_i=a_i(a,b)$, $i=1,\ldots,16$.

\textbf{(II) Symmetry.} Function $Q$ satisfies the following symmetry property
\begin{equation}
Q(u,u_{10},u_{01},u_{11};a,b)=\epsilon Q(u,u_{01},u_{10},u_{11};b,a)=\sigma Q(u_{10},u,u_{11},u_{01};a,b),
\end{equation}
with $\epsilon,\sigma=\pm1$.

\textbf{(III) Tetrahedron property.} That is, the final value $u_{111}$ is independent of $u$.

ABS proved that all the equations of the form (\ref{QGeq}) which satisfy the above conditions, can be reduced to seven basic equations, using M\"obius (fraction linear) transformations of the independent variables and point transformations of the parameters. These seven equations are distributed into two lists known as the $Q$-\textit{list} (list of 4 equations) and the $H$-\textit{list} (list of 3 equations).

\begin{remark}\normalfont
The discrete potential KdV (dpKdV) equation, which we shall consider in Example \ref{dpKdV}, is the first member of the $H$-list ($H1$ equation).
\end{remark}

Those equations of the form (\ref{QGeq}) which satisfy the multilinearity condition (I), admit Lax representation. In fact, in this case, introducing an auxiliary spectral parameter, $\lambda$, there is an algorithmic way to find a matrix $L$ such that equation (\ref{QGeq}) can be written as the following \textit{zero-curvature} equation
\begin{equation}\label{zerocurv}
L(u_{11},u_{01};a,\lambda)L(u_{01},u;b,\lambda)=L(u_{11},u_{10};b,\lambda)L(u_{10},u;a,\lambda).
\end{equation}

We shall see later on that 1) equations of the form (\ref{QGeq}) with the fields on the edges of the square \ref{3Dconsistency}-(a) are related to Yang-Baxter maps and 2) Yang-Baxter maps may have Lax representation as (\ref{zerocurv}).

\subsubsection{Classification of quadrirational maps: The $F$-list}
A year after the classification of the 3D-consistent equations, ABS in \cite{ABS-2005} classified all the quadrirational maps in the case where $\mathcal{A}=\mathbb{CP}^1$; the associated list of maps is known as the $F$-list. Recall that, a map $Y:(x,y)\mapsto (u(x,y),v(x,y))$ is called \textit{quadrirational}, if the maps
\begin{equation}
u(.,y):\mathcal{A}\rightarrow \mathcal{A},\quad v(x,.):\mathcal{A}\rightarrow \mathcal{A},
\end{equation}
are birational. In particular, we have the following.

\begin{theorem}(ABS, $F$-list) Up to M\"obius transformations, any quadrirational map on $\mathbb{CP}^1\times\mathbb{CP}^1$ is equivalent to one of the following maps
\begin{align}
 u&=ayP, ~~~\quad v=bxP,~~~\qquad P=\frac{(1-b)x+b-a+(a-1)y}{b(1-a)x+(a-b)xy+a(b-1)y}; \tag{$F_I$}\label{FI}\\
 u&=\frac{y}{a}P, \quad\quad v=\frac{x}{b}P,\quad\qquad P=\frac{ax-by+b-a}{x-y}; \tag{$F_{II}$}\label{FII}\\
 u&=\frac{y}{a}P,\quad \quad v=\frac{x}{b}P, \quad\qquad P=\frac{ax-by}{x-y};\tag{$F_{III}$}\label{FIII}\\
 u&=yP ~~\quad\quad v=xP,~\quad\qquad P=1+\frac{b-a}{x-y}\tag{$F_{IV}$};\label{FIV}\\
 u&=y+P,\quad v=x+P,\qquad P=\frac{a-b}{x-y}\tag{$F_{V}$},\label{FV}
\end{align}
up to suitable choice of the parameters $a$ and $b$.
\end{theorem}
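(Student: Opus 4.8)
The plan is to follow the geometric approach of Adler--Bobenko--Suris, whose core is a translation of quadrirationality into a statement about pencils of conics. First I would pin down the analytic form of the components. By the definition of quadrirationality, for generic fixed $y$ the map $x\mapsto u(x,y)$ is a birational automorphism of $\mathbb{CP}^1$, hence a M\"obius transformation; thus $u$ is fractional-linear in $x$ with $y$-dependent coefficients. Applying the companion birationality conditions (the restrictions of the inverse map, solving for $(x,y)$ in terms of $(u,v)$) the same reasoning forces $u$ to be fractional-linear in $y$ as well, and identically for $v$. Consequently both $u$ and $v$ are ratios of bilinear (bidegree $(1,1)$) polynomials in $(x,y)$, and the problem reduces to classifying, up to the fourfold M\"obius action on $x,y,u,v$, those bidegree-$(1,1)$ maps that are genuinely quadrirational.

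Next I would encode this data geometrically. For each constant $c$ the level set $\{u(x,y)=c\}$ is a bidegree-$(1,1)$ curve on the quadric $\mathbb{CP}^1\times\mathbb{CP}^1$; as $c$ varies these sweep out a pencil whose base points are exactly the common zeros of the numerator and denominator of $u$. Since two bidegree-$(1,1)$ divisors meet in two points, the $u$-pencil has two base points, and likewise the $v$-pencil has two. Passing to the plane, the quadrirational map is equivalent to the datum of a \emph{pencil of conics} in $\mathbb{CP}^2$ together with its four base points counted with multiplicity (B\'ezout). Thus the classification of quadrirational maps becomes the classification of the possible base-point configurations of a pencil of conics, taken up to projective equivalence.

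Then I would enumerate the degeneration types of the four base points: four points in general position; a single tangency where two of them merge; two such mergers; three collapsing together; and all four coalescing into one point of high contact. Up to projective (M\"obius) equivalence these are exactly five admissible configurations that still yield a genuine quadrirational map rather than a merely rational one, and they match $F_I,\dots,F_V$ one-to-one. For each configuration I would place the base points in standard position using M\"obius changes of $x$ and $y$, then use the residual M\"obius freedom in $u$ and $v$ to normalize the constants, reading off the explicit formulas; the two parameters $a,b$ persist as the essential moduli (for instance as a cross-ratio of the base points), which recovers precisely the stated normal forms.

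The hard part is the \emph{exhaustiveness} of the five-case list together with the verification that each listed configuration really corresponds to a quadrirational, and not merely rational, map. Proving that no quadrirational map escapes the list requires a careful analysis of the indeterminacy (singular-point) structure: one must show that the number and coincidence pattern of base points is a complete invariant, and that any degeneration beyond $F_V$ drops the bidegree and thereby destroys quadrirationality. Equally delicate is confirming that the five normal forms are pairwise inequivalent under the fourfold M\"obius action, which again follows once the base-point configuration is established as a genuine projective invariant of the map.
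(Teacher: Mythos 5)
The paper itself offers no proof of this theorem: it is quoted verbatim from Adler--Bobenko--Suris \cite{ABS-2005}, so your proposal can only be judged against their original argument. Your final geometric target --- a pencil of conics classified by the coincidence patterns of its base points, matched one-to-one with $F_I$--$F_V$ --- is indeed the ABS picture. However, the reduction you use to reach it contains a genuine error. Quadrirationality (as defined in the paper) makes $u$ fractional-linear in $x$ and $v$ fractional-linear in $y$; neither this nor the companion-map conditions forces $u$ to be fractional-linear in $y$ as well. What the companion conditions actually give (this is the degree analysis in \cite{ABS-2005}) is that the coefficients of $u$, as polynomials in $y$, have degree at most $2$, leading to the subclasses $[1\!:\!1]$, $[1\!:\!2]$, $[2\!:\!2]$. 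Your claim that both components are ratios of bidegree-$(1,1)$ polynomials is contradicted by every map in the list you are trying to prove: for instance in \eqref{FV},
\begin{equation*}
u \;=\; y+\frac{a-b}{x-y}\;=\;\frac{xy-y^{2}+a-b}{x-y},
\end{equation*}
whose numerator is quadratic in $y$, and the same holds for \eqref{FI}--\eqref{FIV}. Since the fourfold M\"obius action preserves these bidegrees, a map with bidegree-$(1,1)$ components (the subclass $[1\!:\!1]$, containing trivial examples such as $(x,y)\mapsto(xy,y)$) can never be equivalent to a map of the $F$-list. The theorem, in its accurate form in \cite{ABS-2005}, classifies the generic subclass $[2\!:\!2]$; your reduction would instead confine you to the degenerate subclass and miss the entire list.

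The error propagates into your geometric translation and makes it internally inconsistent. If the level curves $\lbrace u=c\rbrace$ had bidegree $(1,1)$, any two of them would meet in $(1,1)\cdot(1,1)=1+1=2$ points, so a pencil of such curves could never supply the four base points you invoke via B\'ezout; conversely, a pencil of plane conics is not the same thing as a pencil of $(1,1)$-curves on the quadric. The four base points genuinely arise because, for a $[2\!:\!2]$ map, the level curves of $u$ have bidegree $(1,2)$, and two such curves meet in $1\cdot 2+2\cdot 1=4$ points; only after transporting this pencil to $\mathbb{CP}^{2}$ by a suitable birational map (under which these $(1,2)$-curves become conics) does one obtain a pencil of conics, whose five possible base-point degenerations (four simple; one double and two simple; two double; one triple and one simple; one quadruple) produce $F_I$--$F_V$. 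So the opening degree analysis must be redone along the lines of \cite{ABS-2005}: prove the bound $\deg\leq 2$ on the coefficient polynomials, isolate the subclass $[2\!:\!2]$, and only then set up the correspondence with conics. As written, your argument classifies the wrong family of maps.
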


We shall come back to the $F$-list in chapter 4, where we shall see that all the equations of the $F$-list have the Yang-Baxter property; yet, the other members of their equivalence classes may not satisfy the Yang-Baxter equation. However, we shall present a more precise list given in \cite{PSTV}.

\subsection{The Yang-Baxter equation}
The original (quantum) Yang-Baxter equation originates in the works of Yang \cite{Yang} and Baxter \cite{Baxter}, and it has a fundamental role in the theory of quantum and classical integrable systems.

Here, we are interested in the study of the set-theoretical solutions of the Yang-Baxter equation. The first examples of such solutions  appeared in 1988, in a paper of Sklyanin \cite{Sklyanin}. However, the study of the set-theoretical solutions was formally proposed by Drinfeld in 1992 \cite{Drinfel'd}, and gained a more algebraic flavor in \cite{Buchstaber}. Veselov, in \cite{Veselov}, proposed the more elegant term ``Yang-Baxter maps'' for this type of solutions and, moreover, he connected them with integrable mappings \cite{Veselov, Veselov3}.

Let $V$ be a vector space and $Y\in \End(V \otimes V)$ a linear operator. The Yang-Baxter equation is given by the following
\begin{equation}\label{YB_eq1}
Y^{12}\circ Y^{13} \circ Y^{23}=Y^{23}\circ Y^{13} \circ Y^{12},
\end{equation}
where $Y^{ij}$, $i,j=1,2,3$, $i\neq j$, denotes the action of $Y$ on the $ij$ factor of the triple tensor product $V \otimes V\otimes V$. In this form, equation (\ref{YB_eq1}) is known in the literature as the \textit{quantum YB equation}.

\subsubsection{Parametric Yang-Baxter maps}
Let us now replace the vector space $V$ by a set $A$, and the tensor product $V\otimes V$ by the Cartesian product $A \times A$. In what follows, we shall consider $A$ to be a finite dimensional algebraic variety in $K^N$, where $K$ is any field of zero characteristic, such as $\mathbb{C}$ or $\mathbb{Q}$.

Now, let $Y\in \End(A\times A)$ be a map defined by
\begin{equation}\label{Y-map}
Y:(x,y)\mapsto (u(x,y),v(x,y)).
\end{equation}
Furthermore, we define the maps $Y^{ij}\in \End(A\times A \times A)$ for $i,j=1,2,3,~i\neq j$, which appear in equation (\ref{YB_eq1}), by the following relations
\begin{subequations}\label{Yijs}
\begin{align}
 Y^{12}(x,y,z)&=(u(x,y),v(x,y),z), \\
 Y^{13}(x,y,z)&=(u(x,z),y,v(x,z)), \\
 Y^{23}(x,y,z)&=(x,u(y,z),v(y,z)).
\end{align}
\end{subequations}
Let also $Y^{21}=\pi Y \pi$, where $\pi\in\End(A\times A)$ is the permutation map: $\pi(x,y)=(y,x)$.

Map $Y$ is a YB map, if it satisfies the YB equation ($\ref{YB_eq1}$). Moreover, it is called \textit{reversible} if the composition of $Y^{21}$ and $Y$ is the identity map, i.e.
\begin{equation}\label{reversible}
Y^{21}\circ Y=Id.
\end{equation}

Now, let us consider the case where parameters are involved in the definition of the YB map. In particular we define the following map
\begin{equation}
Y_{a,b}:(x,y)\mapsto (u,v)\equiv (u(x,y;a,b),v(x,y;a,b)).
\end{equation}
This map is called \textit{parametric YB map} if it satisfies the \textit{parametric YB equation}
\begin{equation}\label{YB_eq}
Y^{12}_{a,b}\circ Y^{13}_{a,c} \circ Y^{23}_{b,c}=Y^{23}_{b,c}\circ Y^{13}_{a,c} \circ Y^{12}_{a,b}.
\end{equation}

One way to represent the map $Y_{a,b}$ is to consider the values $x$ and $y$ taken on the sides of the quadrilateral as in figure \ref{YBmap-Eq}-(a); the map $Y_{a,b}$ maps the values $x$ and $y$ to the values placed on the opposite sides of the quadrilateral, $u$ and $v$.

Moreover, for the YB equation, we consider the values $x$, $y$ and $z$ taken on the sides of the cube as in figure \ref{YBmap-Eq}-(b). Specifically, by the definition \ref{Yijs} of the functions $Y^{ij}$, the map $Y^{23}_{b,c}$ maps
\begin{equation}
(x,y,z)\stackrel{Y^{23}_{b,c}}{\rightarrow}(x,y^{(1)},z^{(1)}),
\end{equation}
using the right face of the cube. Then, map $Y^{13}_{a,c}$ maps
\begin{equation}
(x,y^{(1)},z^{(1)})\stackrel{Y^{13}_{a,c}}{\rightarrow}(x^{(1)},y^{(1)},z^{(2)})\equiv Y^{13}_{a,c} \circ Y^{23}_{b,c}(x,y,z),
\end{equation}
using the front face of the cube. Finally, map $Y^{12}_{a,b}$ maps
\begin{equation}
(x^{(1)},y^{(1)},z^{(2)})\stackrel{Y^{12}_{a,b}}{\rightarrow}(x^{(2)},y^{(2)},z^{(2)})\equiv Y^{12}_{a,b}\circ Y^{13}_{a,c} \circ Y^{23}_{b,c}(x,y,z),
\end{equation}
using the top face of the cube.

On the other hand, using the bottom, the back and the left face of the cube, the values $x$, $y$ and $z$ are mapped to the values $\hat{x}^{(2)}$, $\hat{y}^{(2)}$ and $\hat{z}^{(2)}$ via the map $Y^{23}_{b,c}\circ Y^{13}_{a,c} \circ Y^{12}_{a,b}$ which consists with the right hand side of equation, namely (\ref{Y-map})
\begin{equation}
Y^{23}_{b,c}\circ Y^{13}_{a,c} \circ Y^{12}_{a,b}(x,y,z)=(\hat{x}^{(2)},\hat{y}^{(2)},\hat{z}^{(2)}).
\end{equation}
Therefore, the map $Y_{a,b}$ satisfies the YB equation (\ref{YB_eq}) if and only if $x^{(2)}=\hat{x}^{(2)}$, $y^{(2)}=\hat{y}^{(2)}$ and $z^{(2)}=\hat{z}^{(2)}$.

\begin{figure}[ht]
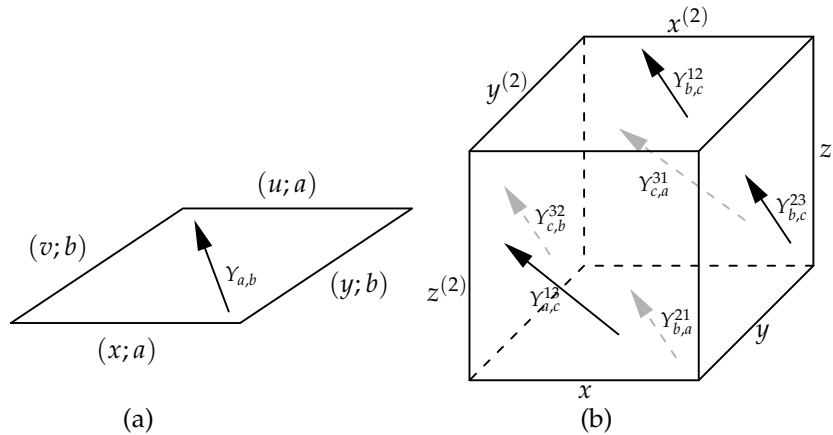

\centering
\centertexdraw{
\setunitscale 0.6
\move (-2 0.5)  \lvec(0 0.5) \lvec(1.5 1.5) \lvec(-0.5 1.5) \lvec(-2 0.5)
\textref h:C v:C \small{\htext(-1 0.25){$(x;a)$}}
\textref h:C v:C \htext(0.4 1.7){$(u;a)$}
\textref h:C v:C \htext(1.03 0.82){$(y;b)$}
\textref h:C v:C \htext(-1.6 1.15){$(v;b)$}
\textref h:C v:C \scriptsize{\htext(0 0.9){$Y_{a,b}$}}
\move (-0.1 0.6) \arrowheadtype t:F \avec(-0.4 1.4)
\move (2.7 1.1) \arrowheadtype t:F \lpatt(0.067 0.1)
\setgray 0.7
\avec(2.3 1.7)
\textref h:C v:C \scriptsize{\htext(2.7 1.4){$Y^{32}_{c,b}$}}
\move (4.4 1.4) \arrowheadtype t:F \footnotesize{\avec(3.3 2.2)}
\textref h:C v:C \scriptsize{\htext(3.6 1.7){$Y^{31}_{c,a}$}}
\move (3.8 0.2) \arrowheadtype t:F \avec(3.4 0.8)
\textref h:C v:C \scriptsize{\htext(3.8 0.5){$Y^{21}_{b,a}$}}
\lpatt()
\setgray 0
\move (2 0)  \lvec(4 0) \lvec(5 1)
\lpatt(0.067 0.1) \lvec(3 1) \lvec(2 0)
\lpatt() \lvec(2 2) \lvec(3 3)
\lpatt (0.067 0.1) \lvec(3 1)
\lpatt() \move (3 3) \lvec(5 3) \lvec(4 2) \lvec(4 0)
\move (2 2) \lvec(4 2)
\move (5 3) \lvec(5 1)
\textref h:C v:C \small{\htext(3 -0.1){$x$}}
\textref h:C v:C \small{\htext(4.55 0.4){$y$}}
\textref h:C v:C \small{\htext(5.1 2){$z$}}
\move (4.8 1.2) \arrowheadtype t:F \avec(4.4 1.8)
\textref h:C v:C \scriptsize{\htext(4.8 1.5){$Y^{23}_{b,c}$}}
\move (3.3 0.4) \arrowheadtype t:F \avec(2.3 1.2)
\textref h:C v:C \scriptsize{\htext(2.65 0.7){$Y^{13}_{a,c}$}}
\textref h:C v:C \small{\htext(1.8 0.8){$z^{(2)}$}}
\move (3.9 2.3) \arrowheadtype t:F \avec(3.5 2.9)
\textref h:C v:C \scriptsize{\htext(3.9 2.6){$Y^{12}_{b,c}$}}
\textref h:C v:C \small{\htext(3.9 3.15){$x^{(2)}$}} \htext(2.33 2.55){$y^{(2)}$}
\textref h:C v:C \htext(3.1 -0.35){(b)}
\textref h:C v:C \htext(-0.9 -0.35){(a)}
}
\caption{Cubic representation of (a) the parametric YB map and (b) the corresponding YB equation.}\label{YBmap-Eq}
\end{figure}

Most of the examples of YB maps which appear in these lecture notes are parametric.

\begin{example}\normalfont
One of the most famous parametric YB maps is Adler's map \cite{Adler}
\begin{equation}\label{Adler_map}
(x,y)\stackrel{Y_{a,b}}{\rightarrow}(u,v)=\left(y-\frac{a-b}{x+y},x+\frac{a-b}{x+y}\right),
\end{equation}
which is related to the 3-D consistent discrete potential KDV equation \cite{Frank, PNC}.
\end{example}

\begin{exercise}
By straightforward substitution, show that Adler's map (\ref{Adler_map}) satisfies the YB equation (\ref{YB_eq1}).
\end{exercise}

\subsubsection{Matrix refactorisation problems and the Lax equation}
Let us consider the matrix $L$ depending on a variable $x$, a parameter $c$ and a \textit{spectral parameter} $\lambda$, namely $L=L(x;c,\lambda)$, such that the following matrix refactorisation problem
\begin{equation} \label{eqLax}
L(u;a,\lambda)L(v;b,\lambda)=L(y;b,\lambda)L(x;a,\lambda), \quad \text{for any $\lambda \in \mathbb{C}$,}
\end{equation}
is satisfied whenever $(u,v)=Y_{a,b}(x,y)$. Then, $L$ is called Lax matrix for $Y_{a,b}$, and (\ref{eqLax}) is called the \textit{Lax-equation} or \textit{Lax-representation} for $Y_{a,b}$.

\begin{note}\normalfont
In the rest of this thesis we use the letter ``$L$" when referring to Lax matrices of the refactorisation problem (\ref{eqLax}) and the bold ``$\mathbf{L}$" for Lax operators. Moreover, for simplicity of the notation, we usually omit the dependence on the spectral parameter, namely $L(x;a,\lambda)\equiv L(x;a)$.
\end{note}

Since the Lax equation (\ref{eqLax}) does not always have a unique solution for $(u,v)$, Kouloukas and Papageorgiou in \cite{kouloukas2} proposed the term \textit{strong Lax matrix} for a YB map. This is when the Lax equation is equivalent to a map
\begin{equation}\label{unique-sol}
  (u,v)=Y_{a,b}(x,y).
\end{equation}
The uniqueness of refactorisation (\ref{eqLax}) is a sufficient condition for the solutions of the Lax equation to define a reversible YB map of the form (\ref{unique-sol}). In particular, we have the following.

\begin{proposition}\label{PropVes}
(Veselov) Let $u=u(x,y)$, $v=v(x,y)$ and $L=L(x;\alpha)$ a matrix such that the refactorisation (\ref{eqLax}) is unique. Then, the map defined by (\ref{unique-sol}) satisfies the Yang-Baxter equation and it is reversible.
\end{proposition}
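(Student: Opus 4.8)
The plan is to read the refactorisation \eqref{eqLax} as a rewriting rule for products of Lax matrices and to show that the two composites appearing in \eqref{YB_eq} both transform the triple product $L(x;a)L(y;b)L(z;c)$ into the \emph{same} reordered product; the assumed uniqueness of the refactorisation then forces the two outputs to agree. Reversibility will follow from the same uniqueness by a short direct computation.

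First I would fix a triple $(x,y,z)$ with the parameters $(a,b,c)$ attached to the three slots and apply the right-hand composite $Y^{23}_{b,c}\circ Y^{13}_{a,c}\circ Y^{12}_{a,b}$ one map at a time (I suppress the spectral parameter $\lambda$). Each application produces, through \eqref{eqLax}, a two-factor identity: writing $(x',y')=Y_{a,b}(x,y)$, then $(x'',z')=Y_{a,c}(x',z)$, then $(y'',z'')=Y_{b,c}(y',z')$, these read
\[
L(x';a)L(y';b)=L(y;b)L(x;a),\quad L(x'';a)L(z';c)=L(z;c)L(x';a),\quad L(y'';b)L(z'';c)=L(z';c)L(y';b).
\]
Substituting them successively into $L(x'';a)L(y'';b)L(z'';c)$ and using only associativity collapses this product to
\[
L(x'';a)L(y'';b)L(z'';c)=L(z;c)L(y;b)L(x;a).
\]
Carrying out the mirror-image calculation for the left-hand composite $Y^{12}_{a,b}\circ Y^{13}_{a,c}\circ Y^{23}_{b,c}$ (the maps now fire in the order $Y^{23}$, then $Y^{13}$, then $Y^{12}$) yields an output triple satisfying the identical relation with the same right-hand side $L(z;c)L(y;b)L(x;a)$. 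Since by hypothesis the factorisation of a product of Lax matrices into a prescribed parameter order is unique, the two output triples coincide, which is exactly \eqref{YB_eq}.

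For reversibility I would argue straight from \eqref{eqLax}. Setting $(u,v)=Y_{a,b}(x,y)$ gives $L(u;a)L(v;b)=L(y;b)L(x;a)$, and with $Y^{21}_{a,b}:=\pi\circ Y_{b,a}\circ\pi$ I compute $Y^{21}_{a,b}\circ Y_{a,b}(x,y)=\pi\big(Y_{b,a}(v,u)\big)$. If $(p,q)=Y_{b,a}(v,u)$ then \eqref{eqLax} gives $L(p;b)L(q;a)=L(u;a)L(v;b)=L(y;b)L(x;a)$; the right-hand side is already displayed in the form $L(\cdot;b)L(\cdot;a)$ with factors $y$ and $x$, so uniqueness forces $(p,q)=(y,x)$ and hence $Y^{21}_{a,b}\circ Y_{a,b}(x,y)=\pi(y,x)=(x,y)$. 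This is the reversibility \eqref{reversible}.

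The step that carries the real weight is the last inference in the Yang--Baxter computation: passing from ``both composites give the same product of three $L$-matrices'' to ``both composites give the same triple.'' Literally \eqref{eqLax} only asserts uniqueness for a product of \emph{two} Lax matrices, so one must know that injectivity propagates to the map $(X,Y,Z)\mapsto L(X;a)L(Y;b)L(Z;c)$. In concrete models this is transparent from the explicit $\lambda$-dependence of $L$ --- matching coefficients of the powers of $\lambda$ as $\lambda\to\infty$ peels off the factors one by one --- so I would either invoke that structure or, at the level of generality of the proposition, read ``uniqueness of the refactorisation'' as including this multi-factor version, which is the natural interpretation since the whole scheme concerns reordering products of Lax matrices.
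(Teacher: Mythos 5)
The paper never actually proves this proposition --- it is quoted from Veselov's work and the text moves straight on --- so your proposal can only be measured against the standard argument, which is essentially what you have reproduced. Your reversibility argument is complete and rigorous as written: it uses nothing beyond the two-factor uniqueness hypothesis, applied to the input pair $(v,u)$ with parameters $(b,a)$. Your Yang--Baxter argument is also the classical one, and the computation is correct: both composites in \eqref{YB_eq} produce output triples whose $L$-product equals $L(z;c)L(y;b)L(x;a)$. The one genuine issue is the one you flag yourself: concluding that the two triples coincide requires injectivity of $(X,Y,Z)\mapsto L(X;a)L(Y;b)L(Z;c)$, and this does \emph{not} follow formally from uniqueness of the two-factor refactorisation \eqref{eqLax}. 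This is not a defect of your write-up so much as of the proposition as it is customarily stated; Veselov's own proof makes exactly the same move. The accepted resolutions are the two you name: either read ``unique refactorisation'' as uniqueness of factorisation into $L$-factors with prescribed parameter ordering for any number of factors, or derive multi-factor injectivity from the concrete $\lambda$-structure of the Lax matrix. The latter is easy in the setting this paper actually works in: for instance, for the matrix \eqref{laxNLS} one has $\det M(\mathbf{x};a,\lambda)=\lambda+a$, so evaluating a product at $\lambda=-a,-b,-c$ makes the corresponding factor drop rank, and its kernel/image data pin down the arguments one factor at a time --- the ``peeling'' you allude to. With either reading made explicit, your proof is correct and is, in substance, the proof given in the source the paper cites.
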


In the case where the map (\ref{unique-sol}) admits Lax representation (\ref{eqLax}), but it is not equivalent to (\ref{eqLax}), one may need to check the YB property separately.

In these lecture notes, we are interested in those YB maps whose Lax representation involves matrices with rational dependence on the spectral parameter, as the following.

\begin{example}\normalfont
In terms of Lax matrices, Adler's map (\ref{Adler_map}) has the following strong Lax representation \cite{Veselov2, Veselov3}
\begin{equation}
L(u;a,\lambda)L(v;b,\lambda)=L(y;b,\lambda)L(x;a,\lambda), \quad \text{for any $\lambda \in \mathbb{C}$,}
\end{equation}
where
\begin{equation}
L(x;a,\lambda)=
\left(\begin{matrix}
x & 1 \\
x^2-a & x
\end{matrix}\right)-\lambda \left(\begin{matrix}
0 & 0 \\
1 & 0
\end{matrix}\right).
\end{equation}
\end{example}

\subsection{Yang-Baxter maps and 3D consistent equations}
From the representation of the YB equation on the cube, as in Fig. \ref{YBmap-Eq}-(b), it is clear that the YB equation is essentially the same with the 3D consistency condition with the fields lying on the edges of the cube. Therefore, one would expect that we can derive YB maps from equations having the 3D consistency property.

The connection between YB maps and the multidimensional consistency condition for equations on quad graphs originates in the paper of Adler, Bobenko and Suris in 2003 \cite{ABS-2004}. However, a more systematic approach was presented in the paper of Papageorgiou, Tongas and Veselov \cite{PTV} a couple of years later and it is based on the symmetry analysis of equations on quad-graphs. In particular, the YB variables constitute invariants of their symmetry groups.

We present the example of the discrete potential KdV (dpKdV) equation \cite{PNC, Frank} which was considered in \cite{PTV}.

\begin{example}\normalfont
The dpKdV equation is given by
\begin{equation}\label{dpKdV}
(f_{11}-f)(f_{10}-f_{01})-a+b=0,
\end{equation}
where the fields are placed on the vertices of the square as in figure (\ref{dpKdVtoAdler}). We consider the values on the edges to be the difference of the values on the vertices, namely
\begin{equation}\label{invar}
x=f_{10}-f, \quad y=f_{11}-f_{10}, \quad u=f_{11}-f_{01} \quad \text{and} \quad v=f_{01}-f,
\end{equation}
as in figure (\ref{dpKdVtoAdler}). This choice of the variables is motivated by the fact that the dpKdV equation is invariant under the translation $f\rightarrow f+const.$ Now, the invariants (\ref{invar}) satisfy the following equation
\begin{equation}\label{eq1}
x+y=u+v.
\end{equation}
Moreover, equation (\ref{dpKdV}) can be rewritten as
\begin{equation}\label{eq2}
(x+y)(x-v)=a-b.
\end{equation}

Solving (\ref{eq1}) and (\ref{eq2}), we obtain the Adler's map (\ref{Adler_map}).
\end{example}

\begin{figure}[ht]
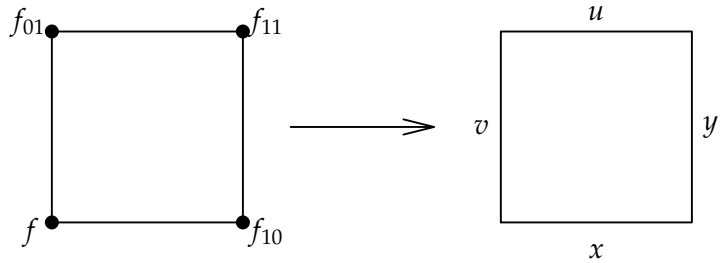

\centering
\centertexdraw{\setunitscale 0.5 \move (-1 0) \fcir f:0 r:0.075 \lvec(1 0)\fcir f:0 r:0.075 \lvec(1 2) \fcir f:0 r:0.075 \lvec (-1 2) \fcir f:0 r:0.075 \lvec (-1 0)
\textref h:C v:C \htext(-1.22 -0.1){$f$} \htext(-1.25 2.1){$f_{01}$} \htext(1.25 -0.1){$f_{10}$} \htext(1.25 2.1){$f_{11}$}
\move (1.5 1) \arrowheadtype t:V \avec(3 1)
\move (3.7 0) \lvec(5.7 0) \lvec(5.7 2) \lvec (3.7 2) \lvec (3.7 0)
\textref h:C v:C \htext(4.7 -0.3){$x$} \htext(5.9 1){$y$} \htext(4.7 2.2){$u$} \htext(3.5 1){$v$}
}
\caption{(a) dpKdV equation: fields placed on vertices (b) Adler's map: fields placed on the edges.}\label{dpKdVtoAdler}
\end{figure}

\subsection{Classification of quadrirational YB maps: The $H$-list}
All the quadrirational maps in the $F$-list presented in the first chapter satisfy the YB equation. However, in principle, their M\"obius-equivalent maps do not necessarily have the YB property, as in the following.

\begin{example}\normalfont
Consider the map $F_V$ of the $F$-list. Under the change of variables
\begin{equation}\label{Mchange}
(x,y,u,v)\rightarrow(-x,-y,u,v),
\end{equation}
it becomes
\begin{equation}
(x,y)\rightarrow(-y-\frac{a-b}{x-y},-x-\frac{a-b}{x-y}).
\end{equation}
The above map does not satisfy the YB equation.
\end{example}

In fact, all the maps of the $F$-list lose the YB property under the transformation (\ref{Mchange}).

The quadrirational maps which satisfy the YB equation were classified in \cite{PSTV}. Particularly, their classification is based on the following.

\begin{definition}
Let $\rho_\lambda:X\rightarrow X$ be a $\lambda$-parametric family of bijections. The parametric YB maps $Y_{a,b}$ and $\tilde{Y}_{a,b}$ are called equivalent, if they are related as follows
\begin{equation}\label{equivYB}
\tilde{Y}_{a,b}=\rho_{a}^{-1}\times \rho_{b}^{-1}~Y_{a,b}~\rho_a \times\rho_b.
\end{equation}
\end{definition}

\begin{remark}\normalfont
It is straightforward to show that the above equivalence relation is well defined; if $Y_{a,b}$ has the YB property, so does the map $\tilde{Y}_{a,b}$.
\end{remark}

The representative elements of the equivalence classes, with respect to the equivalence relation (\ref{equivYB}), are given by the following list.

\begin{theorem} Every quadrirational parametric YB map is equivalent (in the sense (\ref{equivYB})) to one of the maps of the F-list or one of the maps of the following list
\begin{align}
 u&=yQ^{-1}, ~~\quad v=xQ,~\quad\qquad Q=\frac{(1-b)xy+(b-a)y+b(a-1)}{(1-a)xy+(a-b)x+a(b-1)}; \tag{$H_I$}\label{HI}\\
 u&=yQ^{-1}, ~~\quad v=xQ,~\quad\qquad Q=\frac{a+(b-a)y-bxy}{b+(a-b)x-axy}; \tag{$H_{II}$}\label{HII}\\
 u&=\frac{y}{a}Q,~\quad \quad v=\frac{x}{b}Q, \quad\qquad Q=\frac{ax+by}{x+y};\tag{$H_{III}$}\label{HIII}\\
 u&=yQ^{-1} \quad\quad v=xQ,~\quad\qquad Q=\frac{axy+1}{bxy+1}\tag{$H_{IV}$};\label{HIV}\\
 u&=y-P, ~~\quad v=x+P,\qquad P=\frac{a-b}{x+y}\tag{$H_{V}$}.\label{HV}
\end{align}
\end{theorem}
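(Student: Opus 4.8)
The plan is to take the Adler--Bobenko--Suris classification of quadrirational maps (the $F$-list) as the starting point and then to single out, \emph{inside} the M\"obius orbit of each $F$-map, exactly those representatives that satisfy the Yang--Baxter equation, classified up to the finer equivalence \eqref{equivYB}. The guiding observation is the one stressed in the example preceding the statement: M\"obius equivalence does \emph{not} preserve the Yang--Baxter property, so a single M\"obius orbit of quadrirational maps contains both genuine $YB$ maps and maps violating \eqref{YB_eq}. Since a quadrirational $YB$ map is in particular quadrirational, it lies in the orbit of one of $F_I$--$F_V$ by the $F$-list theorem; and the equivalence \eqref{equivYB}, conjugation by a single $\lambda$-parametric family $\rho_\lambda$, is precisely the one under which the $YB$ property is stable (by the preceding Remark). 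Thus the substantive content reduces to determining, for each $F$-map, all maps in its M\"obius orbit satisfying \eqref{YB_eq}, taken modulo \eqref{equivYB}.

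Concretely, I would fix one normal form from the $F$-list and write a general quadrirational map in its orbit as the composition of that $F$-map with a parameter-dependent M\"obius transformation $\mu_\lambda$ acting on each factor. Substituting this ansatz into the parametric Yang--Baxter equation \eqref{YB_eq} and using that the underlying $F$-map already solves it, the full equation should collapse to a system of functional (cocycle-type) relations coupling $\mu_a$, $\mu_b$, $\mu_c$ across the three parameters. I would then use the gauge freedom in \eqref{equivYB} to absorb the part of $\mu_\lambda$ that is representable as a conjugation by some admissible $\rho_\lambda$, so that only finitely many genuinely inequivalent residual dressings remain to be tested.

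The hard part will be solving these functional relations in full generality and proving that their solution set is exhausted, up to \eqref{equivYB}, by the five $F$-maps together with the five $H$-maps $H_I$--$H_V$. This demands a careful case analysis: one must track how the parameter dependence of $\mu_\lambda$ interacts with the singular (base-point) structure of each $F$-map, discard the branches that destroy quadrirationality or the required birationality, and confirm that each surviving dressing reproduces exactly one of the listed $H$-maps. An efficient way to organize the case distinction is to work with the M\"obius-invariant combinatorial data of a quadrirational map --- the configuration of its base points and the associated pencils of rational curves through them --- which is preserved by \eqref{equivYB} and which separates the $F$-type from the $H$-type normal forms; this cuts the verification down to finitely many candidate normal forms to be checked directly against \eqref{YB_eq}, as is carried out in \cite{PSTV}.
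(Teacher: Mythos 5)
The paper does not actually prove this theorem: it is stated as a classification result imported from \cite{PSTV}, with no internal argument, so there is no in-paper proof to compare yours against. Judged on its own merits, your outline reconstructs the correct strategy, and essentially the one used in \cite{PSTV}: start from the ABS $F$-list (every quadrirational map lies in the M\"obius orbit of some $F_i$), exploit that the Yang--Baxter property is stable under the equivalence \eqref{equivYB} but not under general M\"obius changes, and then search each orbit for YB representatives, organizing the case analysis by the singularity/base-point data of the map. Your closing remark that this geometric data separates the normal forms is indeed the key idea (singularity analysis and pencils of conics) in \cite{PSTV}.

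However, as a proof your text has a genuine gap, which you acknowledge yourself: the entire substantive content of the theorem --- deriving the functional relations satisfied by the dressings $\mu_\lambda$, proving that their solution set is exhausted by finitely many residual cases, and verifying that these produce exactly $H_I$--$H_V$ and nothing else --- is described but never executed, and is ultimately delegated to \cite{PSTV}, i.e.\ to the very result being proved. There is also a technical point your ansatz glosses over: within the ABS equivalence, the M\"obius transformations relating a parametric family $Y_{a,b}$ to its $F$-list normal form may a priori depend on \emph{both} parameters $a$ and $b$, and act differently on the four variables $x,y,u,v$; before they can be absorbed into \eqref{equivYB} one must prove they can be normalized to a single one-parameter family $\rho_\lambda$ acting as in \eqref{equivYB}, compatibly across all three parameters $a,b,c$ appearing in \eqref{YB_eq}. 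That normalization is where much of the real work lies. In short: a correct plan, consistent with the reference the paper cites, but not a self-contained proof.
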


We refer to the above list as the $H$-list. Note that, the map $H_V$ is the Adler's map (\ref{Adler_map}).

\subsection{Derivation of Yang-Baxter maps from Darboux transformations}
In this section we shall show how we can use Darboux transformations of particular Lax operators to construct Yang-Baxter maps, which can then be restricted to completely integrable ones (in the Liouville sense) on invariant leaves.

\subsubsection{Invariants and integrability of Yang-Baxter maps}

\begin{proposition}\label{genInv}
If $L=L(x,a;\lambda)$ is a Lax matrix with corresponding YB map, $Y:(x,y)\mapsto (u,v)$, then the $\tr(L(y,b;\lambda)L(x,a;\lambda))$ is a generating function of invariants of the YB map.
\end{proposition}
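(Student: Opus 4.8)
The plan is to combine the Lax-representation \eqref{eqLax} with the cyclic invariance of the trace. Write $T(x,y;\lambda)\defeq\tr\left(L(y;b,\lambda)L(x;a,\lambda)\right)$ for the candidate generating function. The goal is to show that $T$ is preserved along orbits of the map, i.e.\ that $T(u,v;\lambda)=T(x,y;\lambda)$ as an identity in $\lambda$ whenever $(u,v)=Y_{a,b}(x,y)$.

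First I would apply the trace to both sides of the refactorisation \eqref{eqLax}, obtaining
\[
\tr\left(L(u;a,\lambda)L(v;b,\lambda)\right)=\tr\left(L(y;b,\lambda)L(x;a,\lambda)\right).
\]
Then I would use the cyclic property $\tr(AB)=\tr(BA)$ on the left-hand side to rewrite it as $\tr\left(L(v;b,\lambda)L(u;a,\lambda)\right)$, which is precisely $T(u,v;\lambda)$. The right-hand side is $T(x,y;\lambda)$ by definition, so the identity $T(u,v;\lambda)=T(x,y;\lambda)$ holds for every value of the spectral parameter $\lambda$.

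Finally, since this identity is valid for all $\lambda$, I would expand $T(x,y;\lambda)$ as a (Laurent) series in $\lambda$; because $L$ has rational dependence on $\lambda$, only finitely many coefficients occur. Each coefficient $I_k(x,y)$ in this expansion then inherits $I_k(u,v)=I_k(x,y)$ and is therefore an invariant of the YB map. This exhibits $T$ as a generating function of invariants, as claimed.

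I do not anticipate a serious obstacle: the argument rests entirely on the cyclicity of the trace and on the exact ordering of the factors in \eqref{eqLax}. The only point requiring care is the bookkeeping---verifying that the order of the matrices in the definition of $T$ matches the left-hand side of \eqref{eqLax} after a single cyclic rotation, so that $T$ evaluated at $(u,v)$ genuinely reproduces the trace of the refactorised product rather than a transposed or reversed version of it.
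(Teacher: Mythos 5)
Your proof is correct and is essentially identical to the paper's: both take the trace of the refactorisation \eqref{eqLax}, use cyclicity of the trace to match the factor ordering with the definition of the generating function, and then expand in powers of $\lambda$ to read off the invariants $I_k(u,v;a,b)=I_k(x,y;a,b)$. The point you flag about careful bookkeeping of the factor ordering is exactly the step the paper handles with the identity $\tr(L(y,b;\lambda)L(x,a;\lambda))=\tr(L(x,a;\lambda)L(y,b;\lambda))$.
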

\begin{proof}
Since,
\begin{equation} \label{trace}
\tr(L(u,a;\lambda)L(v,b;\lambda))\overset{(\ref{eqLax})}{=}\tr(L(y,b;\lambda)L(x,a;\lambda))=\tr(L(x,a;\lambda)L(y,b;\lambda)),
\end{equation}
and the function $\tr(L(x,a;\lambda)L(y,b;\lambda))$ can be written as $\displaystyle\tr(L(x,a;\lambda)L(y,b;\lambda))=\sum_k \lambda^k I_k(x,y;a,b)$, from (\ref{trace}) follows that
\begin{equation}
I_i(u,v;a,b)=I_i(x,y;a,b),
\end{equation}
which are invariants for $Y$.
\end{proof}

The invariants of a YB map are essential towards its integrability in the Liouville sense. Note that, the generated invariants, $I_i(x,y;a,b)$, in proposition \ref{genInv}, may not be functionally independent. In what follows, we define the complete (Liouville) integrability of a YB map, following \cite{Fordy, Veselov4}. However, for Liouville integrability, the reader is expected to have some basic knowledge of Poisson geometry \cite{Arnold, Marsden-Ratiu}.

\newtheorem{CompleteIntegrability}{Definition}[section]
\begin{CompleteIntegrability}
A $2N$-dimensional Yang-Baxter map,
\begin{equation}
Y:(x_1,\ldots,x_{2N})\mapsto (u_1,\ldots,u_{2N}), \quad u_i=u_i(x_1,\ldots,x_{2N}), \quad i=1,\ldots,2N, \nonumber
\end{equation}
is said to be completely integrable or Liouville integrable if
\begin{enumerate}
	\item there is a Poisson matrix,\index{Poisson!matrix} $J_{ij}=\left\{x_i,x_j\right\}$, of rank $2r$, which is invariant under the action of the YB map, namely $J_{ij}$ and $\tilde{J_{ij}}=\left\{u_i,u_j\right\}$ have the same functional form of their respective arguments,
	\item map $Y$ has $r$ functionally independent invariants, $I_i$, namely $I_i\circ Y=I_i$, which are in involution with respect to the corresponding Poisson bracket, i.e. $\left\{I_i,I_j\right\}=0$, $i,j=1,\ldots,r$, $i\neq j$,
	\item there are $k=2N-2r$ Casimir functions, namely functions $C_i$, $i=1,\ldots,k$, such that $\left\{C_i,f\right\}=0$, for any arbitrary function $f=f(x_1,...,x_{2N})$. These are invariant under $Y$, namely $C_i\circ Y=C_i$.
\end{enumerate}
\end{CompleteIntegrability}

\subsubsection{Example: From the NLS equation to the Adler-Yamilov YB map}
Recall that, in the case of NLS equation, the Lax operator is given by
\begin{equation}
\mathbf{L}(p,q;\lambda)=D_x+\lambda U_{1}+U_{0},\quad \text{where} \quad U_1=\sigma_3,\quad U_0=\left(\begin{matrix}
        0 & 2p \\
        2q & 0
    \end{matrix}\right),
\end{equation}
where $\sigma_3$ is the standard Pauli matrix, i.e. $\sigma_3=\text{diag}(1,-1)$.

Moreover, a Darboux matrix for $\mathbf{L}$ is given by
\begin{equation}\label{NLSDarboux}
  M=\lambda \left(
     \begin{matrix}
         1 & 0\\
         0 & 0
     \end{matrix}\right)+\left(
     \begin{matrix}
         f & p\\
         q_{10} & 1
     \end{matrix}\right).
\end{equation}
The entries of (\ref{NLSDarboux}) must satisfy the following system of equations
\begin{equation}\label{baecklundNLS}
\partial_x f=2(pq-p_{10}q_{10}), \qquad \partial_x p=2(pf-p_{10}), \qquad \partial_x q_{10}=2(q-q_{10}f),
\end{equation}
which admits the following first integral
\begin{equation}\label{integralNLS}
\partial_x(f-pq_{10})=0.
\end{equation}
This integral implies that $\partial_x \det M=0$.

In correspondence with (\ref{NLSDarboux}), we define the matrix
\begin{equation} \label{3d-Darboux-NLS}
  M(\textbf{x};\lambda)=\lambda \left(
     \begin{matrix}
         1 & 0\\
         0 & 0
     \end{matrix}\right)+\left(
     \begin{matrix}
         X & x_1\\
         x_2 & 1
     \end{matrix}\right),
\qquad \textbf{x}=(x_1,x_2,X),
\end{equation}
and substitute it into the Lax equation (\ref{eqLax})
\begin{equation}\label{laxM}
M(\textbf{u};\lambda)M(\textbf{v};\lambda)=M(\textbf{y};\lambda)M(\textbf{x};\lambda),
\end{equation}
to derive the following system of equations
\begin{eqnarray}
&v_1 = x_1,\ u_2 = y_2,\ U +V = X + Y ,\ u_2 v_1 = x_1 y_2,& \nonumber \\
&u_1 +U v_1 = y_1+x_1 Y,\ u_1 v_2+U V = x_2 y_1+X Y,\ v_2+u_2 V = x_2 + X y_2.& \nonumber
\end{eqnarray}

The corresponding algebraic variety is a union of two six-dimensional components. The first one is obvious from the refactorisation problem (\ref{laxM}), and it corresponds to the permutation map
\begin{equation}
 \textbf{x}\mapsto \textbf{u}=\textbf{y}, \quad \textbf{y}\mapsto \textbf{v}=\textbf{x},  \nonumber
\end{equation}
which is a (trivial) YB map. The second one can be represented as a rational six-dimensional non-involutive map of $K^3\times K^3 \rightarrow K^3\times K^3$
\small
\begin{subequations}\label{NLS-3d}
 \begin{align}
&x_1\mapsto u_1=\frac{y_1+x_1^2x_2-x_1X+x_1Y}{1+x_1y_2},\qquad~~ y_1\mapsto v_1=x_1, \\
&x_2\mapsto u_2=y_2,\quad\qquad\qquad\qquad\qquad \quad~~~ y_2\mapsto v_2=\frac{x_2+y_1y_2^2+y_2X-y_2Y}{1+x_1y_2},  \\
&X\mapsto U=\frac{y_1y_2-x_1x_2+X+x_1y_2Y}{1+x_1y_2}, \qquad~ Y\mapsto V=\frac{x_1x_2-y_1y_2+x_1y_2X+Y}{1+x_1y_2},
 \end{align}
\end{subequations}
\normalsize
which, one can easily check that, satisfies the YB equation.

The trace of $M(\textbf{y};\lambda)M(\textbf{x};\lambda)$ is a polynomial in $\lambda$ whose coefficients are
\begin{equation}
\mbox{tr}(M(\textbf{y};\lambda)M(\textbf{x};\lambda))=\lambda^2+\lambda I_1(\textbf{x},\textbf{y})+I_2(\textbf{x},\textbf{y}), \nonumber
\end{equation}
where
\begin{equation}\label{NLS3dInv}
I_1(\textbf{x},\textbf{y})=X+Y \qquad \text{and} \qquad I_2(\textbf{x},\textbf{y})=x_2y_1+x_1y_2+XY,
\end{equation}
and those, according to proposition \ref{genInv}, are invariants for the YB map (\ref{NLS-3d}).

In the following section we show that the YB map (\ref{NLS-3d}) can be restricted to a four-dimensional YB map which has Poisson structure.


Now, we show that map (\ref{NLS-3d}) can be restricted to the Adler-Yamilov map on symplectic leaves, by taking into account the first integral, (\ref{integralNLS}), of the system (\ref{baecklundNLS}).

In particular, we have the following.

\begin{proposition} For the six-dimensional map (\ref{NLS-3d}) we have the following:
\begin{enumerate}
	\item The quantities $\Phi =X-x_1x_2$ and $\Psi=Y-y_1y_2$ are its invariants (first integrals).
	\item It can be restricted to a four-dimensional map $Y_{a,b}:A_a\times A_b \longrightarrow A_a\times A_b$, where $A_a$, $A_b$ are level sets of the first integrals $\Phi$ and $\Psi$, namely
\begin{subequations}\label{symleaves}
\begin{align}
A_a&=\{(x_1,x_2,X)\in K^3; X=a+x_1x_2\}, \\
A_b&=\{(y_1,y_2,Y)\in K^3; Y=b+y_1y_2\}.
\end{align}
\end{subequations}
\end{enumerate}

Moreover, map $Y_{a,b}$ is the Adler-Yamilov map.
\end{proposition}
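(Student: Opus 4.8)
The plan is to prove the two assertions by direct computation, exploiting the rational structure of the map (\ref{NLS-3d}). For part (1), I would verify that $\Phi$ and $\Psi$ are genuinely preserved, i.e.\ that $U - u_1 u_2 = X - x_1 x_2$ and $V - v_1 v_2 = Y - y_1 y_2$, by substituting the explicit formulas for $u_1, u_2, U$ (respectively $v_1, v_2, V$) and simplifying. For part (2), once invariance is established, the level sets $A_a$ and $A_b$ are automatically preserved by the map, so the restriction is well defined; I would then substitute $X = a + x_1 x_2$ and $Y = b + y_1 y_2$ into the formulas for $u_1, u_2, v_1, v_2$ to read off the resulting four-dimensional map and match it against the Adler-Yamilov normal form.

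First I would carry out the invariance computation for $\Phi$. Using $u_2 = y_2$ together with the expressions for $u_1$ and $U$, the combination $U - u_1 u_2$ has common denominator $1 + x_1 y_2$, and its numerator simplifies to
\begin{equation*}
X(1 + x_1 y_2) - x_1 x_2 (1 + x_1 y_2) = (X - x_1 x_2)(1 + x_1 y_2).
\end{equation*}
The factor $1 + x_1 y_2$ cancels, giving $U - u_1 u_2 = X - x_1 x_2 = \Phi$. The computation for $\Psi$ is entirely parallel, using $v_1 = x_1$ and the expressions for $v_2$ and $V$; the numerator of $V - v_1 v_2$ factors as $(Y - y_1 y_2)(1 + x_1 y_2)$. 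This establishes part (1).

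For part (2), I would use the crucial observation that $X$ and $Y$ enter the formulas for $u_1$ and $v_2$ only through the difference $X - Y$. On $A_a \times A_b$ we have $X - Y = (a - b) + x_1 x_2 - y_1 y_2$. Substituting this into $u_1$, the terms $\pm x_1^2 x_2$ cancel and the numerator collapses to $y_1(1 + x_1 y_2) - (a-b) x_1$, whence
\begin{equation*}
u_1 = y_1 - (a - b)\frac{x_1}{1 + x_1 y_2}.
\end{equation*}
An analogous simplification of $v_2$ yields $v_2 = x_2 + (a-b)\, y_2/(1 + x_1 y_2)$, while $u_2 = y_2$ and $v_1 = x_1$ are unchanged. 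Together with the level-set relations $U = a + u_1 u_2$ and $V = b + v_1 v_2$, which hold precisely because $\Phi$ and $\Psi$ are invariant, this is the four-dimensional Adler-Yamilov map, proving the final claim.

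The computations are elementary, so the main obstacle is bookkeeping rather than conceptual: one must be careful that the cancellations producing the factor $1 + x_1 y_2$ are carried out exactly, and that the reduction in part (2) correctly exploits the $X - Y$ dependence so that no stray $X$ or $Y$ survives in the restricted map.
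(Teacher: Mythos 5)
Your proposal is correct and follows exactly the paper's route: verify $U-u_1u_2=X-x_1x_2$ and $V-v_1v_2=Y-y_1y_2$ by direct computation, then substitute $X=a+x_1x_2$, $Y=b+y_1y_2$ to eliminate $X,Y$ and recover the Adler-Yamilov map \eqref{YB_NLS}. The only difference is that you spell out the cancellations (including the useful observation that $X$ and $Y$ enter $u_1$ and $v_2$ only through $X-Y$) which the paper dismisses as ``readily verified'' and ``obvious.''
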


\begin{proof}
\begin{enumerate}
	\item It can be readily verified that (\ref{NLS-3d}) implies $U-u_1u_2=X-x_1x_2$ and $V-v_1v_2=Y-y_1y_2$. Thus, $\Phi$ and $\Psi$ are invariants, i.e.  first integrals of the map.
	\item The existence of the restriction is obvious. Using the conditions $X=x_1x_2+a$ and $Y=y_1y_2+b$, one can eliminate $X$ and $Y$ from (\ref{NLS-3d}). The resulting map, $\textbf{x}\rightarrow \textbf{u}(\textbf{x},\textbf{y})$, $\textbf{y}\rightarrow \textbf{v}(\textbf{x},\textbf{y})$, is given by
\begin{eqnarray} \label{YB_NLS}
(\textbf{x},\textbf{y})\overset{Y_{a,b}}{\longrightarrow }\left(y_1-\frac{a -b}{1+x_1y_2}x_1,y_2,x_1,x_2+\frac{a -b}{1+x_1y_2}y_2\right).
\end{eqnarray}
\end{enumerate}
Map (\ref{YB_NLS}) coincides with the Adler-Yamilov map.
\end{proof}

Map (\ref{YB_NLS}) originally appeared in the work of Adler and Yamilov \cite{Adler-Yamilov}. Moreover, it appears as a YB map in \cite{kouloukas, PT}.

Now, one can use the condition $X=x_1x_2+a$ to eliminate $X$ from the Lax matrix\index{Lax matrix(-ces)} (\ref{3d-Darboux-NLS}), i.e.
\begin{equation} \label{laxNLS}
M(\textbf{x};a,\lambda)=\lambda \left(
\begin{matrix}
 1 & 0\\
 0 & 0
\end{matrix}\right)+\left(
\begin{matrix}
 a+x_1x_2 & x_1\\
 x_2 & 1
\end{matrix}\right), \quad \textbf{x}=(x_1,x_2).
\end{equation}
The form of Lax matrix (\ref{laxNLS}) coincides with the well known Darboux transformation for the NLS equation (see \cite{Rog-Schief} and references therein). Now, Adler-Yamilov map follows from the strong Lax representation
\begin{equation} \label{lax_eq_NLS}
  M(\textbf{u};a,\lambda)M(\textbf{v};b,\lambda)=M(\textbf{y};b,\lambda)M(\textbf{x};a,\lambda).
\end{equation}
Therefore, the Adler-Yamilov map \eqref{YB_NLS} is a reversible parametric YB map with strong Lax matrix ($\ref{laxNLS}$). Moreover, it is easy to verify that it is not involutive.

For the integrability of this map we have the following

\begin{proposition}\label{cintA-Y}
The Adler-Yamilov map \eqref{YB_NLS} is completely integrable.
\end{proposition}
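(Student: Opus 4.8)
The goal is to verify the three conditions in the definition of complete (Liouville) integrability for the four-dimensional map $Y_{a,b}$ of \eqref{YB_NLS}, for which $2N=4$. My strategy is to exhibit an \emph{invariant symplectic} Poisson structure (of maximal rank $2r=4$, so $r=2$ and $k=2N-2r=0$) together with two functionally independent invariants in involution. First I would equip $K^2\times K^2$, with coordinates $\mathbf{x}=(x_1,x_2)$ and $\mathbf{y}=(y_1,y_2)$, with the canonical Poisson bracket determined by $\{x_1,x_2\}=\{y_1,y_2\}=1$ and $\{x_i,y_j\}=0$. The corresponding Poisson matrix $J_{ij}=\{x_i,x_j\}$ is block diagonal with two standard symplectic $2\times2$ blocks, hence of full rank $4$; thus $2r=4$, $r=2$, and since $k=2N-2r=0$ there are no Casimir functions to produce, so condition (3) of the definition is vacuous.

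\textbf{Invariance of the Poisson structure.} The second step is to show that $Y_{a,b}$ is a Poisson map, i.e.\ that $J$ is invariant. Writing $(u_1,u_2,v_1,v_2)$ for the image under \eqref{YB_NLS} and setting $c=a-b$, $D=1+x_1y_2$, the relations $v_1=x_1$ and $u_2=y_2$ collapse most terms in the chain rule. A direct differentiation of the rational expressions $u_1=y_1-\tfrac{c}{D}x_1$ and $v_2=x_2+\tfrac{c}{D}y_2$ (using $\partial_{x_1}(c/D)$, $\partial_{y_2}(c/D)$, etc.) then yields $\{u_1,u_2\}=1$, $\{v_1,v_2\}=1$, and $\{u_i,v_j\}=0$ for all $i,j$, which coincide in functional form with the original brackets. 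This establishes condition (1).

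\textbf{Invariants and involutivity.} For the invariants I would invoke Proposition~\ref{genInv} applied to the strong Lax representation \eqref{lax_eq_NLS} with the reduced Darboux/Lax matrix \eqref{laxNLS}; equivalently, one restricts the invariants \eqref{NLS3dInv} to the leaves \eqref{symleaves} via $X=a+x_1x_2$, $Y=b+y_1y_2$. Expanding $\tr\!\big(M(\mathbf{y};b,\lambda)M(\mathbf{x};a,\lambda)\big)=\lambda^2+\lambda I_1+I_2$ gives
\[
I_1=a+b+x_1x_2+y_1y_2,\qquad I_2=(a+x_1x_2)(b+y_1y_2)+x_1y_2+x_2y_1+1 .
\]
These are invariants of $Y_{a,b}$ by Proposition~\ref{genInv}. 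Their gradients are generically linearly independent (for instance at $x_1=y_2=1$, $x_2=y_1=0$ one gets $(0,1,1,0)$ and $(1,b,a,1)$), so they are functionally independent, and a direct bracket computation — in which all monomials cancel pairwise — gives $\{I_1,I_2\}=0$. This is exactly condition (2), with $r=2$ independent invariants in involution.

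\textbf{Main obstacle.} The substance of the argument is the two explicit verifications, namely that $Y_{a,b}$ preserves the canonical bracket and that $\{I_1,I_2\}=0$. Both are elementary but somewhat lengthy differentiations of the rational map \eqref{YB_NLS}; the only genuine insight is identifying the correct (canonical symplectic) Poisson structure on $K^2\times K^2$, after which each check is mechanical. Once the three conditions are confirmed, the definition of Liouville integrability applies and the Adler--Yamilov map \eqref{YB_NLS} is completely integrable.
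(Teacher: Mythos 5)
Your proposal is correct and follows essentially the same route as the paper's own proof: the canonical bracket $\{x_1,x_2\}=\{y_1,y_2\}=1$, $\{x_i,y_j\}=0$, the invariants $I_1,I_2$ read off from $\tr\left(M(\mathbf{y};b,\lambda)M(\mathbf{x};a,\lambda)\right)$ via Proposition~\ref{genInv}, and the verification that $Y_{a,b}$ is a Poisson map with $\{I_1,I_2\}=0$. The only difference is that you explicitly check functional independence of $I_1,I_2$ and note that the full rank of the Poisson matrix makes the Casimir condition vacuous, details the paper leaves implicit.
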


\begin{proof}
The $4\times 4$ Poisson matrix associated to the following Poisson bracket
\begin{equation}\label{Pbracket}
\left\{x_1,x_2\right\}=\left\{y_1,y_2\right\}=1,\qquad \text{and all the rest}
\qquad \left\{x_i,y_j\right\}=0,
\end{equation}
is invariant under the YB map \eqref{YB_NLS}, namely the latter is a Poisson map with respect to the Poisson bracket $\pi=\frac{\partial}{\partial x_1}\wedge \frac{\partial}{\partial x_2}+\frac{\partial}{\partial y_1}\wedge \frac{\partial}{\partial y_2}$.
Now, from the trace of $M(\textbf{y};b,\lambda)M(\textbf{x};a,\lambda)$ we obtain the following invariants for the map \eqref{YB_NLS}
\begin{eqnarray}
&&I_1(\textbf{x},\textbf{y})=x_1 x_2+y_1 y_2+a+b, \\
&&I_2(\textbf{x},\textbf{y})=(a+x_1 x_2)(b+y_1 y_2)+x_1 y_2+x_2 y_1+1.
\end{eqnarray}
It is easy to check that $I_1,I_2$ are in involution with respect to \eqref{Pbracket}, namely $\left\{I_1,I_2\right\}=0$. Therefore the map \eqref{YB_NLS} is completely integrable.
\end{proof}

The above proposition implies the following.

\begin{corollary}
The invariant leaves $A_a$ and $B_b$, given in (\ref{symleaves}), are symplectic.
\end{corollary}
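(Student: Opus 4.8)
The plan is to realize $A_a$ and $A_b$ as the symplectic leaves of the ambient Poisson structure on $K^3\times K^3$, and then to exhibit a non-degenerate two-form on each leaf by inverting the restricted Poisson bivector; producing such a form is exactly what it means for the leaf to be symplectic. The guiding observation is that $A_a$ and $A_b$ are the level sets of the first integrals $\Phi=X-x_1x_2$ and $\Psi=Y-y_1y_2$ found in the preceding proposition, so they should be precisely the level sets of Casimir functions, i.e.\ the symplectic leaves.

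First I would equip a single copy of $K^3$, with coordinates $(x_1,x_2,X)$, with a Poisson bracket that is compatible with the bracket $\{x_1,x_2\}=1$ appearing in the proof of Proposition~\ref{cintA-Y} and for which $\Phi$ is a Casimir. Demanding $\{x_1,\Phi\}=\{x_2,\Phi\}=0$ forces $\{x_1,X\}=x_1$ and $\{x_2,X\}=-x_2$, after which a direct computation gives $\{X,\Phi\}=0$, so $\Phi$ is indeed a Casimir; the Jacobi identity for this three-dimensional bracket is a short verification. The associated $3\times 3$ Poisson matrix has rank $2$, with one-dimensional kernel spanned by $\mathrm{d}\Phi$, so its symplectic leaves are exactly the hypersurfaces $\{\Phi=a\}$, which are the sets $A_a$ in \eqref{symleaves}. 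The identical argument applied to $(y_1,y_2,Y)$ and $\Psi$ produces the leaves $A_b$.

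It then remains only to check non-degeneracy on a single leaf. On $A_a$ the defining relation $X=a+x_1x_2$ lets me use $(x_1,x_2)$ as global coordinates, and the bracket restricts to $\{x_1,x_2\}=1$, which is non-degenerate; inverting it yields the closed, non-degenerate two-form $\omega_a=\mathrm{d}x_1\wedge \mathrm{d}x_2$, so $A_a$ is symplectic, and likewise $\omega_b=\mathrm{d}y_1\wedge \mathrm{d}y_2$ on $A_b$. Equivalently, this can be read straight off Proposition~\ref{cintA-Y}: the $4\times 4$ Poisson matrix of $\{x_1,x_2\}=\{y_1,y_2\}=1$ is block-diagonal of full rank, hence invertible, and its inverse is the symplectic form $\mathrm{d}x_1\wedge \mathrm{d}x_2+\mathrm{d}y_1\wedge \mathrm{d}y_2$ on $A_a\times A_b$, whose two factors are precisely the leaves in question.

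The only step that is not purely mechanical is the first one: confirming that $\Phi$ and $\Psi$ are genuine Casimirs of the three-dimensional brackets, so that the level sets \eqref{symleaves} really are symplectic leaves rather than merely invariant subsets of the map \eqref{NLS-3d}. This reduces to the bracket relations computed above together with the Jacobi identity for the extended structure; once it is in hand, the identification of leaves with level sets and the non-degeneracy of the restricted form are immediate.
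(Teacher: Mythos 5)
Your proposal is correct, and its computations check out: with $\{x_1,x_2\}=1$ fixed, the requirements $\{x_1,\Phi\}=\{x_2,\Phi\}=0$ do force $\{x_1,X\}=x_1$, $\{x_2,X\}=-x_2$, the Jacobi identity holds, $\Phi$ is then a Casimir of the resulting rank-$2$ bracket on $K^3$, and the induced structure on each level set $\{\Phi=a\}$ is the non-degenerate form $\mathrm{d}x_1\wedge\mathrm{d}x_2$. However, your main route is more elaborate than what the paper does. The paper offers no separate argument at all: the corollary is stated as an immediate consequence of Proposition~\ref{cintA-Y}, i.e.\ precisely the observation you relegate to your ``equivalently'' remark --- the bracket \eqref{Pbracket} used there is invariant under the map, has full rank on the four-dimensional space $A_a\times A_b$ (coordinates $(x_1,x_2,y_1,y_2)$ after eliminating $X$ and $Y$), and restricts to $\mathrm{d}x_1\wedge\mathrm{d}x_2$ on $A_a$ and $\mathrm{d}y_1\wedge\mathrm{d}y_2$ on $A_b$. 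What your longer route buys is a genuine justification of the word ``leaves'': by extending the bracket to the ambient $K^3\times K^3$ where the six-dimensional map \eqref{NLS-3d} lives, with $\Phi$ and $\Psi$ as Casimirs, you exhibit the sets \eqref{symleaves} as honest symplectic leaves of a Poisson structure rather than merely as invariant level sets that happen to carry a symplectic form; this also makes the restriction in the preceding proposition a restriction to symplectic leaves in the standard Poisson-geometric sense. The paper's implicit argument is shorter but leaves that interpretation unstated.
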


\subsection{Yang-Baxter maps for NLS type equations. Noncommutative extensions}
In the previous section we showed how one can use Darboux transformations to construct Yang-Baxter maps which can restrict to completely integrable ones on invariant leaves. In particular, using a Darboux transformation for the NLS equation, namely matrix \eqref{NLSDarboux}, we constructed the six-dimensional Yang-Baxter map \eqref{NLS-3d} which was restricted to the completely integrable Adler-Yamilov map \eqref{YB_NLS} on the invariant (symplectic) leaves \eqref{symleaves}. In \cite{sokor, Sokor-Sasha}, all Yang-Baxter maps related to the cases \eqref{Lax-NLS}, \eqref{Lax-DNLS} and \eqref{Lax-dDNLS}, as well as their integrability, were studied, using the associated Darboux transformations.

Motivated by some results on noncommutative extensions (in a Grassmann setting) of Darboux transformations and their use in the construction of nonocommutative discrete integrable systems \cite{Georgi}, in \cite{GKM, Sokor-Sasha-2}, the first steps towards extending the theory of Yang-Baxter maps were made.

\section*{Acknowledgement}
We would like to thank the organizers of the summer school Abecedarian of SIDE 12 for the opportunity to participate as lecturers, as well as for the financial support. 
\bibliographystyle{plain}
\bibliography{refs}

\end{document}